\def\Title{Quantum Set Theory Extending the Standard Probabilistic Interpretation of Quantum Theory}
\def\Author{Masanao Ozawa}
\def\Instituteii{Nagoya University, Chikusa-ku, Nagoya,  464-8601, Japan}
\def\Email{ozawa@is.nagoya-u.ac.jp}
\def\Abstract{
The notion of equality between two observables will play many important roles in foundations of quantum theory.
However, the standard probabilistic interpretation based on the conventional 
Born formula does not give the probability of equality between two arbitrary observables, 
since the Born formula gives the probability distribution only for a commuting family of observables.
In this paper,  quantum set theory developed by Takeuti and the present author
is used to systematically extend the standard probabilistic interpretation 
of quantum theory to define the probability of equality between two arbitrary observables in an arbitrary state.
We apply this new interpretation to quantum measurement theory, 
and establish a logical basis for the difference between simultaneous measurability 
and simultaneous determinateness.}
\def\Keywords{quantum logic, quantum set theory, quantum theory, quantum measurements,
von Neumann algebras}
\newcommand{\af}{\,\et\,}
\newcommand{\al}{\alpha}
\newcommand{\be}{\beta}
\newcommand{\beq}{\begin{equation}}
\newcommand{\beqa}{\begin{eqnarray}}
\newcommand{\beqas}{\begin{eqnarray*}}
\newcommand{\beql}[1]{\begin{equation}\label{eq:#1}}
\newcommand{\bitem}{\begin{enumerate}[{\rm (i)}]\itemsep=0in \parskip=2pt}
\renewcommand{\L}{{\bf L}}
\newcommand{\bM}{{\bf M}}
\newcommand{\bP}{{\bf P}}
\newcommand{\bL}{{\bf L}}
\newcommand{\bra}[1]{\left\langle#1\right|}
\newcommand{\bracket}[1]{\left\langle#1\right\rangle}
\newcommand{\bS}{{\bf S}}
\newcommand{\bx}{{\bf x}}
\newcommand{\cA}{{\cal A}}
\newcommand{\cB}{{\cal B}}
\newcommand{\cC}{{\cal C}}
\newcommand{\cD}{\rm{dom}}
\newcommand{\cF}{{\cal F}}
\newcommand{\cH}{{\cal H}}
\newcommand{\ch}{\chi}
\newcommand{\cI}{{\cal I}}
\newcommand{\cK}{{\cal K}}
\newcommand{\cL}{{\cal L}}
\newcommand{\cM}{{\cal M}}
\newcommand{\cm}{{\rm com}_{o}}
\newcommand{\cO}{{\cal O}}
\newcommand{\com}{{\rm com}}
\newcommand{\commutes}{\,\rotatebox[origin=c]{270}{$\multimap$}\,}
\newcommand{\cut}[1]{}
\newcommand{\cP}{{\cal P}}
\newcommand{\cQ}{{\cal Q}}
\newcommand{\cR}{{\cal R}}
\newcommand{\cS}{{\cal S}}
\newcommand{\cuniv}{{\rm com}}
\newcommand{\cZ}{{\cal Z}}
\newcommand{\da}{\dagger}
\newcommand{\de}{\delta}
\newcommand{\De}{\Delta}
\newcommand{\dom}{{\rm dom}}
\newcommand{\eenum}{\end{enumerate}}
\newcommand{\eeq}{\end{equation}}
\newcommand{\eeqa}{\end{eqnarray}}
\newcommand{\eeqas}{\end{eqnarray*}}
\newcommand{\eitem}{\end{enumerate}}
\newcommand{\Eq}[1]{Eq.~(\ref{eq:#1})}
\newcommand{\eq}[1]{(\ref{eq:#1})}
\newcommand{\et}{\eta}
\newcommand{\Ga}{\Gamma}
\newcommand{\hu}{\hat{u}}
\newcommand{\id}{{\rm id}}
\newcommand{\Iff}{\leftrightarrow}
\newcommand{\IFF}{\Leftrightarrow}
\newcommand{\Inf}{\bigwedge}
\newcommand{\ignore}[1]{}
\newcommand{\ket}[1]{\left|#1\right\rangle}
\newcommand{\ketbra}[1]{\ket{#1}\!\!\bra{#1}}
\newcommand{\la}{\lambda}
\newcommand{\leo}{\le_{o}}
\newcommand{\leoo}{\le_{o'}}
\newcommand{\Lo}{\cL_{o}}
\newcommand{\Loo}{\cL_{o'}}
\newcommand{\M}{\bM}
\newcommand{\mb}{\mbox}
\newcommand{\nn}{\nonumber}
\newcommand{\Not}{\neg}
\newcommand{\om}{\omega}
\newcommand{\Om}{\Omega}
\newcommand{\Or}{\vee}
\newcommand{\p}{{}^{\perp}}
\newcommand{\ph}{\phi}
\newcommand{\Ph}{\Phi}
\newcommand{\ps}{\psi}
\newcommand{\Q}{{\bf Q}}
\newcommand{\R}{{\bf R}}
\newcommand{\ran}{\mbox{\rm ran}}
\newcommand{\rank}{\mbox{\rm rank}}
\newcommand{\rh}{\rho}
\newcommand{\RQ}{\R^{(\cQ)}}
\newcommand{\si}{\sigma}
\newcommand{\Sp}{{\rm Sp}}
\newcommand{\Sup}{\bigvee}
\newcommand{\tA}{\tilde{A}}
\newcommand{\tB}{\tilde{B}}
\newcommand{\Then}{\rightarrow}
\newcommand{\THEN}{\Rightarrow}
\newcommand{\Tr}{\mbox{\rm Tr}}
\newcommand{\tX}{\tilde{X}}
\newcommand{\tY}{\tilde{Y}}
\newcommand{\V}{V}
\newcommand{\val}[1]{[\![#1]\!]}
\newcommand{\valo}[1]{[\![#1]\!]_{o}}
\newcommand{\valoo}[1]{[\![#1]\!]_{o'}}
\newcommand{\VL}{\V^{(\cQ)}}
\newcommand{\VQ}{\V^{(\cQ)}}
\newcommand{\vval}[1]{[\![#1]\!]_{\cQ}}
\renewcommand{\And}{\wedge}
\renewcommand{\inf}{\bigwedge}
\renewcommand{\subset}{\subseteq}
\renewcommand{\sup}{\bigvee}
\newtheorem{theorem}{Theorem}[section]
\newtheorem{proposition}[theorem]{Proposition}
\newtheorem{lemma}[theorem]{Lemma}
\newcommand{\Cite}[1]{Ref.~\cite{#1}}
\newcommand{\deqs}[1]{ \begin{align*}#1\end{align*}}
  \title{ {\bf \Title}\thanks{An extended abstract of this paper was presented in
the 11th International Workshop on Quantum Physics and Logic (QPL 2014), 
Kyoto University, June 4--6, 2014 and appeared as Ref.~\cite{14QST}.}}
  \author{\sc \Author \\ \it\small \Instituteii\\ \tt\small \Email}
  \date{}
\begin{document}

\maketitle
  \thispagestyle{myheadings}
    \hspace{10\baselineskip}
\begin{abstract}
\Abstract 

{\bf  Keywords: }\Keywords
\end{abstract}
\section{Introduction}
\label{se:1}
Set theory provides foundations of mathematics;
all the mathematical notions like numbers, functions, relations, and structures
are defined in the axiomatic set theory, 
ZFC  (Zermelo-Fraenkel set theory with the axiom of choice), and 
all the mathematical theorems are required to be provable in ZFC\citation{TZ71}.
Quantum set theory, instituted by Takeuti \cite{Ta81} and developed by the
present author \cite{07TPQ}, naturally extends the logical basis of set theory 
from classical logic to quantum logic to explore mathematics based on quantum
logic.

Despite remarkable success in axiomatic foundations of quantum mechanics \cite{Var85,Hol95},
the quantum logic approach to quantum foundations has not been 
considered powerful enough to solve interpretational problems \cite{Red87,Gib87}.
However, this weakness is considered to be mainly due to the fact that the 
conventional study of quantum logic has been limited to propositional logic.
Since quantum set theory extends the underlying logic from propositional logic to 
predicate logic, and provides set theoretical constructions of mathematical objects 
such as numbers, functions, relations, and structures based on quantum logic,  
we can expect that quantum set theory will provide much more systematic 
interpretation of quantum theory than the conventional quantum logic approach.
This paper represents the first step towards establishing systematic interpretation of
quantum theory based on quantum set theory, and naturally focusses on the most
fundamental notion in mathematics, namely, equality.

The notion of equality between quantum observables will play many important roles 
in foundations of quantum theory, in particular, in the theory of measurement and 
disturbance \cite{04URN,06QPC}.
However, the standard probabilistic interpretation based on the conventional Born formula
does not give the probability of equality between two arbitrary observables, 
since the Born formula gives the probability distribution only for a commuting family 
of observables \cite{vN55}.
In this paper,  quantum set theory is used to systematically extend the probabilistic 
interpretation of quantum theory to define the probability of equality between
two arbitrary observables in an arbitrary state based on the fact that real numbers defined 
in quantum set theory exactly corresponds to quantum observables \cite{Ta81,07TPQ}.  
It is shown that every observational proposition on a quantum system corresponds
to a statement in quantum set theory with the same projection-valued truth value
and the same probability in any state.
In particular, equality between real numbers in quantum set theory naturally 
provides a state-dependent notion of equality between quantum mechanical observables.
It has been broadly accepted that we cannot speak of the values of quantum observables 
without assuming a hidden variable theory, which are severely constrained by 
Kochen-Specker type no-go theorems \cite{KS67,Red87}. 
However, quantum set theory enables us to do so without assuming
hidden variables but alternatively with the consistent use of quantum logic. 
We apply this new interpretation to quantum measurement theory, 
and establish a logical basis for the difference between simultaneous measurability 
and simultaneous determinateness.

Section 2 provides preliminaries on complete orthomodular lattices,
commutators of their subsets, quantum logic on Hilbert spaces, and
the universe $\VQ$ of quantum set theory over a logic $\cQ$
on a Hilbert space $\cH$.
We give a characterization of the commutator of a subset
of a complete orthomodular lattice, improving Takeuti's characterization,
and give a factorization of the double commutant of a subset of a complete
orthomodular lattice into the maximal Boolean factor and a complete orthomodular
lattice without non-trivial Boolean factor.
Section 3 introduces a one-to-one correspondence obtained in Refs.~\cite{Ta81,07TPQ}
between the reals $\RQ$ in $\VQ$ and self-adjoint operators affiliated 
with the von Neumann algebra $\cM=\cQ''$ generated by $\cQ$, determines 
commutators and equality in $\RQ$, and gives the embedding of intervals in $\R$ into $\VQ$.
Section 4 formulates the standard probabilistic interpretation of quantum theory and 
also shows that the set of observational propositions for a quantum system can be embedded 
in a set of statements in quantum set theory without changing projection-valued truth value
assignment.  
Section 5 extends the standard interpretation by introducing simultaneous determinateness,
i.e., state-dependent commutativity of observables.  
We give several characterizations of simultaneous determinateness 
for finite number of quantum 
observables affiliated with an arbitrary von Neumann algebra in a given state, extending
some previous results  \cite{06QPC} on simultaneous determinateness for two observables.
Section 6 extends the standard interpretation by introducing quantum equality, i.e., 
state-dependent equality for two arbitrary observables. 
We give several characterizations of quantum equality for two
observables affiliated with an arbitrary von Neumann algebra in a given state, extending
some previous results  \cite{06QPC} on simultaneous determinateness for two observables.
 Sections 7 and 8 provide applications
to quantum measurement theory.  We discuss a state-dependent formulation of measurement 
of observables and simultaneous measurability, 
and establish a logical basis for the difference between simultaneous 
measurability and simultaneous determinateness.  The conclusion is given in Section 9.

Whereas we will discuss the completely general case where $\cM$ is an arbitrary von Neumann algebra, 
some results for the case where $\dim(\cH)<\infty$ and $\cM=\cB(\cH)$ have been
previously reported in \Cite{11QRM}.
In this special case, we can avoid the use of quantum set theory to introduce  simultaneous determinateness 
and quantum equality into the language of observational propositions, since simultaneous determinateness
and quantum equality can be expressed, respectively, by observational propositions 
constructed by atomic formulas of the form $X=x$ with an observable $X$ and a real number $x$.
However, to prove a transfer theorem ensuring that all the classical tautologies have the truth value 1, 
mentioned without proof in \Cite{11QRM}, Theorem 3, it is necessary, even in this special case, 
to develop quantum set theory and to define the embedding of the language of observational propositions 
into the language of quantum set theory.  The required machinery will be, for the first time, fully constructed 
in this paper including the case with observables with continuous spectrum, though the full power of this 
machinery will be revealed when applied to mathematical theorems beyond tautologies
after we have enriched the language of observational propositions, in the future research, 
with more sophisticated relations and functions than equality.

\section{Quantum set theory}
\subsection{Quantum logic}

A {\em complete orthomodular lattice}  is a complete
lattice $\cQ$ with an {\em orthocomplementation},
a unary operation $\perp$ on $\cQ$ satisfying
\bitem
\item[\rm (C1)]  if $P \le Q$ then $Q^{\perp}\le P^{\perp}$,
\item[\rm (C2)] $P^{\perp\perp}=P$,
\item[\rm  (C3)] $P\Or P^{\perp}=1$ and $P\And P^{\perp}=0$,
where $0=\Inf\cQ$ and $1=\Sup\cQ$,
\eitem
that  satisfies the {\em orthomodular law}
\bitem
\item[\rm (OM)] if $P\le Q$ then $P\Or(P^{\perp}\And Q)=Q$.
\eitem
In this paper, any complete orthomodular lattice is called a {\em logic}.
A non-empty subset of a logic $\cQ$ is called a {\em subalgebra} iff
it is closed under $\And $, $\Or$, and $\perp$.
A subalgebra $\cA$ of $\cQ$ is said to be {\em complete} iff it has
the supremum and the infimum in $\cQ$ of an arbitrary subset of $\cA$.
For any subset $\cA$ of $\cQ$, 
the subalgebra generated by $\cA$ is denoted by
$\Ga_0\cA$.
We refer the reader to Kalmbach \cite{Kal83} for a standard text on
orthomodular lattices.

We say that $P$ and $Q$ in a logic $\cQ$
{\em commute}, in  symbols
$P\commutes Q$, iff  $P=(P\And Q)\Or(P\And
Q^{\perp})$. All the relations $P\commutes Q$,
$Q\commutes P$,
$P^{\perp}\commutes Q$, $P\commutes Q^{\perp}$,
and $P^{\perp}\commutes Q^{\perp}$ are equivalent.
The distributive law does not hold in general, but
the following useful propositions hold (\Cite{Kal83}, pp.~24--25).

\begin{proposition}\label{th:distributivity}
If $P_1,P_2\commutes
Q$, then the sublattice generated by $P_1,P_2,Q$ is
distributive.
\end{proposition}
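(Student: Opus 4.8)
The plan is to reduce the statement to the elementary fact that any lattice generated by two elements is distributive, by exploiting that $Q$ commutes with everything in the relevant sublattice and so behaves like a central element there.

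The one genuinely lattice-theoretic ingredient is a restricted distributive law forced by orthomodularity: \emph{if $d_1\le a$ and $d_2\le a\p$, then $a\And(d_1\Or d_2)=d_1$} (and symmetrically $a\p\And(d_1\Or d_2)=d_2$). To prove it, set $g=a\And(d_1\Or d_2)$; since $d_1\le a$ and $d_1\le d_1\Or d_2$ we have $d_1\le g$, so (OM) gives $g=d_1\Or(d_1\p\And g)$. Now $d_1\p\And g\le d_1\p\And a\le d_1\p\And d_2\p=(d_1\Or d_2)\p$ (using $a\le d_2\p$, which follows from $d_2\le a\p$), while also $d_1\p\And g\le g\le d_1\Or d_2$; hence $d_1\p\And g=0$ and $g=d_1$.

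With this in hand I would first check that the commutant $C(Q)=\{x\in\cQ\mid x\commutes Q\}$ is a subalgebra of $\cQ$; since it contains $P_1,P_2,Q$, the sublattice $L$ they generate lies inside $C(Q)$. Closure under $\p$ is one of the stated equivalences. For closure under $\Or$, given $x,y\in C(Q)$ write $x=(Q\And x)\Or(Q\p\And x)$ and likewise for $y$, so that $x\Or y=d_1\Or d_2$ with $d_1:=(Q\And x)\Or(Q\And y)\le Q$ and $d_2:=(Q\p\And x)\Or(Q\p\And y)\le Q\p$; the restricted distributive law then yields $Q\And(x\Or y)=d_1$ and $Q\p\And(x\Or y)=d_2$, whence $(Q\And(x\Or y))\Or(Q\p\And(x\Or y))=x\Or y$, i.e.\ $x\Or y\in C(Q)$; closure under $\And$ follows by de Morgan. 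The same computation shows that $\ph\colon C(Q)\to[0,Q]\times[0,Q\p]$, where $[0,P]:=\{x\in\cQ\mid x\le P\}$ is regarded as a sublattice of $\cQ$, defined by $\ph(x)=(Q\And x,\ Q\p\And x)$, is a lattice homomorphism, and it is injective since $x=(Q\And x)\Or(Q\p\And x)$ for $x\in C(Q)$.

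Finally, $\ph(L)$ is a sublattice of $[0,Q]\times[0,Q\p]$, and passing to the two coordinates (each coordinate of an element of $\ph(L)$ is the corresponding lattice word in the generators' coordinates) exhibits $\ph(L)\subset L_1\times L_2$, where $L_1\subset[0,Q]$ is generated by $Q\And P_1$, $Q\And P_2$ and the top $Q$, and $L_2\subset[0,Q\p]$ is generated by $Q\p\And P_1$, $Q\p\And P_2$ and the bottom $0$. A lattice generated by two elements is distributive (its universe is $\{u,v,u\And v,u\Or v\}$), and adjoining a new top, resp.\ bottom, element preserves distributivity; hence $L_1$ and $L_2$, and therefore $L_1\times L_2$ and its sublattice $\ph(L)\cong L$, are distributive. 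The only real obstacle is the restricted distributive law above — the single point where orthomodularity rather than mere orthocomplementation enters — together with the observation that $Q$ may be split off as a central element of $C(Q)$; everything else is routine verification.
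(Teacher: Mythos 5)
Your proof is correct, and there is nothing in the paper to compare it against line by line: the paper does not prove this proposition but quotes it as a known result (the Foulis--Holland theorem) from \Cite{Kal83}, pp.~24--25. Your argument is a legitimate self-contained proof and takes the structural route: the restricted distributive law you derive from (OM) (if $d_1\le a$ and $d_2\le a\p$ then $a\And(d_1\Or d_2)=d_1$) is exactly what is needed to show that $C(Q)=\{x\in\cQ\mid x\commutes Q\}$ is closed under the lattice operations and that $x\mapsto(Q\And x,\,Q\p\And x)$ is an injective lattice homomorphism of $C(Q)$ into $[0,Q]\times[0,Q\p]$ (in fact an isomorphism onto it, since $(d_1,d_2)\mapsto d_1\Or d_2$ inverts it, $Q$ acting as a central element of $C(Q)$); the sublattice generated by $P_1,P_2,Q$ then embeds into a product of two lattices, each generated by two elements together with a top, respectively a bottom, hence distributive, and distributivity passes to sublattices. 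The more common textbook proofs instead verify the distributive identities directly from the commutativity relation, in the spirit of Proposition~\ref{th:logic}: one first proves $Q\And(P_1\Or P_2)=(Q\And P_1)\Or(Q\And P_2)$ and its relatives, and then checks the distributive law for all triples of lattice words; your embedding argument buys every instance of distributivity in the generated sublattice in one stroke, at the cost of the (easy) verification that the decomposition map is a homomorphism. One phrase to read charitably: ``adjoining a new top, resp.\ bottom, element preserves distributivity'' should be understood as also covering the degenerate case where $Q$ (resp.\ $0$) already coincides with one of the four lattice words in the two generators, in which case the generated sublattice is even smaller and distributivity is immediate; with that reading the argument is complete.
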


\begin{proposition}\label{th:logic}
If $P_{\al}\commutes Q$ for all $\al$, then
$\Sup_{\al}P_{\al}\commutes Q$, 
$\Inf_{\al}P_{\al}\commutes Q$,
$Q \And (\Sup_{\al}P_{\al})=\Sup_{\al}(Q\And
P_{\al})$,
and 
$Q \Or (\Inf_{\al}P_{\al})=\Inf_{\al}(Q\Or
P_{\al})$,
\end{proposition}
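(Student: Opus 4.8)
The plan is to deduce all four assertions from Proposition~\ref{th:distributivity} together with the de~Morgan laws for the orthocomplementation, handling the statements about suprema first and obtaining the ones about infima by duality.

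First I would fix $P=\Sup_{\al}P_{\al}$ and check $P\commutes Q$. Since $P_{\al}\commutes Q$ we have $P_{\al}=(P_{\al}\And Q)\Or(P_{\al}\And Q^{\perp})$, and by monotonicity $P_{\al}\And Q\le P\And Q$ and $P_{\al}\And Q^{\perp}\le P\And Q^{\perp}$, hence $P_{\al}\le(P\And Q)\Or(P\And Q^{\perp})\le P$ for every $\al$. Taking the supremum over $\al$ forces $P=(P\And Q)\Or(P\And Q^{\perp})$, i.e.\ $P\commutes Q$; nothing beyond the lattice operations is used here. Along the way the same inequalities give the decomposition $P=A\Or B$, where $A:=\Sup_{\al}(Q\And P_{\al})\le Q$ and $B:=\Sup_{\al}(Q^{\perp}\And P_{\al})\le Q^{\perp}$: indeed $A,B\le P$ since each joinand is $\le P_{\al}$, while conversely $P_{\al}=(P_{\al}\And Q)\Or(P_{\al}\And Q^{\perp})\le A\Or B$.

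Next I would prove $Q\And(\Sup_{\al}P_{\al})=\Sup_{\al}(Q\And P_{\al})$, i.e.\ $Q\And P=A$. Because $A\le Q$ and $B\le Q^{\perp}$, both $A$ and $B$ commute with $Q$ (any element $\le Q$ or $\le Q^{\perp}$ commutes with $Q$), so by Proposition~\ref{th:distributivity} the sublattice generated by $A$, $B$, $Q$ is distributive. Hence $Q\And P=Q\And(A\Or B)=(Q\And A)\Or(Q\And B)=A\Or(Q\And B)$, and since $Q\And B\le Q\And Q^{\perp}=0$ this equals $A$.

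Finally the two statements about $\Inf_{\al}P_{\al}$ follow by orthocomplementation, using that $P_{\al}\commutes Q$ is equivalent to $P_{\al}^{\perp}\commutes Q$ and that $\Inf_{\al}P_{\al}=(\Sup_{\al}P_{\al}^{\perp})^{\perp}$: the first step applied to $\{P_{\al}^{\perp}\}$ gives $\Inf_{\al}P_{\al}\commutes Q$, while the distributive identity applied to $\{P_{\al}^{\perp}\}$ with $Q^{\perp}$ in place of $Q$ reads $Q^{\perp}\And\Sup_{\al}P_{\al}^{\perp}=\Sup_{\al}(Q^{\perp}\And P_{\al}^{\perp})$, and taking orthocomplements of both sides yields $Q\Or\Inf_{\al}P_{\al}=\Inf_{\al}(Q\Or P_{\al})$. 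I do not expect a serious obstacle; the only points needing care are the elementary verification of $P=A\Or B$ and the observation that $A$ and $B$ really do commute with $Q$, so that Proposition~\ref{th:distributivity} is applicable --- the orthomodular law itself enters only through that proposition.
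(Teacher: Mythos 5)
Your argument is correct. Note that the paper itself offers no proof of this proposition --- it is quoted from Kalmbach (pp.~24--25) --- so there is nothing internal to compare against; what you give is essentially the standard textbook argument: the inequality $P_{\al}\le(P\And Q)\Or(P\And Q^{\perp})\le P$ yields $\Sup_{\al}P_{\al}\commutes Q$ by pure lattice reasoning, the decomposition $P=A\Or B$ with $A\le Q$, $B\le Q^{\perp}$ combined with Proposition~\ref{th:distributivity} gives the distributive identity (and this is exactly where orthomodularity is genuinely needed --- the identity $Q\And(A\Or B)=A$ fails in a general ortholattice), and the two statements about infima follow by de~Morgan duality as you say.
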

From Proposition \ref{th:distributivity}, a logic $\cQ$ is a Boolean
algebra if and only if $P\commutes Q$  for all $P,Q\in\cQ$
(Ref.~\cite[pp.~24--25]{Kal83}).

For any subset $\cA\subseteq\cQ$,
we denote by $\cA^{!}$ the {\em commutant} 
of $\cA$ in $\cQ$ (\Cite{Kal83}, p.~23), i.e.,
\[
\cA^{!}=
\{P\in\cQ\mid P\commutes Q \mbox{ for all }
Q\in\cA\}.
\]
Then, $\cA^{!}$ is a complete subalgebra of $\cQ$.
A {\em sublogic} of $\cQ$ is a subset $\cA$ of
$\cQ$ satisfying $\cA=\cA^{!!}$. 
For any subset $\cA\subseteq\cQ$, the smallest 
logic including $\cA$ is 
$\cA^{!!}$ called the  {\em sublogic generated by
$\cA$}.
Then, it is easy to see that a subset 
 $\cA$ is a Boolean sublogic, or equivalently 
 a distributive sublogic, if and only if 
$\cA=\cA^{!!}\subseteq\cA^{!}$.

\subsection{Commutators}
\label{se:CIQL}

Let $\cQ$ be a logic.
Marsden \cite{Mar70} has introduced the commutator $\com(P,Q)$ 
of two elements $P$ and $Q$ of $\cQ$ by 
\beqa
\com(P,Q)=(P\And Q)\Or(P\And Q\p)\Or(P\p\And Q)\Or(P\p\And Q\p).
\eeqa
Bruns and Kalmbach \cite{BK73} have generalized this notion 
to finite subsets of $\cQ$ by 
\beq
\com(\cF)=\Sup_{\al:\cF\to\{\id,\perp\}}\Inf_{P\in\cF}P^{\al(P)}
\eeq
for all $\cF\in\cP_{\om}(\cQ)$,
where $\cP_{\om}(\cQ)$ stands for the set of finite subsets of $\cQ$, and
$\{\id,\perp\}$ stands for the set consisting of the identity operation $\id$ and the 
orthocomplementation~$\perp$.
Generalizing this notion to arbitrary subsets $\cA$ of $\cQ$, Takeuti \cite{Ta81} defined
$\com(\cA)$ by
\beqa
\com(\cA)&=&\Sup T(\cA),\\
T(\cA)&=&\{E\in\cA^{!} \mid P_{1}\And E\commutes P_{2}\And E
\mb{ for all }P_{1},P_{2}\in\cA\},
\eeqa
of any $\cA\in\cP(\cQ)$, where $\cP(\cQ)$ stands for the power set of $\cQ$,
and showed that $\com(\cA)\in T(\cA)$.
Subsequently, Pulmannov\'{a} \cite{Pul85} showed:
\begin{theorem}
For any subset $\cA$ of a logic $\cQ$, we have
\bitem
\item[\rm (i)] $\com(\cA)=\Inf\{\com(\cF)\mid \cF\in\cP_{\om}(\cA)\},$
\item[\rm (ii)] $\com(\cA)=\Inf\{\com(P,Q)\mid P,Q\in \Ga_0(\cA)\}$.
\eitem
\end{theorem}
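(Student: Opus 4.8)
The plan is to dispose of the finite case first, using the classical structure theory of the Bruns--Kalmbach commutator, and then to obtain both (i) and (ii) by a directed-infimum argument over $\cP_{\om}(\cA)$. From Bruns and Kalmbach~\cite{BK73} (see also Kalmbach~\cite{Kal83}) I will take as known the following facts about $\com(\cF)$ for finite $\cF\subseteq\cQ$: (a) $\com(\cF)\commutes P$ for every $P\in\cF$; (b) $\com(\cF)$ commutes with every element of $\Ga_0\cF$, and $\{R\And\com(\cF)\mid R\in\Ga_0\cF\}$ is a Boolean subalgebra of the interval $[0,\com(\cF)]$ -- in particular $\com(\cF)\in T(\cF)$; and (c) $\com(\cF)=\Inf\{\com(P,Q)\mid P,Q\in\Ga_0\cF\}$. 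I will also use the elementary fact, which follows from Proposition~\ref{th:distributivity} by induction on $|\cF|$, that every finite pairwise-commuting subset of a logic generates a Boolean subalgebra closed under $\p$.

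The single new step is a maximality lemma: \emph{if $\cF$ is finite and $E\in T(\cF)$, then $E\le\com(\cF)$.} To prove it I would note that the elements $E\And P$ $(P\in\cF)$ lie in $[0,E]$ and commute there pairwise, hence generate a Boolean subalgebra $\cB$ of $[0,E]$; since $E\commutes P$ the relative orthocomplement of $E\And P$ in $[0,E]$ is $E\And P\p$, so $\cB$ contains $(E\And P)\Or(E\And P\p)=E$ and therefore has $E$ as its top element. Expanding $E$ as the join in $\cB$ of the meets of the generators and their relative complements gives $E=\Sup_{\al}\Inf_{P\in\cF}(E\And P^{\al(P)})\le\Sup_{\al}\Inf_{P\in\cF}P^{\al(P)}=\com(\cF)$, where $\al$ runs over the maps $\cF\to\{\id,\perp\}$. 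Together with (b) this gives $\com(\cF)=\Sup T(\cF)$, so Takeuti's commutator and the Bruns--Kalmbach commutator coincide on finite sets, and monotonicity $\cF\subseteq\cG\Rightarrow\com(\cG)\le\com(\cF)$ follows from $T(\cG)\subseteq T(\cF)$.

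For (i), the inequality $\com(\cA)\le\Inf\{\com(\cF)\mid\cF\in\cP_{\om}(\cA)\}$ is immediate, since $T(\cA)\subseteq T(\cF)$ for every finite $\cF\subseteq\cA$ and hence $\com(\cA)=\Sup T(\cA)\le\Sup T(\cF)=\com(\cF)$ by the lemma. For the reverse inequality, write $C=\Inf\{\com(\cF)\mid\cF\in\cP_{\om}(\cA)\}$; it suffices to prove $C\in T(\cA)$, for then $C\le\Sup T(\cA)=\com(\cA)$. To get $C\commutes P$ for $P\in\cA$, note that since $\com(\cF\cup\{P\})\le\com(\cF)$ the infimum $C$ is unchanged when we restrict to finite $\cF$ that contain $P$; each such $\com(\cF)$ lies in the complete subalgebra $\{P\}^{!}$ by (a), and a complete subalgebra is closed under arbitrary infima, so $C\in\{P\}^{!}$. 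To get $P_{1}\And C\commutes P_{2}\And C$ for $P_{1},P_{2}\in\cA$, put $R=\com(P_{1},P_{2})$; then $C\le R$ (so $C\commutes R$), while $R\commutes P_{1}$ and $P_{1}\And R\commutes P_{2}\And R$ by (a) and (b). Proposition~\ref{th:logic} gives $C\commutes(P_{1}\And R)$ and $C\commutes(P_{2}\And R)$, and a second application of Proposition~\ref{th:logic}, distributing $C$ over the decomposition of $P_{1}\And R$ with respect to $P_{2}\And R$ and $(P_{2}\And R)\p$, gives $C\And(P_{1}\And R)\commutes C\And(P_{2}\And R)$; since $C\le R$ this is precisely $C\And P_{1}\commutes C\And P_{2}$. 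Thus $C\in T(\cA)$.

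Part (ii) then follows from (i) and (c). Any two elements of $\Ga_0\cA$ are built from finitely many members of $\cA$, so they lie in $\Ga_0\cF$ for some common finite $\cF\subseteq\cA$; hence $\Ga_0\cA=\bigcup\{\Ga_0\cF\mid\cF\in\cP_{\om}(\cA)\}$. Rewriting each $\com(\cF)$ in the infimum from (i) by means of (c) and collapsing the resulting double infimum with the help of this union gives
\beqas
\com(\cA)&=&\Inf\{\com(\cF)\mid\cF\in\cP_{\om}(\cA)\}\\
&=&\Inf\{\com(P,Q)\mid P,Q\in\Ga_0\cA\}.
\eeqas
I expect the real obstacle to lie not in this passage to infinite $\cA$ -- which is careful bookkeeping with complete subalgebras and repeated use of Proposition~\ref{th:logic} -- but in the finite facts (b) and (c): that $\com(\cF)$ is the \emph{largest} element relativizing $\Ga_0\cF$ to a Boolean algebra, and that this element is already pinned down by the pair commutators within $\Ga_0\cF$. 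The delicate point in the transfer is that pairwise commutativity forces joint Booleanness only for \emph{finite} families; this is exactly why (i) must be stated through finite $\cF$ and (ii) through $\Ga_0\cA$ rather than $\cA$, since every Boolean-relativization step has to be localized to a finite subalgebra before the final infimum is taken.
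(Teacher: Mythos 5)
The paper itself offers no proof of this theorem---it is quoted as Pulmannov\'a's result---so your argument has to stand on its own; in spirit it is close to the subcommutator machinery the paper develops afterwards (Lemma~\ref{th:subcommutator}, Theorem~\ref{th:equivalence_com}). Part (i) does stand: the maximality lemma identifying the Bruns--Kalmbach commutator of a finite $\cF$ with $\Sup T(\cF)$ is correct (the computation of the relative complement of $E\And P$ in $[0,E]$ as $E\And P^{\perp}$ and the disjunctive-normal-form expansion of $E$ both go through), monotonicity follows, the inequality $\com(\cA)\le\Inf\{\com(\cF)\mid\cF\in\cP_{\om}(\cA)\}$ is immediate, and your verification that $C=\Inf\{\com(\cF)\mid\cF\in\cP_{\om}(\cA)\}$ lies in $T(\cA)$ is sound: restricting the infimum to finite sets containing a fixed $P$ to get $C\in\{P\}^{!}$ is legitimate because the restricted family is coinitial, and the two applications of Proposition~\ref{th:logic} around $R=\com(P_{1},P_{2})$, using $C\le R$, $R\commutes P_{1},P_{2}$ and $P_{1}\And R\commutes P_{2}\And R$, do yield $C\And P_{1}\commutes C\And P_{2}$. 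Facts (a) and (b) are genuinely classical and the paper itself uses them implicitly, so citing them is fair.

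The gap is in (ii), namely your ``fact (c)'': $\com(\cF)=\Inf\{\com(P,Q)\mid P,Q\in\Ga_0\cF\}$ for finite $\cF$. This is not background material: since for finite $\cA$ the infimum in (i) reduces to $\com(\cA)$ itself, (c) \emph{is} statement (ii) for finite $\cA$, i.e.\ the substantive finite core of the theorem you are asked to prove, and the paper credits precisely this pairwise characterization to Pulmannov\'a rather than to Bruns--Kalmbach (whose results are your (a) and (b)). Nor does (c) follow quickly from your own tools. The easy half, $\com(\cF)\le\com(P,Q)$ for $P,Q\in\Ga_0\cF$, does follow from (a), (b) and Lemma~\ref{th:subcommutator}, since $\com(\cF)$ commutes with all of $\Ga_0\cF$ and the relativized elements commute. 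But for the reverse inequality the natural route is to show that $d=\Inf\{\com(P,Q)\mid P,Q\in\Ga_0\cF\}$ belongs to $T(\cF)$ and invoke your maximality lemma, and there the first requirement, $d\commutes P$ for $P\in\cF$, is exactly where the difficulty sits: $d$ is an infimum of commutators $\com(X,Y)$ with $X,Y\in\Ga_0\cF$ which individually need not commute with $P$, so Proposition~\ref{th:logic} cannot be applied as it was in (i), where every term of the infimum could be arranged to lie in $\{P\}^{!}$. (Your $R$-trick would then settle the second requirement, but not this one.) So either supply a proof of the finite identity (c)---which needs genuine commutator identities and is the heart of Pulmannov\'a's argument---or support it with an accurate citation; as written, your part (ii) is only a (correct) reduction of the infinite case to an unproved finite case, while the passage from (i) and (c) to (ii) via a common finite generating set and the collapse of the double infimum is fine.
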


Here, we reformulate Takeuti's definition in a more convenient form.
Let $\cA\subseteq\cQ$.
Note that $\cA^{!!}$ is the sublogic generated by $\cA$,
and $\cA^{!}\cap\cA^{!!}$ is the center of $\cA^{!!}$, i.e., 
the set of elements of $\cA^{!!}$
commuting with all elements of $\cA^{!!}$.
Denote by $L(\cA)$ the sublogic generated by $\cA$, i.e., 
$L(\cA)=\cA^{!!}$,
and by $Z(\cA)$ the center of $L(\cA)$, i.e., $Z(\cA)=\cA^{!}\cap\cA^{!!}$.
A {\em subcommutator} of $\cA$ is  any $E\in Z(\cA)$ 
such that $P_1\And E\commutes P_2\And E$ for all $P_1,P_2\in\cA$.
Denote by $S(\cA)$ the set of subcommutators of  $\cA$, i.e., 
\beqa
S(\cA)=\{E\in Z(\cA)\mid P_{1}\And E\commutes P_{2}\And E
\mb{ for all }P_{1},P_{2}\in\cA\}.
\eeqa
By the relation $Z(\cA)\subseteq\cA^{!}$, we immediately obtain
the relation
$
\Sup S(\cA)\le\com(\cA).
$
We shall show that the equality actually holds.

\begin{lemma}\label{th:subcommutator}
Let $\cA$ be any subset of a logic $\cQ$.
For any  $P_1,P_2\in\cA$ and $E\in \cA^{!}$,
we have $P_1\And E\commutes P_2\And E$
if  and only if $P_1\And E \commutes P_2$.
\end{lemma}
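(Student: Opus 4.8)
The plan is to isolate a slightly more general fact and then specialize. \emph{Claim.} If $x,e,y\in\cQ$ satisfy $x\le e$ and $e\commutes y$, then $x\commutes(y\And e)$ if and only if $x\commutes y$. Granting this, Lemma~\ref{th:subcommutator} follows immediately by taking $x=P_1\And E$, $e=E$, and $y=P_2$: here $x\le e$ is trivial, and $e\commutes y$ holds since $E\in\cA^{!}$ and $P_2\in\cA$, so $y\And e=P_2\And E$ and the Claim becomes $P_1\And E\commutes P_2\And E\iff P_1\And E\commutes P_2$.

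To prove the Claim I would evaluate the two meets occurring in the identity $x=(x\And(y\And e))\Or(x\And(y\And e)\p)$ that expresses $x\commutes(y\And e)$. The first is immediate from $x\le e$:
\[
x\And(y\And e)=x\And y\And e=x\And y.
\]
For the second, De Morgan gives $(y\And e)\p=y\p\Or e\p$; then, using $x=x\And e$ and distributing $e$ over $y\p\Or e\p$ via Proposition~\ref{th:logic} (legitimate because $y\p\commutes e$, equivalently $e\commutes y$, and $e\p\commutes e$), together with $e\And e\p=0$ and $x\le e$ once more,
\[
x\And(y\And e)\p=(x\And e)\And(y\p\Or e\p)=x\And\big((e\And y\p)\Or(e\And e\p)\big)=x\And(e\And y\p)=x\And y\p.
\]
Combining the two evaluations,
\[
\big(x\And(y\And e)\big)\Or\big(x\And(y\And e)\p\big)=(x\And y)\Or(x\And y\p).
\]

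Since $x\commutes(y\And e)$ asserts that $x$ equals the left-hand side of the last identity while $x\commutes y$ asserts that $x$ equals the right-hand side, and the two sides are equal, the two commutation statements are equivalent; this proves the Claim, hence the lemma. I do not expect a real obstacle: the only step needing care is the distributive manipulation in the second display, which is valid precisely because $e$ commutes with $y$ — exactly the hypothesis $E\in\cA^{!}$ — and one could equally perform the whole computation inside the distributive sublattice generated by $P_1,P_2,E$ supplied by Proposition~\ref{th:distributivity}.
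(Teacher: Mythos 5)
Your proof is correct and follows essentially the same route as the paper: both reduce the equivalence to the identity $(P_1\And E)\And(P_2\And E)^{\perp}=(P_1\And E)\And P_2^{\perp}$ (together with the trivial equality of the unnegated meets) and then compare the two joins defining commutation. The only difference is that you spell out, via De Morgan and Proposition~\ref{th:logic}, the distributivity step that the paper leaves implicit, which is a welcome but not essentially different elaboration.
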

\begin{proof}
Let $E\in \cA^{!}$ and $P_1,P_2\in\cA$.
We have
$
(P_{1}\And E)\And (P_{2}\And E)^{\perp}=
(P_{1}\And E)\And P_{2}^{\perp},
$
and hence
\[
[(P_1\And E)\And (P_{2}\And E)]\Or[(P_1\And E)\And (P_{2}\And
E)^{\perp}]=[(P_1\And E)\And P_2]\Or[(P_1\And E)\And P_2^{\perp}].
\]
It follows that
$P_1\And E\commutes P_2\And E$ if and only if
$P_1\And E\commutes P_2$.
\end{proof}

For any $P,Q\in\cQ$, the {\em interval} $[P,Q]$ is the set of
all $X\in\cQ$ such that $P\le X\le Q$.
For any $\cA\subseteq \cQ$ and $P,Q\in\cA$,
we write $[P,Q]_{\cA}=[P,Q]\cap\cA$.

\begin{theorem}\label{th:equivalence_com}
For any subset  $\cA$  of a logic $\cQ$, the following relations hold.
\bitem
\item[\rm (i)] $S(\cA)=\{E\in Z(\cA)\mid [0,E]_{\cA}\subseteq Z(\cA)\}$.
\item[\rm (ii)] $\Sup S(\cA)$ is the maximum subcommutator of $\cA$, i.e., $\Sup S(\cA)\in S(\cA)$.
\item[\rm (iii)] $S(\cA)=[0,\Sup S(\cA)]_{L(\cA)}$.
\item[\rm (iv)] $\com(\cA)=\Sup S(\cA)$.
\eitem
\end{theorem}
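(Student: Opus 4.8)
\medskip

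The plan is to establish (i) first, to derive (iii) from it, and to get (ii) and (iv) directly from the results of Takeuti and Pulmannov\'{a} recalled above; I read $[0,E]_{\cA}$ in (i) as $[0,E]_{L(\cA)}$, the reading that makes (i) consistent with (iii). The technical heart is an analysis of the interval $[0,E]_{L(\cA)}$ for $E\in Z(\cA)$. Since $Z(\cA)$ is the center of $L(\cA)$, $[0,E]_{L(\cA)}$ is a direct factor of $L(\cA)$, hence itself a logic. A short computation from Proposition~\ref{th:logic} and the relation $E\commutes P$ shows that for $b\le E$ and $P\in\cA$ one has $b\commutes(P\And E)$ iff $b\commutes P$; this identifies the commutant of the family $\{P\And E\mid P\in\cA\}$, computed inside the logic $[0,E]_{L(\cA)}$, with $[0,E]\cap Z(\cA)$. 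As every element of $Z(\cA)$ commutes with all of $L(\cA)$, the set $[0,E]\cap Z(\cA)$ consists of central elements of $[0,E]_{L(\cA)}$, so the bicommutant of $\{P\And E\mid P\in\cA\}$ inside $[0,E]_{L(\cA)}$ is all of $[0,E]_{L(\cA)}$.

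Granting this, (i) follows quickly. If $E\in S(\cA)$ then $\{P\And E\mid P\in\cA\}$ is a family of pairwise commuting elements of the logic $[0,E]_{L(\cA)}$, hence is contained in its own commutant; by the criterion that $\cB$ is a Boolean sublogic iff $\cB=\cB^{!!}\subseteq\cB^{!}$ (applied inside this logic), its bicommutant---which we just saw is $[0,E]_{L(\cA)}$---is Boolean, so every element of $[0,E]_{L(\cA)}$ commutes with all of $[0,E]_{L(\cA)}$ and therefore, $E$ being central in $L(\cA)$, with all of $L(\cA)$; thus $[0,E]_{L(\cA)}\subseteq\cA^{!}\cap\cA^{!!}=Z(\cA)$. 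Conversely, if $E\in Z(\cA)$ and $[0,E]_{L(\cA)}\subseteq Z(\cA)$, then each $P\And E$ lies in $Z(\cA)$, so $P_1\And E\commutes P_2\And E$ for all $P_1,P_2\in\cA$ and $E\in S(\cA)$.

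For (iv), Takeuti's theorem gives $\com(\cA)\in T(\cA)$, i.e.\ $\com(\cA)\in\cA^{!}$ together with the subcommutator inequalities, while Pulmannov\'{a}'s identity $\com(\cA)=\Inf\{\com(P,Q)\mid P,Q\in\Ga_0(\cA)\}$ exhibits $\com(\cA)$ as an infimum of elements of $\Ga_0(\cA)\subseteq\cA^{!!}$, so $\com(\cA)\in\cA^{!!}$ because $\cA^{!!}$ is a complete subalgebra; hence $\com(\cA)\in Z(\cA)$ and $\com(\cA)\in S(\cA)$, which with the already-noted inequality $\Sup S(\cA)\le\com(\cA)$ gives (iv), and then (ii) is immediate since $\Sup S(\cA)=\com(\cA)\in S(\cA)$. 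For (iii), the inclusion $S(\cA)\subseteq[0,\Sup S(\cA)]_{L(\cA)}$ holds because $S(\cA)\subseteq Z(\cA)\subseteq L(\cA)$ and every subcommutator is $\le\Sup S(\cA)$; conversely, applying (i) to $E=\com(\cA)\in S(\cA)$ gives $[0,\com(\cA)]_{L(\cA)}\subseteq Z(\cA)$, and for any $X\in[0,\com(\cA)]_{L(\cA)}$ we have $X\in Z(\cA)$ and $[0,X]_{L(\cA)}\subseteq[0,\com(\cA)]_{L(\cA)}\subseteq Z(\cA)$, so $X\in S(\cA)$ by the converse half of (i).

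The step I expect to be the real obstacle is the identification in the first paragraph: a priori $\cA$ generates $L(\cA)$ only as a bicommutant, not as a complete subalgebra, so it is not automatic that $[0,E]_{L(\cA)}$ is controlled by the restrictions $P\And E$ of the elements of $\cA$. What makes it work is exactly that $E\in Z(\cA)$ lies in $\cA^{!!}$ and hence commutes with all of $\cA^{!}$; this is what forces the commutant of $\{P\And E\mid P\in\cA\}$ inside $[0,E]_{L(\cA)}$ to be central there, after which the Boolean-sublogic criterion finishes everything.

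\medskip
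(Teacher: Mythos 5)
Your argument is correct, but it takes a genuinely different route from the paper's, and it quietly strengthens part (i). Note first that the paper's own convention is $[0,E]_{\cA}=[0,E]\cap\cA$, and its proof of (i) is nothing more than Lemma~\ref{th:subcommutator}: $E\in S(\cA)$ iff each compression $P\And E$ with $P\in\cA$ commutes with all of $\cA$, i.e.\ lies in $\cA^{!}$ and hence in $Z(\cA)$; no analysis of the full interval inside $L(\cA)$ is attempted there. The paper then proves (ii) directly, showing via Lemma~\ref{th:subcommutator} and Proposition~\ref{th:logic} that $\Sup S(\cA)$ is itself a subcommutator, and only uses the quoted results of Takeuti and Pulmannov\'a for (iv) (through $\com(\cF)\in Z(\cF)$ for finite $\cF\subseteq\cA$); for (iii) it asserts, without detail, that every $P\in[0,\Sup S(\cA)]_{L(\cA)}$ commutes with all of $L(\cA)$ and then verifies membership in $S(\cA)$. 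Your reading of (i), with $[0,E]_{L(\cA)}$ in place of $[0,E]_{\cA}$, is exactly that stronger assertion, and your proof of it is the substantive new ingredient: identifying the commutant of $\{P\And E\mid P\in\cA\}$ inside the central direct factor $[0,E]_{L(\cA)}$ with $[0,E]\cap Z(\cA)$ (via the observation that for $b\le E$ and $P\in\cA$ one has $b\commutes P\And E$ iff $b\commutes P$), so that the bicommutant is the whole factor, which becomes Boolean once the generators pairwise commute. This supplies precisely the justification the paper leaves implicit in its first sentence of (iii); in fact, once you know every element of $[0,E]_{L(\cA)}$ commutes with each $P\And E$, you can conclude $[0,E]_{L(\cA)}\subseteq\cA^{!}$ directly, so the Boolean-sublogic detour is not even needed. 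Your treatment of (iv) via Takeuti's $\com(\cA)\in T(\cA)$ and Pulmannov\'a's identity (with $\Ga_0(\cA)\subseteq\cA^{!!}$ and completeness of $\cA^{!!}$ giving $\com(\cA)\in Z(\cA)$) matches the paper's intent, and deducing (ii) from (iv) is legitimate, though less self-contained than the paper's direct argument for (ii); your derivation of (iii) from the strengthened (i) together with (iv) is clean. In sum: correct, with a stronger version of (i) and a commutant-inside-the-factor argument replacing the paper's one-line use of Lemma~\ref{th:subcommutator} and its unproved step in (iii).
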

\begin{proof}
(i) It is easy to see that $P_1\And E\commutes P_2$ for every $P_1,P_2\in\cA$
if and only if $[0,E]\cap\cA\subseteq\cA^{!}$, and hence the assertion 
follows from Lemma \ref{th:subcommutator}.
(ii) Let $P_1,P_2\in\cA$.
We have $P_1\And E\commutes P_2$ for every
$E \in S(\cA)$ from Lemma \ref{th:subcommutator},
and $P_1\And \Sup S(\cA)\commutes P_2$
from Proposition \ref{th:logic}.  Since $S(\cA)\subseteq Z(\cA)$,
we have $\Sup S(\cA)\in Z(\cA)$.  Thus, $\Sup S(\cA)\in S(\cA)$,
and the assertion follows.
(iii)  If $P\in [0,\Sup S(\cA)]_{L(\cA)}$ then $P=P\And \Sup S(\cA)$ commutes with
every element of $L(\cA)$.  Thus, we have $[0,\Sup S(\cA)]_{L(\cA)}=
[0,\Sup S(\cA)]_{Z(\cA)}$.
Now, let $P\in [0,\Sup S(\cA)]_{Z(\cA)}$. 
Then, $P_1\commutes P$  and $P_1\commutes P_2\And \Sup S(\cA)$,
and hence $P_1\commutes P\And P_2\And \Sup S(\cA)$
and
$P_1\commutes P_2\And P$.
Thus, we have $P\in S(\cA)$,
and the assertion follows.
(iv) Since $\com(\cF)\in Z(\cF)$ for every finite subset $\cF$  of $\cA$, 
we have $\com(\cA)\in Z(\cA)$,
and hence we have $\com(\cA)\in Z(\cA)$.  
Thus, relation (iv) follows.
\end{proof}
\color{black}

The following proposition will be useful in later discussions.

\begin{theorem}\label{th:maximal_Boolean}
Let $\cB$ be a maximal Boolean sublogic of a logic $\cQ$ and $\cA$ a
subset of $\cQ$ including $\cB$, i.e., $\cB\subseteq\cA\subseteq\cQ$.
Then, we have $\com(\cA)\in\cB$ and $[0,\com(\cA)]_{\cA}\subset \cB$.
\end{theorem}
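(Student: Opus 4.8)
The plan is to analyze the structure of $S(\cA)$ using Theorem \ref{th:equivalence_com}, and exploit the maximality of $\cB$ together with the fact that $\cB\subseteq\cA$. First I would observe that since $\cB\subseteq\cA$, we have the containment of commutants $\cA^{!}\subseteq\cB^{!}$; moreover, because $\cB$ is a maximal Boolean sublogic, it is its own relative commutant inside itself in the strong sense that $\cB^{!}\cap\cB^{!!}=\cB^{!!}=\cB$ (a Boolean sublogic satisfies $\cB=\cB^{!!}\subseteq\cB^{!}$, and maximality forces $\cB^{!!}=\cB$, with any element commuting with all of $\cB$ that could be adjoined violating maximality unless already in $\cB$). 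The key point is that any subcommutator $E\in S(\cA)$ lies in $Z(\cA)=\cA^{!}\cap\cA^{!!}$, so in particular $E\commutes P$ for every $P\in\cA$, hence for every $P\in\cB$; thus $E\in\cB^{!}$. Combined with $E\in\cA^{!!}$ and the inclusion $\cB\subseteq\cA$ giving $\cA^{!!}\supseteq\cB^{!!}=\cB$ — wait, that inclusion goes the wrong way, so I need the sharper observation below.

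The crucial step is to show $E\in\cB$ directly. By Theorem \ref{th:equivalence_com}(i), $E\in S(\cA)$ implies $[0,E]_{\cA}\subseteq Z(\cA)\subseteq\cA^{!}\subseteq\cB^{!}$. Now I would use that $\cB$ is a Boolean sublogic containing the element $E^{\perp}$'s complementary situation: since $\cB\subseteq\cA$, for any $B\in\cB$ the meet $B\And E$ lies in $[0,E]_{\cA}$ (it is below $E$ and, since $\cB\subseteq\cA$ and $E\commutes B$ as $E\in Z(\cA)$, it lies in $\cA$ — here I use that $E\commutes B$ so $B\And E\in\cA^{!!}$; one needs $B\And E\in\cA$, which follows because $E\in\cA^{!}$ commutes with $B\in\cA$ and hence... actually the cleaner route is: $B\And E\in\cA^{!!}$ and below $E$, and we want to land in $\cA$). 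To avoid this snag I would instead argue: consider $\cB' = \{B\vee E^{\perp}\wedge B' : \dots\}$ — better yet, I would show $\cB\cup\{E\}$ generates a Boolean sublogic, forcing $E\in\cB$ by maximality. Since $E\in\cA^{!}\subseteq\cB^{!}$, $E$ commutes with every element of $\cB$, so by Proposition \ref{th:distributivity} and Proposition \ref{th:logic} the sublogic $\Ga_0(\cB\cup\{E\})$ generated by $\cB\cup\{E\}$ is distributive, i.e., Boolean; by maximality of $\cB$ this forces $E\in\cB$. This gives $S(\cA)\subseteq\cB$, and since $\com(\cA)=\Sup S(\cA)\in S(\cA)$ by Theorem \ref{th:equivalence_com}(ii) and $\cB$ is complete, $\com(\cA)\in\cB$.

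For the second claim, $[0,\com(\cA)]_{\cA}\subseteq\cB$: by Theorem \ref{th:equivalence_com}(iii), $S(\cA)=[0,\Sup S(\cA)]_{L(\cA)}=[0,\com(\cA)]_{L(\cA)}$, so every $P$ with $0\le P\le\com(\cA)$ lying in $L(\cA)=\cA^{!!}$ is a subcommutator, hence in $\cB$ by the first part. It remains to check $[0,\com(\cA)]_{\cA}\subseteq[0,\com(\cA)]_{L(\cA)}$, i.e., that any $P\in\cA$ below $\com(\cA)$ lies in $\cA^{!!}$: indeed $\cA\subseteq\cA^{!!}=L(\cA)$ always, so this is automatic, and we conclude $[0,\com(\cA)]_{\cA}\subseteq[0,\com(\cA)]_{L(\cA)}=S(\cA)\subseteq\cB$.

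I expect the main obstacle to be the verification that $\Ga_0(\cB\cup\{E\})$ is not merely distributive but an honest Boolean \emph{sublogic} (closed under arbitrary sups and infs, so that maximality of $\cB$ as a \emph{complete} Boolean sublogic applies); this requires invoking Proposition \ref{th:logic} to see that the relevant infinite joins and meets still commute with everything and stay in the generated structure, and one must be slightly careful that "maximal Boolean sublogic" is being used in the sense that no strictly larger Boolean sublogic exists, so that adjoining $E$ and closing up must already return $\cB$ itself. The rest is bookkeeping with the intervals and Theorem \ref{th:equivalence_com}.
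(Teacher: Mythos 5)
Your proposal is correct and is essentially the paper's argument: both rest on the chain $Z(\cA)\subseteq\cA^{!}\subseteq\cB^{!}$ together with the fact that maximality of $\cB$ forces $\cB^{!}=\cB$, which the paper invokes directly and you re-derive by adjoining a commuting element and closing up (taking $(\cB\cup\{E\})^{!!}$ rather than $\Ga_0(\cB\cup\{E\})$ resolves the completeness worry you flag, and note that maximality gives $\cB^{!}=\cB$, not $\cB^{!!}=\cB$, which holds for any sublogic). For the interval claim the paper simply observes that $P\And\com(\cA)\commutes Q$ for all $Q\in\cB$ (via Lemma \ref{th:subcommutator}) and hence $P\And\com(\cA)\in\cB^{!}=\cB$, whereas you route through Theorem \ref{th:equivalence_com}(iii); this is only a difference in bookkeeping, not in the underlying idea.
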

\begin{proof}
Since $\com(\cA)\in Z(\cA)\subseteq\cB^{!}=\cB$, we have $\com(\cA)\in\cB$.  Let $P\in\cA$.  
Then, $P\And \com(\cA)\commutes Q$
for all $Q\in\cB$, so that $P\And \com(\cA)\in\cB^{!}=\cB$, and hence 
$[0,\com(\cA)]_{\cA}\subset \cB$.
\end{proof}

The following theorem clarifies the significance of commutators.

\begin{theorem}\label{th:absoluteness_commutator}
Let $\cA$ be a subset of a logic $\cQ$.
Then,  $L(\cA)$ is isomorphic to the direct product of the complete Boolean algebra
$[0,\com(\cA)]_{L(\cA)}$ and the complete orthomodular lattice
$[0,\com(\cA)^\perp]_{L(\cA)}$ without non-trivial Boolean factor.
\end{theorem}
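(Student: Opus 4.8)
The plan is to exploit that $c:=\com(\cA)$ is a central element of $L(\cA)$, run the standard direct-product decomposition of a complete orthomodular lattice along a central element, and then pin down the two factors using Theorem~\ref{th:equivalence_com}.

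First I would record that $c=\Sup S(\cA)\in S(\cA)\subseteq Z(\cA)$ by Theorem~\ref{th:equivalence_com}(ii),(iv), so $c$ commutes with every element of $L(\cA)=\cA^{!!}$; and since $\cA^{!!}$ is a complete subalgebra of $\cQ$ it is itself a complete orthomodular lattice. Then I would invoke (or briefly verify, using Propositions~\ref{th:distributivity} and~\ref{th:logic}) the standard fact that $P\mapsto(P\And c,\ P\And c^{\perp})$ is an isomorphism of $L(\cA)$ onto $[0,c]_{L(\cA)}\times[0,c^{\perp}]_{L(\cA)}$, each interval carrying its relative orthocomplementation $X\mapsto X^{\perp}\And c$ (resp.\ $X\mapsto X^{\perp}\And c^{\perp}$), with inverse $(Q,R)\mapsto Q\Or R$. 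This reduces the theorem to identifying the two factors.

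For the Boolean factor: by Theorem~\ref{th:equivalence_com}(iii) we have $[0,c]_{L(\cA)}=S(\cA)\subseteq Z(\cA)$, so every element of this interval is central in $L(\cA)$, hence any two of them commute; Proposition~\ref{th:distributivity} then makes the interval distributive, and, being a complete sublattice closed under the relative orthocomplementation, it is a complete Boolean algebra. For the other factor I would argue by contradiction: if $[0,c^{\perp}]_{L(\cA)}$ had a non-trivial Boolean direct factor, there would be a nonzero central element $d$ of $[0,c^{\perp}]_{L(\cA)}$ with $[0,d]_{L(\cA)}$ Boolean. I would first upgrade $d$ to an element of $Z(\cA)$: for $Q\in L(\cA)$ write $Q=(Q\And c)\Or(Q\And c^{\perp})$; since $d\le c^{\perp}$, one has $d\And(Q\And c)=0$ and $(Q\And c)^{\perp}\ge c^{\perp}\ge d$, so $d\commutes(Q\And c)$ automatically, while $d\commutes(Q\And c^{\perp})$ by centrality of $d$ in $[0,c^{\perp}]_{L(\cA)}$, whence $d\commutes Q$ by Proposition~\ref{th:logic}. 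Then $c\Or d\in Z(\cA)$ as a join of two central elements, and I would show $c\Or d\in S(\cA)$: since $c\And d=0$ with $c,d$ central, $[0,c\Or d]_{L(\cA)}\cong[0,c]_{L(\cA)}\times[0,d]_{L(\cA)}$ with $P\And(c\Or d)=(P\And c)\Or(P\And d)$ corresponding to $(P\And c,\ P\And d)$; for $P_1,P_2\in\cA$ the first components commute because $c\in S(\cA)$ and the second because $[0,d]_{L(\cA)}$ is Boolean, so $P_1\And(c\Or d)\commutes P_2\And(c\Or d)$. This gives $c\Or d\le\Sup S(\cA)=c$, forcing $d\le c\And c^{\perp}=0$, a contradiction; hence $[0,c^{\perp}]_{L(\cA)}$ has no non-trivial Boolean factor.

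The step I expect to require the most care is transferring commutation relations between the ambient logic $\cQ$ and the intervals: one must check that for $X,Y\le e$ with $e$ central, $X\commutes Y$ in $\cQ$ is equivalent to $X\commutes Y$ computed inside the orthomodular interval $[0,e]_{L(\cA)}$ — the point being that the relative orthocomplement $Y^{\perp}\And e$ differs from $Y^{\perp}$, but $X\And Y^{\perp}=X\And(Y^{\perp}\And e)$ since $X\le e$ — and likewise that the isomorphism $[0,c\Or d]_{L(\cA)}\cong[0,c]_{L(\cA)}\times[0,d]_{L(\cA)}$ carries $P\And(c\Or d)$ to $(P\And c,\ P\And d)$, which rests on the distributivity of $P$ with the central elements $c,d$ via Proposition~\ref{th:logic}. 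Everything else is routine bookkeeping with Propositions~\ref{th:distributivity} and~\ref{th:logic}.
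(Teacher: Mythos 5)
Your proposal is correct and follows essentially the same route as the paper: decompose $L(\cA)$ along the central element $\com(\cA)=\Sup S(\cA)$, get the Boolean factor from $[0,\com(\cA)]_{L(\cA)}=S(\cA)\subseteq Z(\cA)$, and rule out a Boolean factor in $[0,\com(\cA)^{\perp}]_{L(\cA)}$ by the maximality of $\Sup S(\cA)$. Your contradiction argument (upgrading the interval-central element $d$ to $Z(\cA)$ and showing $\com(\cA)\Or d\in S(\cA)$) is exactly the detail the paper compresses into ``it follows easily from the maximality of $\Sup S(\cA)$,'' and it checks out.
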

\sloppy
\begin{proof}
It follows from $\Sup S(\cA)\in Z(\cA)$ that
$L(\cA)\cong [0,\Sup S(\cA)]_{L(\cA)}\times [0,\Sup S(\cA)^\perp]_{L(\cA)}$.
Then, $[0,\Sup S(\cA)]_{L(\cA)}$ is a complete Boolean algebra,
since $[0,\Sup S(\cA)]_{L(\cA)}\subseteq Z(\cA)$.
It follows easily from the maximality of $\Sup S(\cA)$ that 
$[0,\Sup S(\cA)^\perp]_{L(\cA)}$ has no non-trivial Boolean factor.
Thus, the assertion follows from the relation $\Sup S(\cA)=\com(\cA)$.
\end{proof}

We refer the reader to Pulmannov\'{a} \cite{Pul85} and 
Chevalier \cite{Che89} for further results about commutators
in orthomodular lattices. 

\subsection{Logic on Hilbert spaces}
\label{se:QL}
Let $\cH$ be a Hilbert space.
For any subset $S\subseteq\cH$,
we denote by $S^{\perp}$ the orthogonal complement
of $S$.
Then, $S^{\perp\perp}$ is the closed linear span of $S$.
Let $\cC(\cH)$ be the set of all closed linear subspaces in
$\cH$. 
With the set inclusion ordering, 
the set $\cC(\cH)$ is a complete
lattice. 
The operation $M\mapsto M^\perp$ 
is  an orthocomplementation
on the lattice $\cC(\cH)$, with which $\cC(\cH)$ is a logic.

Denote by $\cB(\cH)$ the algebra of bounded linear
operators on $\cH$ and $\cQ(\cH)$ the set of projections on $\cH$.
We define the {\em operator ordering} on $\cB(\cH)$ by
$A\le B$ iff $(\ps,A\ps)\le (\ps,B\ps)$ for
all $\ps\in\cH$. 
For any $A\in\cB(\cH)$, denote by $\cR (A)\in\cC(\cH)$
the closure of the range of $A$, {\em i.e.,} 
$\cR(A)=(A\cH)^{\perp\perp}$.
For any $M\in\cC(\cH)$,
denote by $\cP (M)\in\cQ(\cH)$ the projection operator 
of $\cH$
onto $M$.
Then, $\cR\cP (M)=M$ for all $M\in\cC(\cH)$
and $\cP\cR (P)=P$ for all $P\in\cQ(\cH)$,
and we have $P\le Q$ if and only if $\cR (P)\subseteq\cR (Q)$
for all $P,Q\in\cQ(\cH)$,
so that $\cQ(\cH)$ with the operator ordering is also a logic
isomorphic to $\cC(\cH)$.
Any sublogic of $\cQ(\cH)$ will be called a {\em logic on $\cH$}. 
The lattice operations are characterized by 
$P\And Q={\mb{weak-lim}}_{n\to\infty}(PQ)^{n}$, 
$P^\perp=1-P$ for all $P,Q\in\cQ(\cH)$.

Let $\cA\subseteq\cB(\cH)$.
We denote by $\cA'$ the {\em commutant of 
$\cA$ in $\cB(\cH)$}.
A self-adjoint subalgebra $\cM$ of $\cB(\cH)$ is called a
{\em von Neumann algebra} on $\cH$ iff 
$\cM''=\cM$.
For any self-adjoint subset $\cA\subseteq\cB(\cH)$,
$\cA''$ is the von Neumann algebra generated by $\cA$.
We denote by $\cP(\cM)$ the set of projections in
a von Neumann algebra $\cM$.
For any $P,Q\in\cQ(\cH)$, we have 
$P\commutes Q$ iff $[P,Q]=0$, where $[P,Q]=PQ-QP$.
For any subset $\cA\subseteq\cQ(\cH)$,
we denote by $\cA^{!}$ the {\em commutant} 
of $\cA$ in $\cQ(\cH)$.
For any subset $\cA\subseteq\cQ(\cH)$, the smallest 
logic including $\cA$ is the logic
$\cA^{!!}$ called the  {\em logic generated by
$\cA$}.  
Then, a subset $\cQ \subseteq\cQ(\cH)$ is a logic on $\cH$ if
and only if $\cQ=\cP(\cM)$ for some von Neumann algebra
$\cM$ on $\cH$ (\Cite{07TPQ}, Proposition 2.1).

We define the {\em implication} and the 
{\em logical equivalence} on $\cQ$ by  
$P\Then Q=P^{\perp}\Or(P\And Q)$
 and
$P\Iff Q=(P\Then Q)\And(Q\Then P)$.
We have the following characterization of commutators in 
logics on Hilbert spaces (\Cite{07TPQ}, Theorems 2.5, 2.6).

\begin{theorem}\label{th:com}
Let $\cQ$ be a logic on $\cH$
and let $\cA\subseteq\cQ$.
Then, we have the following relations.
\bitem
\item
$\com(\cA)=\cP\{\ps\in\cH\mid [A,B]\ps=0 \mb{ for all }A,B\in\cA''\}$.
\item
$\com(\cA)=\cP\{\ps\in\cH\mid 
[P_1,P_2]P_3\psi=0\mb{ for all } P_1,P_2,P_3\in \cA\}$.
\eitem
\end{theorem}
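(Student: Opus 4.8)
The plan is to realize $\com(\cA)$ as a central projection of the von Neumann algebra $\cM:=\cA''$ and read off both formulas from that. A sublogic of $\cQ(\cH)$ carries the same finite lattice operations, hence (by Pulmannov\'a's theorem) the same commutator as $\cQ(\cH)$, so I may take $\cQ=\cQ(\cH)$; then $\cA^{!}=\cP(\cM')$, $\cA^{!!}=\cP(\cM)$ and $Z(\cA)=\cP(Z(\cM))$. Put $V=\{\ps\mid [A,B]\ps=0\text{ for all }A,B\in\cM\}$, a closed subspace, and $E=\cP(V)$. First I would check $E\in Z(\cA)$: $\cM'$-invariance of $V$ is immediate from $[A,B]C\ps=C[A,B]\ps$ for $C\in\cM'$, giving $E\in\cM$; and $\cM$-invariance follows from the identity $[A,B]A_0\ps=BA_0A\ps-AA_0B\ps$ (two applications of $[X,Y]\ps=0$ for $X,Y\in\cM$) together with $[A_0,A]\ps=0$, which yields $[A,B]A_0\ps=0$, so $A_0\ps\in V$ and $E\in\cM'$; thus $E\in\cM\cap\cM'=Z(\cM)$. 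Then $E\in S(\cA)$: being central, $E$ satisfies $P_1\And E=P_1E$ and $[P_1E,P_2]=[P_1,P_2]E=0$, since $\ran E=V\subseteq\ker[P_1,P_2]$ for $P_1,P_2\in\cA\subseteq\cM$.

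For maximality of $E$ in $S(\cA)$, let $F\in S(\cA)$; by Theorem~\ref{th:equivalence_com} and the above, $F$ is a central projection of $\cM$ with $[P_1,P_2]F=0$ for all $P_1,P_2\in\cA$, i.e.\ the projections $\{PF\mid P\in\cA\}$ commute pairwise in $\cB(F\cH)$. The normal unital $*$-homomorphism $A\mapsto AF|_{F\cH}$ carries $\cM=\cA''$ onto $\cM F=(\cA F)''$, which is therefore abelian, so $[A,B]F=0$ for all $A,B\in\cM$; hence $\ran F\subseteq V$, i.e.\ $F\le E$. This gives $\com(\cA)=\Sup S(\cA)=E=\cP(V)$, the first formula.

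For the second, set $W=\{\ps\mid [P_1,P_2]P_3\ps=0\text{ for all }P_1,P_2,P_3\in\cA\}$, a closed subspace. Since $V$ is $\cM$-invariant, $\ps\in V$ gives $P_3\ps\in V\subseteq\ker[P_1,P_2]$, so $V\subseteq W$ and $\com(\cA)=E\le\cP(W)$. I would obtain the reverse inclusion by showing $\cP(W)\in S(\cA)$, whence $\cP(W)\le\Sup S(\cA)=\com(\cA)$ and $\cP(W)=\com(\cA)=\cP(V)$. By Lemma~\ref{th:subcommutator} this needs three things, two of them easy: (a) $W\subseteq\ker[P_1,P_2]$ for $P_1,P_2\in\cA$---indeed for $\ps\in W$ the relations $[P_1,P_2]P_1\ps=[P_1,P_2]P_2\ps=0$ force $[P_1,P_2]\ps\in\ran P_1\cap\ran P_2\subseteq\ran(P_1\And P_2)\subseteq\ker[P_1,P_2]$, so $[P_1,P_2]^2\ps=0$ and, $[P_1,P_2]$ being skew-adjoint, $[P_1,P_2]\ps=0$; and (b) $W$ is $\cM'$-invariant, since every $C\in\cM'$ commutes past each $P_i$. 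Granting (a) and (b), $\cP(W)\in S(\cA)$ as soon as $W$ is also $\cM$-invariant.

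Hence the main obstacle is the $\cM$-invariance of $W$: equivalently, that $P\ps\in W$ for all $P\in\cA$ and $\ps\in W$; equivalently, that $[P_1,P_2]Q_1\cdots Q_m\ps=0$ for every $\ps\in W$ and every word $Q_1\cdots Q_m$ in $\cA$. I expect this to require a delicate induction on $m$---using, for the case $m-1$, order-independence of words of length $\le m+1$ acting on $\ps$, the identity $[P_1,P_2]Q=[P_1,P_2Q]-P_2[P_1,Q]$ to collapse trailing commutators, and the positivity argument of (a) for the essential extra input---and I regard getting this induction to close as the crux. (Two alternatives I would try instead: use Pulmannov\'a's theorem together with the fact, provable from the ``two projections'' decomposition, that $\com(P,Q)=\cP(\ker[P,Q])$, turning the claim into an identity relating triple products in $\cA$ to commutators of elements of $\Ga_0(\cA)$; or use Theorem~\ref{th:absoluteness_commutator} to reduce to the case $\com(\cA)=0$ and prove $W=\{0\}$ there from the type decomposition of $\cM$.) Once $W$ is $\cM$-invariant, the second formula follows as in the previous paragraph.
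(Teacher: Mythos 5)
Your treatment of part (i) is correct and complete: the reduction to $\cQ=\cQ(\cH)$ via Pulmannov\'a's theorem is legitimate (the lattice operations, hence the finite commutators and their infimum, do not depend on the ambient logic), the verification that $E=\cP(V)$ is a central projection of $\cM=\cA''$ belonging to $S(\cA)$ is sound, and the maximality argument through the abelianness of $(\cA F)''$ on $F\cH$ together with Theorem \ref{th:equivalence_com} (iv) gives $\com(\cA)=\Sup S(\cA)=\cP(V)$. Note that the paper itself gives no proof of this theorem but cites Theorems 2.5 and 2.6 of Ref.~\cite{07TPQ}, so you are reconstructing the argument from scratch; for (i) the reconstruction succeeds.

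For part (ii), however, what you actually prove is the easy inclusion $\com(\cA)\le\cP(W)$ plus the two preliminary facts (a) and (b) (both correct; (a) is a nice skew-adjointness argument), while the step on which the reverse inclusion hinges --- the $\cM$-invariance of $W$, equivalently that $[P_1,P_2]Q_1\cdots Q_m\ps=0$ for every word $Q_1\cdots Q_m$ in $\cA$ whenever $\ps\in W$ --- is left as a conjecture. This is not a routine verification: the hypothesis only controls words of length at most one following a commutator, and, as you yourself note, the obvious induction does not close (the inductive hypothesis at length $m$ yields order-independence of words of length at most $m+2$ acting on $\ps$, whereas the claim at length $m+1$ is an exchange of letters inside a word of length $m+3$, and the positivity trick of (a), applied at the general step, demands control of words one letter longer than those being handled). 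Without this step one only gets $\com(\cA)\le\cP(W)\le\Inf_{P,Q\in\cA}\com(P,Q)$, and the right-hand side is in general strictly larger than $\com(\cA)$, since Pulmannov\'a's theorem requires $P,Q$ to range over $\Ga_0(\cA)$ rather than $\cA$; so the missing invariance is exactly the mathematical content that distinguishes (ii) from (i). The two alternative routes you mention (via $\com(P,Q)=\cP(\ker[P,Q])$ and Pulmannov\'a's theorem, or via reduction to the case $\com(\cA)=0$ using Theorem \ref{th:absoluteness_commutator}) run into the same difficulty and are only named, not carried out. Hence part (ii) contains a genuine gap; as it stands you have a full proof of (i) and only one inequality of (ii).
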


\subsection{Quantum set theory over logic on Hilbert spaces}
\label{se:UQ}

We denote by $\V$ the universe 
of the Zermelo-Fraenkel set theory
with the axiom of choice (ZFC).
Let $\cL(\in)$ be the first-order language with equality without constant symbols
augmented by a binary relation symbol
$\in$, bounded quantifier symbols $\forall x\in y$,
$\exists x \in y$
(in addition to unbounded quantifier symbols $\forall x$,
$\exists x$.
For any class $U$, 
the language $\cL(\in,U)$ is the one
obtained by adding a name for each element of $U$.

Let $\cQ$ be a logic on $\cH$.
For each ordinal $ {\al}$, let
\beq
\V_{\al}^{(\cQ)} = \{u|\ u:\dom(u)\to \cQ \mbox{ and }
(\exists \be<\al)
\dom(u) \subseteq V_{\be}^{(\cQ)}\}.
\eeq
The {\em $\cQ$-valued universe} $\VL$ is defined
by 
\beq
  \VL= \bigcup _{{\al}{\in}\mbox{On}} V_{{\al}}^{(\cQ)},
\eeq
where $\mbox{On}$ is the class of all ordinals. 
For every $u\in\VQ$, the rank of $u$, denoted by
$\rank(u)$,  is defined as the least $\al$ such that
$u\in \VQ_{\al+1}$.
It is easy to see that if $u\in\dom(v)$ then 
$\rank(u)<\rank(v)$.

For any $u,v\in\VL$, the $\cQ$-valued truth values of
atomic formulas $u=v$ and $u\in v$ are assigned
by the following rules recursive in rank.
\bitem
\item $\vval{u = v}
= \inf_{u' \in  \cD(u)}(u(u') \Then
\vval{u'  \in v})
\And \inf_{v' \in   \cD(v)}(v(v') 
\Then \vval{v'  \in u})$.
\item $ \vval{u \in v} 
= \sup_{v' \in \cD(v)} (v(v')\And \vval{u =v'})$.
\eitem

To each statement $\ph$ of $\cL(\in,\VL)$ we assign the
$\cQ$-valued truth value $ \val{\ph}_{\cQ}$ by the following
rules.
\bitem
\setcounter{enumi}{2}
\item $ \vval{\Not\ph} = \vval{\ph}^{\perp}$.
\item $ \vval{\ph_1\And\ph_2} 
= \vval{\ph_{1}} \And \vval{\ph_{2}}$.
\item $ \vval{\ph_1\Or\ph_2} 
= \vval{\ph_{1}} \Or \vval{\ph_{2}}$.
\item $ \vval{\ph_1\Then\ph_2} 
= \vval{\ph_{1}} \Then \vval{\ph_{2}}$.
\item $ \vval{\ph_1\Iff\ph_2} 
= \vval{\ph_{1}} \Iff \vval{\ph_{2}}$.
\item $ \vval{(\forall x\in u)\, {\ph}(x)} 
= \Inf_{u'\in \dom(u)}
(u(u') \Then \vval{\ph(u')})$.
\item $ \vval{(\exists x\in u)\, {\ph}(x)} 
= \Sup_{u'\in \dom(u)}
(u(u') \And \vval{\ph(u')})$.
\item $ \vval{(\forall x)\, {\ph}(x)} 
= \Inf_{u\in \VL}\vval{\ph(u)}$.
\item $ \vval{(\exists x)\, {\ph}(x)} 
= \Sup_{u\in \VL}\vval{\ph(u)}$.
\eitem

We say that a statement ${\ph}$ of $ \cL(\in,\VL) $
{\em holds} in $\VL$ iff $ \val{{\ph}}_{\cQ} = 1$.
A formula in $\cL(\in)$ is called a {\em
$\De_{0}$-formula}  iff it has no unbounded quantifiers
$\forall x$ or $\exists x$.
The following theorem holds \cite{07TPQ}.

\sloppy
\begin{theorem}[$\De_{0}$-Absoluteness Principle]
\label{th:Absoluteness}
\sloppy  
For any $\De_{0}$-formula 
${\ph} (x_{1},{\ldots}, x_{n}) $ 
of $\cL(\in)$ and $u_{1},{\ldots}, u_{n}\in \VQ$, 
we have
\[
\val{\ph(u_{1},\ldots,u_{n})}_{\cQ}=
\val{\ph(u_{1},\ldots,u_{n})}_{\cQ(\cH)}.
\]
\end{theorem}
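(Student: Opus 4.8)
The plan is to run the two recursive definitions of the truth value in parallel --- the one that uses the lattice $\cQ$ and the one that uses $\cQ(\cH)$ --- and to check that they never part company as long as no unbounded quantifier intervenes. First, for the right-hand side of the asserted identity to make sense one needs $\VQ\subseteq\VQH$, which is a routine transfinite induction on rank, a function into $\cQ$ being in particular a function into $\cQ(\cH)$. The substantive input is this: since $\cQ$ is a sublogic of $\cQ(\cH)$, equivalently $\cQ=\cP(\cM)$ for a von Neumann algebra $\cM$ on $\cH$, the inclusion $\cQ\hookrightarrow\cQ(\cH)$ preserves the finite operations $\And$, $\Or$, $\perp$ (hence also the derived $\Then$ and $\Iff$) \emph{and} all infima and suprema: for every subset of $\cQ$, its $\Inf$ and $\Sup$ formed in $\cQ$ coincide with those formed in $\cQ(\cH)$. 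This is the standard fact that the projection lattice of a von Neumann algebra is a complete sublattice of $\cQ(\cH)$; it is the one point I expect to need genuine care, since mere completeness of $\cQ$ as a lattice would not by itself force $\cQ$-suprema to agree with $\cQ(\cH)$-suprema.

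With that in hand, the core of the argument is a transfinite induction, following the well-founded recursion that defines the atomic truth values, proving simultaneously that $\val{u\in v}_{\cQ}=\val{u\in v}_{\cQ(\cH)}$ and $\val{u=v}_{\cQ}=\val{u=v}_{\cQ(\cH)}$ for all $u,v\in\VQ$. Each of these is given by one and the same defining clause on either side: the clause for $\val{u=v}$ involves only the values $u(u'),v(v')\in\cQ$, the truth values $\val{u'\in v}$ and $\val{v'\in u}$ with $u'\in\dom(u)$, $v'\in\dom(v)$ (to which the induction hypothesis applies, since elements of a domain have strictly smaller rank), the operations $\Then$ and $\And$, and infima over $\dom(u)$ and $\dom(v)$; the clause for $\val{u\in v}$ involves only $v(v')\in\cQ$, the values $\val{u=v'}$ with $v'\in\dom(v)$, the operation $\And$, and a supremum over $\dom(v)$. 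As $\dom(u),\dom(v)\subseteq\VQ$, the arguments of these $\Inf$'s and $\Sup$'s form subsets of $\cQ$, and by the preservation properties above every operation and every extremum agrees in $\cQ$ and in $\cQ(\cH)$; hence the two computed truth values are equal.

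Finally I would induct on the build-up of the $\De_0$-formula $\ph$. The atomic cases $u_i\in u_j$ and $u_i=u_j$ are exactly what was just proved. For $\Not$, $\And$, $\Or$, $\Then$, $\Iff$, the truth value of $\ph$ is obtained by applying the matching lattice operation of the ambient logic to the truth values of the immediate subformulas; these agree by the induction hypothesis, and the operations are preserved, so the values for $\ph$ agree. For a bounded quantifier, say $(\exists x\in u_i)\,\psi(x)$, the value is $\Sup_{u'\in\dom(u_i)}\bigl(u_i(u')\And\val{\psi(u')}\bigr)$: by the induction hypothesis each $\val{\psi(u')}$ agrees in the two logics and therefore lies in $\cQ$, so each term $u_i(u')\And\val{\psi(u')}$ lies in $\cQ$, and the supremum over this subset of $\cQ$ is the same whether taken in $\cQ$ or in $\cQ(\cH)$; the universal quantifier is dual. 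The decisive point --- and the reason the $\De_0$ hypothesis is indispensable --- is that no clause for an \emph{unbounded} quantifier is ever invoked: such a clause would take an $\Inf$ or $\Sup$ over the entire universe, which is $\VQ$ on one side of the computation and the strictly larger $\VQH$ on the other, with no reason for the two extrema to coincide. Everything else is the bookkeeping of matching the recursive clauses, and I foresee no difficulty there beyond the care already flagged for the embedding $\cQ\hookrightarrow\cQ(\cH)$.
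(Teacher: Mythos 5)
Your argument is correct and is essentially the intended one: the paper itself states this theorem without proof, citing Ref.~\cite{07TPQ}, and the proof there proceeds exactly as you do --- a simultaneous rank induction for the atomic clauses followed by induction on the build-up of the $\De_0$-formula, resting on the fact that $\cQ=\cP(\cM)$ is a \emph{complete} sublattice of $\cQ(\cH)$ (suprema and infima of subsets of $\cQ$ computed in $\cQ(\cH)$ remain in $\cQ$ and agree with those computed in $\cQ$), which you correctly single out as the one point needing care, and with the $\De_0$ hypothesis used precisely to avoid extrema over the two different universes $\VQ$ and $\VQH$.
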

Henceforth, 
for any $\De_{0}$-formula 
${\ph} (x_{1},{\ldots}, x_{n}) $
and $u_1,\ldots,u_n\in\VQ$,
we abbreviate $\val{\ph(u_{1},\ldots,u_{n})}=
\val{\ph(u_{1},\ldots,u_{n})}_{\cQ}$,
which is the common $\cQ$-valued truth value 
in all $\VL$ such  that $u_{1},\ldots,u_{n}\in\VL$.

The universe $\V$  can be embedded in
$\VQ$ by the following operation 
$\vee:v\mapsto\check{v}$ 
defined by the $\in$-recursion: 
for each $v\in\V$, $\check{v} = \{\check{u}|\ u\in v\} 
\times \{1\}$. 
Then we have the following \cite{07TPQ}.
\begin{theorem}[$\De_0$-Elementary Equivalence Principle]
\label{th:2.3.2}
\sloppy  
For any $\De_{0}$-formula 
${\ph} (x_{1},{\ldots}, x_{n}) $ 
of $\cL(\in)$ and $u_{1},{\ldots}, u_{n}\in V$,
we have
$
\bracket{\V,\in}\models  {\ph}(u_{1},{\ldots},u_{n})
\mbox{ if and only if }
\val{\ph(\check{u}_{1},\ldots,\check{u}_{n})}=1.
$
\end{theorem}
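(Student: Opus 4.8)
The plan is to induct on the logical complexity of the $\De_{0}$-formula $\ph$, strengthening the statement with the auxiliary claim that $\val{\ph(\check{u}_{1},\ldots,\check{u}_{n})}\in\{0,1\}$ for all $u_{1},\ldots,u_{n}\in\V$. Granting this claim, every inductive step collapses to the two-element Boolean algebra $\{0,1\}\subseteq\cQ$: the operations $\And,\Or,{}^{\perp},\Then,\Iff$ restrict there to the classical truth tables, and arbitrary $\Sup$ and $\Inf$ of $\{0,1\}$-valued families stay in $\{0,1\}$ and compute the classical disjunction and conjunction of the corresponding conditions. No appeal to orthomodularity, nor to Theorem~\ref{th:Absoluteness}, is needed; the content is precisely that the check embedding never leaves $\{0,1\}$.

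\emph{Atomic formulas.} Since $\dom(\check{v})=\{\check{w}\mid w\in v\}$ and $\check{v}(\check{w})=1$, and since $1\Then P=P$ and $0\Then P=1$, the defining recursions reduce on check sets to $\val{\check{u}=\check{v}}=\Inf_{w\in u}\val{\check{w}\in\check{v}}\And\Inf_{w\in v}\val{\check{w}\in\check{u}}$ and $\val{\check{u}\in\check{v}}=\Sup_{w\in v}\val{\check{u}=\check{w}}$; in particular the value of $=$ is symmetric, $\val{\check{u}=\check{v}}=\val{\check{v}=\check{u}}$. I would prove by transfinite induction on $\rank(u)$ that, for every $v\in\V$: (a) $\val{\check{u}=\check{v}}$ is $1$ if $u=v$ and $0$ otherwise; (b) $\val{\check{u}\in\check{v}}$ is $1$ if $u\in v$ and $0$ otherwise; (c) $\val{\check{v}\in\check{u}}$ is $1$ if $v\in u$ and $0$ otherwise. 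Within the stage $\rank(u)=\al$ one establishes these in the acyclic order (c), (a), (b): (c) involves only $\val{\check{v}=\check{u}'}$ with $u'\in u$, hence is covered by the induction hypothesis via symmetry; (a) additionally involves $\val{\check{v}'\in\check{u}}$ with $v'\in v$, which is an instance of (c) already available at stage $\al$, and it concludes by Extensionality in $\V$; and (b) involves only $\val{\check{u}=\check{w}}$ with $w\in v$, which is an instance of (a). The ``if'' directions use a witness (e.g.\ $u\in v$ gives $\val{\check{u}\in\check{v}}\ge\val{\check{u}=\check{u}}=1$), and the ``only if'' directions use that an offending element, say $u'\in u\setminus v$, forces every term of $\val{\check{u}'\in\check{v}}$ to be $0$, whence $\val{\check{u}=\check{v}}\le\val{\check{u}'\in\check{v}}=0$.

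\emph{Connectives and bounded quantifiers.} The propositional step is immediate from the clauses for $\Not,\And,\Or,\Then,\Iff$ and the closure properties of $\{0,1\}$ recorded above. For a bounded universal quantifier, say $\ph\equiv(\forall x\in x_{i})\,\ps$, one has $\val{\ph(\check{u}_{1},\ldots,\check{u}_{n})}=\Inf_{w\in\dom(\check{u}_{i})}\bigl(\check{u}_{i}(w)\Then\val{\ps(w,\check{u}_{1},\ldots,\check{u}_{n})}\bigr)=\Inf_{w\in u_{i}}\val{\ps(\check{w},\check{u}_{1},\ldots,\check{u}_{n})}$, using $\check{u}_{i}(\check{w})=1$ and $1\Then P=P$; by the induction hypothesis each factor lies in $\{0,1\}$ and equals $1$ iff $\bracket{\V,\in}\models\ps(w,u_{1},\ldots,u_{n})$, so the infimum is $1$ iff $\bracket{\V,\in}\models(\forall x\in u_{i})\,\ps(x,u_{1},\ldots,u_{n})$. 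The bounded existential case is dual, with $\Sup$ replacing $\Inf$, which closes the induction.

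The only genuinely delicate point is the rank bookkeeping in the atomic case: one must verify that the three sub-assertions (a)--(c) really do untangle into an acyclic order within each stage, as indicated above. Everything else is routine, and nothing in the argument depends on $\cQ$ being a proper quantum logic rather than a Boolean algebra.
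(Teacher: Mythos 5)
Your proof is correct; the paper itself states this theorem without proof, citing Ref.~\cite{07TPQ}, and your argument is essentially the standard one used there: show by transfinite induction that $\val{\check{u}=\check{v}}$ and $\val{\check{u}\in\check{v}}$ are two-valued and mirror $=$ and $\in$ in $\V$, then propagate through connectives and bounded quantifiers, where the check names make $u(u')=1$ collapse the clauses to classical ones over $\{0,1\}$. The rank bookkeeping you flag as the delicate point is handled correctly by your ordering (c), (a), (b) within each stage, so nothing is missing.
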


For $u\in\VQ$, we define the {\em support} 
of $u$, denoted by $\L(u)$, by transfinite recursion on the 
rank of $u$ by the relation
\beq
\L(u)=\bigcup_{x\in\dom(u)}\L(x)\cup\{u(x)\mid x\in\dom(u)\}.
\eeq
For $\cA\subseteq\VQ$ we write 
$\L(\cA)=\bigcup_{u\in\cA}\L(u)$ and
for $u_1,\ldots,u_n\in\VQ$ we write 
$\L(u_1,\ldots,u_n)=\L(\{u_1,\ldots,u_n\})$.
Let $\cA\subseteq\VQ$.  The {\em commutator
of $\cA$}, denoted by $\com(\cA)$, is defined by 
\beq
\cuniv(\cA)=\com (\L(\cA)).
\eeq
For any $u_1,\ldots,u_n\in\VQ$, we write
$\cuniv(u_1,\ldots,u_n)=\cuniv(\{u_1,\ldots,u_n\})$.
For bounded theorems,  the following transfer 
principle holds \cite{07TPQ}.

\begin{theorem}[ZFC  Transfer Principle]
\label{th:TP}
For any $\De_{0}$-formula ${\ph} (x_{1},{\ldots}, x_{n})$ 
of $\cL(\in)$ and $u_{1},{\ldots}, u_{n}\in\VQ$, if 
${\ph} (x_{1},{\ldots}, x_{n})$ is provable in ZFC, then
we have
\[
\cuniv(u_{1},\ldots,u_{n})\le
\val{\ph({u}_{1},\ldots,{u}_{n})}.
\]
\end{theorem}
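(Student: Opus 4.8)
The plan is to reduce the claim to the classical Scott--Solovay theorem that every Boolean-valued universe is a model of ZFC, by exhibiting, below the projection $e:=\cuniv(u_1,\ldots,u_n)$, a Boolean ``shadow'' of the $\cQ$-valued universe on which the $\De_0$ truth values are faithfully reflected.

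First I would localize the problem. Put $\cA=\L(u_1,\ldots,u_n)\subseteq\cQ$ and $\cR=\cA^{!!}$, the sublogic of $\cQ$ generated by $\cA$. Then $\cR$ is again a logic on $\cH$, all values occurring recursively in the $u_i$ lie in $\cA\subseteq\cR$, so $u_1,\ldots,u_n\in\V^{(\cR)}$, and by the $\De_0$-Absoluteness Principle (Theorem~\ref{th:Absoluteness}) it suffices to prove $e\le\val{\ph(u_1,\ldots,u_n)}_{\cR}$. By definition $e=\com(\cA)$, and by Theorem~\ref{th:equivalence_com}(ii),(iv) we have $e=\Sup S(\cA)\in Z(\cA)$, so $e$ is a central projection of $\cR$; hence by Theorem~\ref{th:absoluteness_commutator} the interval $\cB:=[0,e]_{\cR}$ is a complete Boolean algebra, with greatest element $e$.

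Next, the central projection $e$ induces a complete homomorphism of logics $\pi:\cR\to\cB$, $\pi(P)=P\And e$: it preserves arbitrary meets and joins (Propositions~\ref{th:distributivity} and~\ref{th:logic}), it carries $P^{\perp}$ to $e\And P^{\perp}$, which is exactly the complement of $\pi(P)$ inside $\cB$, and consequently it preserves $\Then$ and $\Iff$ as well. The heart of the argument is a Lemma: $\pi$ lifts to a map $u\mapsto u^{(e)}$ from $\V^{(\cR)}$ into $\VB$, defined by $\in$-recursion via $\dom(u^{(e)})=\{x^{(e)}\mid x\in\dom(u)\}$ and $u^{(e)}(x^{(e)})=\Sup\{\pi(u(y))\mid y\in\dom(u),\ y^{(e)}=x^{(e)}\}$, such that for every $\De_{0}$-formula $\ph(x_1,\ldots,x_n)$ and all $u_1,\ldots,u_n\in\V^{(\cR)}$,
\[
\val{\ph(u_1^{(e)},\ldots,u_n^{(e)})}_{\cB}=\pi\bigl(\val{\ph(u_1,\ldots,u_n)}_{\cR}\bigr)=\val{\ph(u_1,\ldots,u_n)}_{\cR}\And e .
\]
This is proved by induction on the ranks of the arguments for the atomic cases $u_1\in u_2$ and $u_1=u_2$, and then on the construction of $\ph$ for the propositional connectives and the bounded quantifiers, at each stage using only that $\pi$ commutes with $\And$, $\Or$, ${}^{\perp}$, $\Then$, $\Inf$, $\Sup$. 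The one delicate point is matching the supremum defining $u^{(e)}(x^{(e)})$ against the recursive clauses for $\val{\cdot}_{\cB}$ when distinct elements of $\dom(u)$ collapse under $(\cdot)^{(e)}$, but this is precisely the bookkeeping familiar from restriction maps between Boolean-valued models, and it involves no new idea.

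With the Lemma in hand the proof finishes quickly. Since $\cB$ is a complete Boolean algebra, $\VB$ is a Boolean-valued model of ZFC, so if $\ph(x_1,\ldots,x_n)$ is provable in ZFC then $\val{\ph(u_1^{(e)},\ldots,u_n^{(e)})}_{\cB}=1_{\cB}=e$; combining this with the Lemma gives $\val{\ph(u_1,\ldots,u_n)}_{\cR}\And e=e$, i.e.\ $e\le\val{\ph(u_1,\ldots,u_n)}_{\cR}$, and $\De_{0}$-absoluteness returns $\cuniv(u_1,\ldots,u_n)=e\le\val{\ph(u_1,\ldots,u_n)}$. I expect the Lemma — in particular the well-definedness of $u\mapsto u^{(e)}$ and the transfer of the atomic truth values through $\pi$ — to be the only real obstacle; everything else is assembling tools already available, namely the central-projection splitting of $\cR$ from Theorems~\ref{th:equivalence_com}--\ref{th:absoluteness_commutator}, the $\De_{0}$-Absoluteness Principle, and the classical metatheory of Boolean-valued universes.
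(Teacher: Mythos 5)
The paper itself contains no proof of Theorem~\ref{th:TP}; it quotes the result from \Cite{07TPQ}, so there is no internal argument to compare against, and judged on its own merits your proof is correct. Moreover it is essentially the strategy of the cited proof: localize below $e=\com(\cA)$ with $\cA=\L(u_1,\ldots,u_n)$, use that $e$ is central in $\cR=\cA^{!!}$ so that $[0,e]_{\cR}$ is a complete Boolean algebra (Theorems~\ref{th:equivalence_com} and~\ref{th:absoluteness_commutator}), transport $\De_0$ truth values along $P\mapsto P\And e$ into the Boolean-valued universe, and invoke the Scott--Solovay theorem together with $\De_0$-absoluteness; the ``delicate point'' you flag (collapsing fibers in $\dom(u^{(e)})$) is indeed only routine bookkeeping, handled by the infinite distributive laws in the complete Boolean algebra and the fact that $P\mapsto P\And e$ preserves arbitrary sups, infs, relative complements, and hence the Sasaki arrow.
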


\section{Real numbers in quantum set theory}
\label{se:RN}

Let $\Q$ be the set of rational numbers in $V$.
We define the set of rational numbers in the model $\VQ$
to be $\check{\Q}$.
We define a real number in the model by a Dedekind cut
of the rational numbers. More precisely, we identify
a real number with the upper segment of a Dedekind cut
assuming that the lower segment has no end point.
Therefore, the formal definition of  the predicate $\R(x)$, 
``$x$ is a real number,'' is expressed by
\beqa
\R(x)&:=&
\forall y\in x(y\in\check{\Q})
\And \exists y\in\check{\Q}(y\in
x)
\And \exists y\in\check{\Q}(y\not\in x)\nn\\
& &  \And
\forall y\in\check{\Q}(y\in x\Iff\forall z\in\check{\Q}
(y<z \Then z\in x)).
\eeqa
The symbol ``:='' is used to define a new formula, here and hereafter.
We define $\R^{(\cQ)}$ to be the interpretation of 
the set $\R$ of real
numbers in $\VQ$ as follows.
\beq
\R^{(\cQ)} = \{u\in\VQ|\ \cD(u)=\cD(\check{\Q})
\ \mb{and }\val{\R(u)}=1\}.
\eeq
The set $\R_\cQ$ of real numbers in $\VQ$ is defined by
\beq
\R_\cQ=\R^{(\cQ)}\times\{1\}.
\eeq
Then, for any $u,v\in\R^{(\cQ)}$,
the following relations hold in $\VQ$ \cite{07TPQ}.
\bitem
\item $\val{(\forall u\in\R_\cQ) u=u}=1$.
\item $\val{(\forall u,v\in\R_\cQ) u=v\Then v=u}=1$.
\item $\val{(\forall u,v,w\in\R_\cQ) u=v\And v=w\Then u=w}=1$.
\item $\val{(\forall v\in\R_\cQ)(\forall x,y\in v)x=y\And x\in v\Then
y\in v}$.
\item $\val{(\forall u,v\in\R_\cQ)(\forall x\in u)x\in u\And u=v\Then x\in
v}.$
\eitem
From the above, 
the equality is an equivalence relation between real numbers in $\VQ$.
For any $u_1,\ldots,u_n\in\R^{(\cQ)}$, we have
\beq
\val{u_1=u_2\And\cdots\And u_{n-1}=u_n}\le 
\cuniv(u_1,\ldots,u_n),
\eeq
and hence commutativity follows from equality in $\R^{(\cQ)}$ \cite{07TPQ}.

Let $\cM$ be a von Neumann algebra on a Hilbert
space $\cH$ and let $\cQ=\cP(\cM)$.
A closed operator $A$ (densely defined) on $\cH$ is
said to be {\em affiliated} with $\cM$, in symbols $A\,\et\,\cM$, 
iff $U^{*}AU=A$ for any unitary operator $U\in\cM'$.
Let $A$ be a self-adjoint operator (densely defined) on $\cH$
and let $A=\int_{\R} \la\, dE^A(\la)$
be its spectral decomposition, where $\{E^{A}(\la)\}_{\la\in\R}$ 
is the resolution of  identity belonging to $A$ (\Cite{vN55}, p.\ 119).
It is well-known that $A\af\cM$ if and only if $E^A({\la})\in\cQ$ 
for every $\la\in\R$.
Denote by $\overline{\cM}_{SA}$ the set of self-adjoint operators 
affiliated with $\cM$.
Two self-adjoint operators $A$ and $B$ are said to {\em commute},
in symbols $A\commutes B$,
iff $E^A(\la)\commutes E^B(\la')$ for every pair 
$\la,\la'$ of reals.

For any $u\in\R^{(\cQ)}$ and $\la\in\R$, we define $E^{u}(\la)$ by 
\beq
E^{u}(\la)=\Inf_{\la<r\in\Q}u(\check{r}).
\eeq
Then, it can be shown that
$\{E^u(\la)\}_{\la\in\R}$ is a resolution of
identity in $\cQ$ and hence by the spectral theorem there
is a self-adjoint operator $\hat{u}\af\cM$ uniquely
satisfying $\hat{u}=\int_{\R}\la\, dE^u(\la)$.  On the other
hand, let $A\af\cM$ be a self-adjoint operator. We define $\tilde{A}\in\VQ$ by
\beq
\tilde{A}=\{(\check{r},E^{A}(r))\mid r\in\Q\}.
\eeq
Then, $\dom(\tA)=\dom(\Q)$ and $\tA(\check{r})=E^{A}(r)$ for all $r\in\Q$.
It is easy to see that $\tilde{A}\in\RQ$ and we have
$(\hat{u})\tilde{}=u$ for all $u\in\RQ$ and $(\tilde{A})\hat{}=A$
for all $A\in\overline{\cM}_{SA}$.
\cut{
\[
\dom(\tilde{A})=\dom(\check{\Q})\mbox{ and }
\tilde{A}(\check{r})=E^{A}(r)\mbox{ for all }r\in\Q.
\]
Then, it is easy to see that $\tilde{A}\in\RQ$ and we have
$(\hat{u})\tilde{}=u$ for all $u\in\RQ$ and $(\tilde{A})\hat{}=A$
for all $A\in\overline{\cM}_{SA}$.
}
Therefore, the correspondence
between $\RQ$ and $\overline{\cM}_{SA}$ is a one-to-one correspondence.
We call the above correspondence the {\em Takeuti correspondence}.
Now, we have the following \cite{07TPQ}.

\begin{theorem}
Let $\cQ$ be a logic on $\cH$.  The relations 
\bitem
\item ${\displaystyle E^{A}(\la)=\Inf_{\la<r\in\Q}u(\check{r})}$ for all $\la\in\Q$,
\item $u(\check{r})=E^{A}(r)$ for all $r\in\Q$,
\eitem
for all $u=\tA\in\RQ$ and $A=\hu\in \overline{\cM}_{SA}$ 
sets up a one-to-one correspondence between $\RQ$ and $ \overline{\cM}_{SA}$.
\end{theorem}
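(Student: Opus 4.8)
The plan is to establish the four facts already outlined in the discussion preceding the statement: that $E^{u}(\la)=\Inf_{\la<r\in\Q}u(\check{r})$ defines a resolution of identity in $\cQ$ for every $u\in\RQ$; that $\tA=\{(\check{r},E^{A}(r))\mid r\in\Q\}$ lies in $\RQ$ for every $A\in\OMSA$; and that the assignments $u\mapsto\hu$ and $A\mapsto\tA$ are mutually inverse, hence bijections between $\RQ$ and $\OMSA$ implementing the two displayed relations.

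First I would unwind the condition $\val{\R(u)}=1$, for $u$ with $\dom(u)=\dom(\check{\Q})$, into purely lattice-theoretic conditions on the family $\{u(\check{r})\}_{r\in\Q}$. Applying the truth-value rules for bounded quantifiers, and using the $\De_{0}$-Elementary Equivalence Principle (Theorem~\ref{th:2.3.2}) to evaluate the atomic truth values $\val{\check{r}=\check{s}}$ and $\val{\check{r}<\check{s}}$, which are $1$ or $0$ according as the corresponding relation holds in $\V$, one obtains that $\val{\R(u)}=1$ is equivalent to the conjunction of: (a) $u(\check{r})=\Inf_{r<s\in\Q}u(\check{s})$ for all $r\in\Q$; (b) $\Sup_{r\in\Q}u(\check{r})=1$; and (c) $\Inf_{r\in\Q}u(\check{r})=0$. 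Condition (a) also yields monotonicity, since $r<r'$ forces $u(\check{r})\le u(\check{r'})$. The point requiring care is the last conjunct of $\R(x)$, the biconditional $y\in x\Iff(\forall z\in\check{\Q})(y<z\Then z\in x)$: this is exactly what encodes right-continuity of the upper Dedekind segment, and one must verify that the logical equivalence in $\cQ$ collapses to the genuine equality in (a), not merely to an inequality.

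Given (a)--(c) it is then routine that $E^{u}(\la)=\Inf_{\la<r\in\Q}u(\check{r})$, $\la\in\R$, is monotone non-decreasing, right-continuous in $\la$, tends to $0$ as $\la\to-\infty$ and to $1$ as $\la\to+\infty$, and takes values in $\cQ$; so by the spectral theorem there is a unique self-adjoint $\hu$ with $\hu=\int_{\R}\la\,dE^{u}(\la)$, and since every $E^{u}(\la)\in\cQ=\cP(\cM)$ we get $\hu\af\cM$, i.e. $\hu\in\OMSA$. Conversely, for $A\in\OMSA$ with resolution of identity $\{E^{A}(\la)\}_{\la\in\R}$, the standard properties of $E^{A}$---right-continuity, the two limits, and $E^{A}(\la)\in\cP(\cM)=\cQ$---are precisely (a)--(c) for the object $\tA$ determined by $\dom(\tA)=\dom(\check{\Q})$ and $\tA(\check{r})=E^{A}(r)$; hence $\val{\R(\tA)}=1$ and $\tA\in\RQ$.

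Finally, to see the two maps are mutually inverse: the resolution of identity belonging to $\hu$ is $\{E^{u}(\la)\}$ by construction, so $(\hu)\tilde{}$ has value $E^{u}(r)=\Inf_{r<s\in\Q}u(\check{s})=u(\check{r})$ at $\check{r}$ by (a), whence $(\hu)\tilde{}=u$; and the resolution of identity belonging to $(\tA)\hat{}$ is $E^{\tA}(\la)=\Inf_{\la<r\in\Q}E^{A}(r)=E^{A}(\la)$ by right-continuity of $E^{A}$, so $(\tA)\hat{}$ and $A$ share the same spectral decomposition and coincide by its uniqueness. I expect the computation of $\val{\R(u)}$ from the truth-value clauses, and in particular the bookkeeping around the biconditional, to be the main obstacle; the rest is an application of the spectral theorem together with the principles already recorded above.
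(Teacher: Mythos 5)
Your proposal is correct and takes essentially the same route as the paper, which states this theorem without proof (citing Ref.~\cite{07TPQ}) after sketching exactly the construction you describe: unwind $\val{\R(u)}=1$ into the lattice conditions (a)--(c) on $\{u(\check{r})\}_{r\in\Q}$ via the truth-value clauses and the $\De_0$-Elementary Equivalence Principle, invoke the spectral theorem to obtain $\hu\af\cM$, and check that $u\mapsto\hu$ and $A\mapsto\tA$ are mutually inverse using right-continuity of the spectral family. The one step you flag but do not carry out, that $\val{P\Iff Q}=1$ forces $P=Q$, is indeed the right point to check and is a one-line consequence of the fact that $P\Then Q=P^{\perp}\Or(P\And Q)=1$ implies $P=P\And(P^{\perp}\Or(P\And Q))=P\And Q\le Q$ by the distributivity available for commuting elements (Proposition~\ref{th:distributivity}).
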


For any $r\in\R$, we shall write $\tilde{r}=(r1)\,\tilde{}$, where $r1$ is the scalar 
operator on $\cH$.
Then, we have $\dom(\tilde{r})=\dom(\check{\Q})$ and $\tilde{r}(\check{t})=
\val{\check{r}\le \check{t}}$, so that we have $\L(\tilde{r})=\{0,1\}$.
Denote by $\cB(\R^n)$ the $\si$-filed of Borel subsets of $\R^n$ and $B(\R^n)$ the space
of bounded Borel functions on $\R^n$.
A {\em spectral measure} \cite{Hal51} on $\R^{n}$ in $\cM$ is a mapping $E$ 
of $\cB(\R^{n})$ into $\cP(\cM)$ satisfying $\sum_{j}E(\De_i)=1$ for any disjoint sequence  
$\{\De_j\}$ in $\cB(\R^{n})$ such that $\bigcup_{j}\De_j=\R^{n}$.  
Let $X$ be a self-adjoint operator affiliated with $\cM$.
For any $f\in B(\R)$, the bounded self-adjoint operator $f(X)\in\cM$ is
defined by $f(X)=\int_{\R}f(\la) dE^{X}(\la)$.
The {\em spectral measure of $X$} is a spectral measure $E^{A}$ on $\R$ in $\cM$ 
defined by $E^{X}(\De)=\ch_{\De}(X)$ for any $\De\in\cB(\R)$.
Then, we have $E^{X}(\la)=E^{X}((-\infty,\la])$.  

\begin{proposition}\label{th:QBorel}
Let $r\in\R$, $s,t\in\R$, and $X\af\cM_{SA}$.
We have the following relations.
\bitem
\item $\val{\check{r}\in\tilde{s}}=\val{\check{s}\le \check{r}}
=E^{s1}(t)$.
\item $\val{\tilde{s}\le\tilde{t}}=\val{\check{s}\le\check{t}}
=E^{s1}(t)$.
\item $\val{\tilde{X}\le\tilde{t}}=E^{X}(t)=E^{X}((-\infty,t])$.
\item $\val{\tilde{t}<\tilde{X}}=1-E^{X}(t)=E^{X}((t,\infty))$.
\item$\val{\tilde{s}<\tilde{X}\le \tilde{t}}=E^X({t})-E^X({s})=
E^{X}((s,t])$.
\item $\val{\tilde{X}=\tilde{t}}
=E^{X}(t)-\Sup_{r<t,r\in\Q} E^{X}(r)=E^{X}(\{t\})$.
\eitem
\end{proposition}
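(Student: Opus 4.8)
The plan is to reduce each of the six identities to the recursive clauses defining $\val{u\in v}$ and $\val{u=v}$ in $\VQ$, to the $\De_0$-Elementary Equivalence Principle (Theorem~\ref{th:2.3.2}) applied to checked rationals, and to the standard right-continuity and left-limit properties of resolutions of identity. The single computation underlying everything is a base case: if $u\in\RQ$, so that $\dom(u)=\dom(\check{\Q})=\{\check{q}\mid q\in\Q\}$, then for every $q\in\Q$,
\[
\val{\check{q}\in u}=\Sup_{r\in\Q}\bigl(u(\check{r})\And\val{\check{q}=\check{r}}\bigr)=u(\check{q}),
\]
because Theorem~\ref{th:2.3.2} gives $\val{\check{q}=\check{r}}=1$ when $q=r$ and $\val{\check{q}=\check{r}}=0$ otherwise, so the supremum collapses to the term $r=q$. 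In particular $\val{\check{q}\in\tilde{X}}=E^X(q)$ and $\val{\check{q}\in\tilde{s}}=\tilde{s}(\check{q})=E^{s1}(q)$ for $q\in\Q$, where $E^{s1}(q)=E^{s1}((-\infty,q])$ equals $1$ if $q\ge s$ and $0$ otherwise.

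Items (i) and (ii) then follow immediately. For (i) (the relevant case being $r\in\Q$), the base case gives $\val{\check{r}\in\tilde{s}}=\tilde{s}(\check{r})=E^{s1}(r)$, and $E^{s1}(r)=\val{\check{s}\le\check{r}}$ because the latter is $1$ precisely when $s\le r$, again by Theorem~\ref{th:2.3.2}. For (ii) I would use the standard $\De_0$ rendering of $u\le v$ for upper Dedekind segments, namely $\forall x\in v\,(x\in u)$, so that
\[
\val{\tilde{s}\le\tilde{t}}=\Inf_{q\in\Q}\bigl(\tilde{t}(\check{q})\Then\val{\check{q}\in\tilde{s}}\bigr)=\Inf_{q\in\Q}\bigl(E^{t1}(q)\Then E^{s1}(q)\bigr),
\]
and evaluating the scalar spectral projections shows this is $1$ exactly when $s\le t$, i.e.\ it equals $\val{\check{s}\le\check{t}}=E^{s1}(t)$.

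For (iii)--(v) the same description of $\le$ yields
\[
\val{\tilde{X}\le\tilde{t}}=\Inf_{q\in\Q}\bigl(\tilde{t}(\check{q})\Then\val{\check{q}\in\tilde{X}}\bigr)=\Inf_{q\in\Q}\bigl(E^{t1}(q)\Then E^X(q)\bigr)=\Inf_{q\in\Q,\,q\ge t}E^X(q),
\]
since the implication is $1$ when $q<t$ (then $E^{t1}(q)=0$) and is $E^X(q)$ when $q\ge t$. By right-continuity of $\{E^X(\la)\}_{\la}$ (the rationals $q\ge t$ decrease to $t$) this infimum is $E^X(t)=E^X((-\infty,t])$, which is (iii); and (iv) follows in the same way from the $\De_0$ definition of the strict order (equivalently $\val{\tilde{t}<\tilde{X}}=\val{\tilde{X}\le\tilde{t}}^{\perp}$), giving $1-E^X(t)=E^X((t,\infty))$. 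For (v) I would compute dually $\val{\tilde{s}\le\tilde{X}}=\Inf_{q\in\Q}\bigl(E^X(q)\Then E^{s1}(q)\bigr)=\Inf_{q\in\Q,\,q<s}(1-E^X(q))=1-\Sup_{q\in\Q,\,q<s}E^X(q)=E^X([s,\infty))$, hence $\val{\tilde{s}<\tilde{X}}=E^X((s,\infty))=1-E^X(s)$; since all the projections $E^X(q)$ commute, the meet in $\val{\tilde{s}<\tilde{X}\le\tilde{t}}=E^X((s,\infty))\And E^X((-\infty,t])$ is $E^X\bigl((s,\infty)\cap(-\infty,t]\bigr)=E^X((s,t])=E^X(t)-E^X(s)$.

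Finally, for (vi) I would expand $\val{\tilde{X}=\tilde{t}}$ directly from the two-sided definition of the truth value of equality. The first conjunct is $\Inf_{q\in\Q}\bigl(\tilde{X}(\check{q})\Then\val{\check{q}\in\tilde{t}}\bigr)=\Inf_{q\in\Q}\bigl(E^X(q)\Then E^{t1}(q)\bigr)=\Inf_{q\in\Q,\,q<t}(1-E^X(q))=1-E^X((-\infty,t))=E^X([t,\infty))$, and the second conjunct is $\Inf_{q\in\Q}\bigl(\tilde{t}(\check{q})\Then\val{\check{q}\in\tilde{X}}\bigr)=\Inf_{q\in\Q,\,q\ge t}E^X(q)=E^X(t)$ exactly as in (iii). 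Meeting these two commuting projections gives $E^X([t,\infty))\And E^X((-\infty,t])=E^X(\{t\})=E^X(t)-\Sup_{r<t,\,r\in\Q}E^X(r)$. The only genuine work is the order-relation bookkeeping and the two limiting steps for the resolution of identity (right-continuity $\Inf_{q\ge t}E^X(q)=E^X(t)$ and the left limit $\Sup_{q<t}E^X(q)=E^X((-\infty,t))$); no properly orthomodular subtlety intervenes, because every projection in these computations lies in the commutative von Neumann algebra generated by $X$ (or is a scalar), so all the lattice operations reduce to Boolean ones on a commuting family.
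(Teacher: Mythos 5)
Your proposal is correct, but it follows a genuinely more self-contained route than the paper for items (i)--(v). The paper simply cites Proposition 5.11 of Ref.~\cite{07TPQ} for (i)--(iii), obtains (iv) from (iii) by noting $\com(\tilde{t},\tX)=1$ and invoking the ZFC Transfer Principle (so that the classically provable equivalence $t<X\Iff\Not(X\le t)$ carries over with truth value $1$), and gets (v) from (iii) and (iv); only (vi) is computed directly, and there your calculation coincides with the paper's almost line by line. You instead prove (i)--(v) from scratch: the base identity $\val{\check{q}\in u}=u(\check{q})$ via the $\De_0$-Elementary Equivalence Principle, the rendering of $u\le v$ as $\forall x\in v\,(x\in u)$ for upper Dedekind segments, and right-continuity/left limits of the resolution of identity, with all meets Boolean because everything lies in the abelian algebra generated by $X$ and the scalars. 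What the paper's route buys is independence from the particular $\De_0$ formula chosen for the strict order: since the commutator is $1$, any classically equivalent rendering of $<$ yields the same truth value by transfer, whereas your argument commits to a specific definition (your parenthetical $\val{\tilde{t}<\tX}=\val{\tX\le\tilde{t}}^{\perp}$, which must match the convention of Ref.~\cite{07TPQ}); in particular the step ``$\val{\tilde{s}\le\tX}=E^{X}([s,\infty))$, hence $\val{\tilde{s}<\tX}=E^{X}((s,\infty))$'' is not a consequence of the displayed computation but of that definitional choice, and should be stated as such. What your route buys is an explicit verification, including the two limiting facts $\Inf_{q\ge t}E^{X}(q)=E^{X}(t)$ and $\Sup_{q<t}E^{X}(q)=E^{X}((-\infty,t))$, that the paper leaves to the cited reference; your reading of (i) (relevant case $r\in\Q$, with $E^{s1}(r)$ in place of the misprinted $E^{s1}(t)$) is also the correct interpretation of the statement.
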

\begin{proof}
Relations (i), (ii), and (iii) follows from \cite[Proposition 5.11]{07TPQ}.
We have $\com(\tilde{t},\tilde{X})=1$, so that (iv) follows from the 
ZFC Transfer Principle (Theorem \ref{th:TP}).
Relation (v) follows from (iii) and (iv).
We have
\beqas
\val{\tilde{X}=\tilde{t}}&=&
\Inf_{r\in\Q}\tilde{X}(\check{r})\Then
\val{\check{r}\in\tilde{t}}\And
\Inf_{r\in\Q}\tilde{t}(\check{r})\Then
\val{\check{r}\in\tilde{X}}\\
&=&
\Inf_{r\in\Q}
E^{X}(r)^{\perp}\Or E^{t1}(r)
\And
\Inf_{r\in\Q}E^{t1}(r)^\perp
\Or E^{X}(r)\\
&=&
\Inf_{r<t, r\in\Q}
E^{X}(r)^{\perp}
\And
\Inf_{t\le r\in\Q} E^{X}(r)\\
&=&
[1-\Sup_{r<t, r\in Q}E^{X}(r)]
\And E^{X}(t)\\
&=&
E^{X}(t)- \Sup_{r<t,r\in\Q} E^{X}(r)\\
&=&
E^{X}(\{t\}).
\eeqas
Thus, relation (vi) follows.
\end{proof}

\section{Standard probabilistic interpretation of quantum theory} 
\label{se:2}

Let $\bS$ be a quantum system described by a von Neumann algebra $\cM$
on a Hilbert space $\cH$.
According to the standard formulation of quantum theory,
the {\em observables} of $\bS$ are defined as self-adjoint operators 
affiliated with $\cM$,
the {\em states} of $\bS$ are represented by density operators on $\cH$, 
and a {\em vector state} $\psi$ is identified with the state $\ketbra{\psi}$.
We denote by $\cO(\cM)$ the set of observables,
by $\cS(\cH)$ the space of density operators.
Observables $X_1,\ldots,X_n\in\cO(\cM)$ are said to be {\em mutually commuting} iff
$X_j\commutes X_k$ for all $j,k=1,\ldots,n$.
If $X_1,\ldots,X_n\in\cO(\cM)$ are bounded, this condition is equivalent to 
$[X_j,X_k]=0$ for all $j,k=1,\ldots,n$.
The standard probabilistic interpretation of quantum theory defines the 
{\em joint probability distribution function} $F^{X_1,\ldots,X_n}_{\rh}(x_1,\ldots,x_n)$
for mutually commuting observables $X_1,\ldots,X_n\in\cO(\cM)$ in 
$\rh\in\cS(\cH)$ by the {\em Born statistical formula}:
\beq
F^{X_1,\ldots,X_n}_{\rh}(x_1,\ldots,x_n)=\Tr[E^{X_1}(x_1)\cdots E^{X_n}(x_n)\rh].
\eeq

To clarify the logical structure presupposed in the standard probabilistic interpretation,
we define {\em observational propositions} for $\bS$ 
by the following rules.
\bitem
\item[(R1)] For any $X\in\cO(\cM)$ and $x\in \R$, the expression
$X\leo x$ is an observational proposition.
\item[(R2)] If $\ph_1$ and $\ph_2$ are observational propositions,
$\Not \ph_1$ and $\ph_1\And \ph_2$ are also observational propositions.
\eitem
Thus, every observational proposition is built up from ``atomic'' 
observational propositions $X\leo x$ by adding finite number of
connectives $\Not$ and $\And$.
We denote by $\Lo(\cM)$ the set of observational propositions.
We introduce the connective $\Or$ by definition.
\bitem
\item[(D1)] $\ph_1\Or\ph_2:= \Not(\Not \ph_1\And\Not\ph_2)$.
\eitem

For each observational proposition $\ph$, we assign its 
projection-valued truth value $\valo{\ph}\in\cQ(\cH)$ 
by the following rules \cite{BvN36}.
\bitem
\item[(T1)] $\valo{X\leo x}=E^{X}(x).$
\item[(T2)] $\valo{\Not \ph}=\valo{\ph}^{\perp}.$
\item[(T3)] $\valo{\ph_1\And \ph_2}=\valo{\ph_1}\And\valo{\ph_2}.$
\eitem
From (D1),  (T2) and (T3), we have
\bitem
\item[(D2)] $\valo{\ph_1\Or \ph_2}=\valo{\ph_1}\Or\valo{\ph_2}.$
\eitem

We define the {\em probability} $\Pr\{\ph\|\rh\}$ 
of an observational proposition $\ph$ in a state $\rh$ by
\bitem
\item[(P1)] $\Pr\{\ph\|\rh\}=\Tr[\valo{\ph}\rh]$.
\eitem
We say that {\em an observational proposition $\ph$ holds in a state $\rh$} iff
$\Pr\{\ph\|\rh\}=1$.

The standard interpretation of quantum theory restricts observational propositions
to be standard defined as follows.
\bitem 
\item[(W1)] An observational proposition including atomic formulas
$X_1\leo x_1, \ldots,X_n\leo x_n$ is called {\em standard} iff $X_1,\ldots,X_n$
are mutually commuting.   
\eitem 

All the standard observational propositions including only given mutually 
commuting observables 
$X_1, \ldots,X_n$ comprise a complete Boolean algebra under the logical order $\le$ defined by
$\ph\le \ph'$ iff $\valo{\ph}\le \valo{\ph'}$ and obey inference rules in classical logic.
Suppose that $X_1,\ldots,X_n\in\cO(\cM)$ are mutually commuting.
Let $x_1,\ldots,x_n\in\R$.
Then, 
$X_1\leo x_1\And\cdots\And X_n\leo x_n$ is a standard observational proposition.
We have
\beq
\valo{ X_1\leo x_1\And\cdots\And X_n\leo x_n}
=E^{X_1}(x_1)\And\cdots\And E^{X_n}(x_n)
=E^{X_1}(x_1)\cdots E^{X_n}(x_n).
\eeq
Hence, we reproduce the Born statistical formula as 
\beq
\Pr\{X_1\leo x_1\And\cdots\And X_n\leo x_n\|\rh\}
=\Tr[E^{X_1}(x_1)\cdots E^{X_n}(x_n)\rh].
\eeq
From the above, our definition of the truth values of observational propositions 
are consistent with the standard probabilistic interpretation of quantum theory. 

From Proposition \ref{th:QBorel} and (T1), we conclude 
\beq
\val{\tX\le\tilde{x}}=\valo{X\leo x}
\eeq
for all $X\in\cO(\cM)$ and $x\in\R$.
To every observational proposition $\ph$ the corresponding statement 
$\tilde{\ph}$ in $\cL(\in,\RQ)$ is given by the following rules for any 
$X\in\cO(\cM)$ and $x\in\R$, 
and observational propositions $\ph,\ph_1,\ph_2$.
\bitem
\item[(Q1)] $\displaystyle\widetilde{X\leo x}:=\tilde{X}\le\tilde{x}$.
\item[(Q2)] $\widetilde{\Not \ph}:=\Not\tilde{\ph}.$
\item[(Q3)] $\widetilde{\ph_1\And \ph_2}:={\tilde\ph_1}\And\tilde{\ph_2}.$
\eitem
Then, it is easy to see that the relation
\beq
\val{\tilde{\ph}}=\valo{\ph}
\eeq
holds for any observational proposition $\ph$.  
Thus, all the observational propositions are embedded in the set of 
statements in $\cL(\in,\RQ)$ with the same projection-valued truth value.

We denote by $\Sp(X)$ the spectrum of an observable $X\in\cO(\cM)$,
i.e., the set of all $\la\in\R$ such that $X-\la 1$ has a bounded inverse operator
on $\cH$.
An observable $X\in\cO(\cM)$ is called {\em finite} iff 
$\Sp(X)$ is a finite set, and {\em infinite} otherwise.
Denote by  $\cO_{\om}(\cM)$ is the set of finite observables in $\cO(\cM)$.

Let $X\in\cO_{\om}(\cM)$.  Then, $\Sp(X)$ coincides with the set of
eigenvalues of $X$.
Let 
\beq
\de(X)=\min_{x,y\in\Sp(X), x\not= y}\{|x-y|/2, 1\}.
\eeq
For any $x\in\R$,  we define the observational proposition $X=_{o}x$ by
\beq
X=_{o}x:=x-\de(X)<X\leo x+\de(X).
\eeq
Then, it is easy to see that we have  
\beq
\valo{X=_{o}x}=E^{X}(\{x\})
\eeq
for all $x\in\R$.

In Ref.~\cite{11QRM} we have introduced observational propositions 
for the case where $\dim(\cH)<\infty$ and $\cM=\cB(\cH)$ by 
rules (R'1), (R'2) of well-formed formulas and rules (T'1)--(T'3) for
projection-valued truth value assignment as follows.
\bitem
\item[(R'1)] For any $X\in\cO(\cB(\cH))$ and $x\in \R$, the expression
$X=_{o'} x$ is an observational proposition.
\item[(R'2)] If $\ph_1$ and $\ph_2$ are observational propositions,
$\Not \ph_1$ and $\ph_1\And \ph_2$ are also observational propositions.
\item[(T'1)] $\valoo{X=_{o'} x}=E^{X}(x).$
\item[(T'2)] $\valoo{\Not \ph}=\valoo{\ph}^{\perp}.$
\item[(T'3)] $\valoo{\ph_1\And \ph_2}
=\valoo{\ph_1}\And\valoo{\ph_2}.$
\eitem
Denote by $\Loo(\cB(\cH))$ the set of observational propositions
constructed by rules (R'1) and (R'2).
In this language, for any observables $X\in\cO(\cB(\cH))$ 
and any real number $x\in \R$, we can introduce the observational proposition 
$X\leoo x$ in $\Loo(\cB(\cH))$ by
\beq
X\leoo x:=\Sup_{x_j\in\Sp(X)\cap(-\infty, x]}X=_{o'}x_j,
\eeq
where the observational proposition $\Sup_{j}\ph_j$ is defined by
$\Sup_{j}\ph_j=\ph_1\Or\cdots\Or\ph_n$
for any finite sequence of observational propositions $\ph_1,\ldots,\ph_n$.
Then, we have
\beq	
\valoo{X\leoo x}=E^{X}(x).
\eeq
Now, we can conclude that if $\dim(\cH)<\infty$, the language $\Lo(\cB(\cH))$ and
$\Loo(\cB(\cH))$ are equivalent in the sense that there is a one-to-one 
correspondence $\Ph$ of $\Loo(\cB(\cH))$ onto $\Lo(\cB(\cH))$ 
such that $\valo{\Ph(\ph)}=\valoo{\ph}$, 
$\Ph(X=_{o'}x)=(X=_{o}x)$, and $\Ph(X\leoo x)=(X\leo x)$
for all $\ph\in\Loo(\cB(\cH))$, $X\in\cO(\cB(\cH))$, and $x\in\R$.
Thus, in what follows for the case where $\dim(\cH)<\infty$ we shall identify
the language $\Loo(\cB(\cH))$ introduce in Ref.~\cite{11QRM} 
with the language $\Lo(\cB(\cH))$; in this case we have $
\cO(\cB(\cH))=\cO_{\om}(\cB(\cH))$. 

\section{Simultaneous determinateness}
\label{se:3}

In this section, we shall examine basic properties of the commutator 
$\com(\tX_1,\ldots,\tX_n)$ for observables $X_1,\ldots,X_n\in\cO(\cM)$.
Let $X_1,\ldots,X_n\in\cO(\cM)$.
We denoted by
$\{X_1,\ldots,X_n\}''$  the von Neumann algebra generated by projections 
$E^{X_{j}}(\la)$ for all $j=1,\ldots,n$ and $\la\in\R$,
and denote by $\cZ(X_1,\ldots,X_n)$ the center of $\{X_1,\ldots,X_n\}''$,
i.e., $\cZ(X_1,\ldots,X_n)=\{X_1,\ldots,X_n\}''\cap \{X_1,\ldots,X_n\}'$.
The {\em cyclic subspace} $\cC(X_1,\ldots,X_n;\rh)$ of $\cH$ 
generated by $X_1,\ldots,X_n$, and $\rh$ is
defined by 
\[
\cC(X_1,\ldots,X_n;\rh)=\{X_1,\ldots,X_n\}''\overline{\ran}(\rh),
\]
where $\overline{\ran}$ stands for the closure of the range.
Then, $\cC(X_1,\ldots,X_n;\rh)$ is the least invariant subspace
under $\{X_1,\ldots,X_n\}''$ containing $\rh$.
Denote by  $C(X_1,\ldots,X_n;\rh)$ the projection of $\cH$ onto 
$\cC(X_1,\ldots,X_n;\rh)$.
Then, $C(X_1,\ldots,X_n;\rh)$ is the smallest projection $P$ in 
$\{X_1,\ldots,X_n\}'$ such that $P\rh=\rh$.

Under the Takeuti correspondence, the commutator of observables 
are characterized as follows.
\begin{theorem}\label{th:com_real}
For any $X_1,\ldots,X_n\in\cO(\cM)$, the following relations hold.
\bitem
\item $\com(\tX_1,\ldots,\tX_n)
=\cP\{\psi\in\cH\mid \mb{$[A,B]\psi=0$
for all $A,B\in\{X_1,\ldots,X_n\}''$}\}$.
\item $\com(\tX_1,\ldots,\tX_n)
=\cP\{\ps\in\cH\mid \mb{$[E^{X_j}(r_1),E^{X_k}(r_2)]E^{X_l}(r_3)\psi=0$}$\\
\hfill$\mb{ for all  $r_1,r_2,r_3\in\Q$ and $j,k,l=1,\ldots,n$}\}$.
\item $\com(\tX_1,\ldots,\tX_n)
=\max\{E\in\cP(\cZ(X_1,\ldots,X_n))\mid \mbox{$X_{j}E\commutes X_{k}E$}$\\
\hfill $\mb{for all $j,k=1,\ldots,n$}\}$.
\eitem
\end{theorem}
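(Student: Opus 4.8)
The plan is to reduce the statement to a single computation of the commutator, in the sense of Section~2.2, of a concrete subset of $\cQ(\cH)$, and then to feed it through Theorems~\ref{th:com} and \ref{th:equivalence_com}. First I would unravel the support: since $\tX_j=\{(\check r,E^{X_j}(r))\mid r\in\Q\}$ has $\dom(\tX_j)=\{\check r\mid r\in\Q\}$, $\tX_j(\check r)=E^{X_j}(r)$, and $\L(\check r)\subseteq\{1\}$, we get $\L(\{\tX_1,\ldots,\tX_n\})=\cA\cup\{1\}$ (together with $0$ when some $E^{X_j}(r)$ vanishes), where $\cA:=\{E^{X_j}(r)\mid r\in\Q,\ 1\le j\le n\}$. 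Adjoining $0$ and $1$ changes neither the generated subalgebra nor, by Pulmannov\'{a}'s characterization of the commutator, its value, so $\com(\tX_1,\ldots,\tX_n)=\com(\cA)$. I would also record that $\cA''=\{X_1,\ldots,X_n\}''$: each $E^{X_j}(\la)=\Inf_{\la<r\in\Q}E^{X_j}(r)$ is a strong-operator limit of elements of $\cA$, so the two self-adjoint sets generate the same von Neumann algebra.

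With this reduction, (i) and (ii) are immediate from Theorem~\ref{th:com} applied to $\cA$: its first clause gives $\com(\cA)=\cP\{\psi\in\cH\mid[A,B]\psi=0\text{ for all }A,B\in\cA''\}$, which becomes (i) after substituting $\cA''=\{X_1,\ldots,X_n\}''$; its second clause gives $\com(\cA)=\cP\{\psi\in\cH\mid[P_1,P_2]P_3\psi=0\text{ for all }P_1,P_2,P_3\in\cA\}$, and spelling out membership in $\cA$ is precisely the condition over $r_1,r_2,r_3\in\Q$ and $j,k,l$ in (ii).

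For (iii) I would first pin down the relevant lattices. For any subset of $\cQ(\cH)$ the commutant in $\cQ(\cH)$ is the projection lattice of the commutant von Neumann algebra; hence $L(\cA)=\cA^{!!}=\cP(\cA'')=\cP(\{X_1,\ldots,X_n\}'')$ and $Z(\cA)=\cA^{!}\cap\cA^{!!}=\cP(\cA'\cap\cA'')=\cP(\cZ(X_1,\ldots,X_n))$. By Theorem~\ref{th:equivalence_com}, $\com(\cA)=\Sup S(\cA)$ is the \emph{maximum} element of $S(\cA)=\{E\in Z(\cA)\mid P_1\And E\commutes P_2\And E\text{ for all }P_1,P_2\in\cA\}$, so it remains to prove that, for a projection $E$ in the center $\cZ(X_1,\ldots,X_n)$, the condition defining $S(\cA)$ is equivalent to ``$X_jE\commutes X_kE$ for all $j,k$''. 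Since $E$ is central in $\{X_1,\ldots,X_n\}''$, it commutes with every $E^{X_j}(r)$, so $E^{X_j}(r)\And E=E^{X_j}(r)E$; moreover $E$ reduces each $X_j$, so $X_jE$ is self-adjoint with spectral projections $E^{X_jE}(\la)$ agreeing with $E^{X_j}(\la)E$ up to a scalar multiple of the central projection $E^{\perp}$, which commutes with everything. Hence $X_jE\commutes X_kE$ iff $E^{X_j}(\la)E\commutes E^{X_k}(\mu)E$ for all real $\la,\mu$, and passing between $\R$ and $\Q$ is harmless because commutation is preserved under the infima $E^{X_j}(\la)E=\Inf_{\la<r\in\Q}E^{X_j}(r)E$ (Proposition~\ref{th:logic}). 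Taking the maximum over such $E$ then gives (iii).

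The main obstacle I anticipate is exactly this last translation in (iii): correctly identifying $L(\cA)$ and $Z(\cA)$ with the projection lattices of $\{X_1,\ldots,X_n\}''$ and its center, and then handling the spectral decomposition of the reduced operator $X_jE$ carefully enough to see that the purely lattice-theoretic condition $P_1\And E\commutes P_2\And E$ on rational spectral projections coincides with the operator relation $X_jE\commutes X_kE$. Everything preceding that is either a routine support computation or a direct appeal to Theorems~\ref{th:com} and \ref{th:equivalence_com}.
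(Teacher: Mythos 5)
Your proposal is correct and takes essentially the same route as the paper: reduce to $\com(\L(\tX_1,\ldots,\tX_n))$, where $\L(\tX_1,\ldots,\tX_n)$ is the set of rational spectral projections $E^{X_j}(r)$ together with $0,1$, note $\L(\tX_1,\ldots,\tX_n)''=\{X_1,\ldots,X_n\}''$, and then invoke Theorem \ref{th:com} for (i)--(ii) and Theorem \ref{th:equivalence_com} for (iii). The only difference is one of detail: the identification of $Z(\cA)$ with $\cP(\cZ(X_1,\ldots,X_n))$ and the translation between the lattice condition $P_1\And E\commutes P_2\And E$ on rational spectral projections and the operator relation $X_jE\commutes X_kE$, which you verify explicitly (via the spectral measure of the reduced operator and Proposition \ref{th:logic}), is exactly the step the paper's proof asserts without elaboration.
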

\begin{proof}
Let  $\cA=\bL(\tX_1,\ldots,\tX_n)$.
Then, $\com(\tX_1,\ldots,\tX_n)=\com(\cA)$.
We have
\[
\bL(\tX_1,\ldots,\tX_n)=\{E^{X_j}(r_j)\mid
r_j\in\Q \mb{ and }j=1,\ldots,n\}\cup\{0,1\},
\]
and hence $\bL(\tX_1,\ldots,\tX_n)''=\{X_1,\ldots,X_n\}''$.
Thus, relations (i) and (ii) follow from Theorem \ref{th:com} (i) and (ii),
respectively.
From Theorem \ref{th:equivalence_com} we have 
\beqas
\com(\tX_1,\ldots,\tX_n)
&=&\max\{E\in Z(\cA)\mid P_{1}\And E\commutes P_{2}\And E
\mb{ for all }P_{1},P_{2}\in\cA\}\\
&=&\max\{E\in\cP(\cZ(X_1,\ldots,X_n))\mid \mbox{$X_{j}E\commutes X_{k}E$
for all $j,k=1,\ldots,n$}\},
\eeqas
from which relation (iii) follows.
\end{proof}
\color{black}
We say that observables $X_1,\ldots,X_n\in\cO(\cM)$ are 
{\em simultaneously determinate} in a state $\rh$ iff $\Tr[\com(\tX_1,\ldots,\tX_n)\rh]=1$.

A probability measure $\mu$ on $\cB(\R^{n})$
is called a {\em joint probability distribution} of 
$X_1,\ldots,X_n\in\cO(\cM)$ in $\rh\in\cS(\cH)$ 
iff
for any polynomial $p(f_1(X_1),\ldots,f_n(X_n))$ of observables
$f_1(X_1),\ldots,f_n(X_n)$, where $f_1,\ldots,f_n\in B(\R)$,
we have
\beql{JPD-26}
\Tr[p(f_1(X_1),\ldots,f_n(X_n))\rh]
=
\idotsint_{\R^{n}}p(f_1(x_1),\ldots,f_n(x_n))\,
d\mu(x_1,\ldots,x_n).
\eeq

A joint probability distribution of $X_1,\ldots,X_n$ in $\rh$ is unique, if any.
Since simultaneous determinateness is considered to be a state-dependent 
notion of commutativity, it is expected that simultaneous determinateness
is equivalent to the state-dependent existence of the joint probability distribution.
This is indeed shown below together with other useful characterizations
of this notion.

\begin{theorem}\label{th:JPD}
For any observables $X_1,\ldots,X_n\in\cO(\cM)$ and a state $\rh\in\cS(\cH)$, 
the following conditions are all equivalent.
\bitem
\item $X_1,\ldots,X_n$ are simultaneously determinate in $\rh$, i.e., 
$\Tr[\com(\tX_1,\ldots,\tX_n)\rh]=1$
\item $\com(\tX_1,\ldots,\tX_n)\rh=\rh$.
\item $C(X_1,\ldots,X_n;\rh)\le\com(\tX_1,\ldots,\tX_n)$.
\item $[A,B]\rh=0$ for all $A,B\in\{X_1,\ldots,X_n\}''$.
\item There exists a joint probability distribution
of $X_1,\ldots,X_n$ in $\rh$. 
\item $X_j C(X_1,\ldots,X_n;\rh)\commutes X_k C(X_1,\ldots,X_n;\rh)$
for all $j,k=1,\ldots.n$.
\item There exists a spectral measure $E$ in $\cM$ on $\R^{n}$
satisfying
\beq
E(\De_1\times\cdots\times\De_n)\rh=E^{X_1}(\De_1)\And\cdots\And E^{X_n}(\De_n)\rh
\eeq
for all $\De_1,\ldots,\De_n\in\cB(\R)$.
\item There exists a probability measure $\mu$ on $\R^{n}$ satisfying
\beql{JPD-vi}
\mu(\De_1\times\cdots\times\De_n)=
\Tr[E^{X_1}(\De_1)\And \cdots\And E^{X_n}(\De_n)\rh]
\eeq
for any $\De_1,\ldots, \De_n\in\cB(\R)$.
\eitem
If one of the above conditions holds, the joint probability distribution $\mu$ of
of $X_1,\ldots,X_n$ in $\rh$ is uniquely determined by \Eq{JPD-vi}.
\end{theorem}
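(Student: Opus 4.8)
The strategy is to prove the equivalences by establishing a cycle of implications, keeping the "algebraic" conditions (i)--(iv),(vi) together and the "measure-theoretic" conditions (v),(vii),(viii) together, with bridges between the two groups. The core fact I would lean on is Theorem~\ref{th:com_real}, which identifies $\com(\tX_1,\ldots,\tX_n)$ as the projection onto $\{\psi\mid [A,B]\psi=0\text{ for all }A,B\in\{X_1,\ldots,X_n\}''\}$, together with the characterization of $C(X_1,\ldots,X_n;\rh)$ as the smallest projection $P\in\{X_1,\ldots,X_n\}'$ with $P\rh=\rh$, and with Theorem~\ref{th:equivalence_com}(iii) giving $S(\cA)=[0,\com(\cA)]_{L(\cA)}$.

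First I would handle the easy equivalences among (i), (ii), (iii). Since $\com(\tX_1,\ldots,\tX_n)$ is a projection and $\rh$ a density operator, $\Tr[\com(\tX_1,\ldots,\tX_n)\rh]=1$ iff $\com(\tX_1,\ldots,\tX_n)\rh=\rh$ (a standard fact: a projection $E$ satisfies $\Tr[E\rh]=1$ iff $\overline{\ran}(\rh)\subseteq \ran(E)$ iff $E\rh=\rh$). That same range condition says $\overline{\ran}(\rh)\subseteq\ran(\com(\tX_1,\ldots,\tX_n))$; since $\com(\tX_1,\ldots,\tX_n)\in\cZ(X_1,\ldots,X_n)\subseteq\{X_1,\ldots,X_n\}'$, minimality of $C(X_1,\ldots,X_n;\rh)$ in $\{X_1,\ldots,X_n\}'$ gives $C(X_1,\ldots,X_n;\rh)\le\com(\tX_1,\ldots,\tX_n)$, i.e.\ (iii); conversely (iii) plus $C(X_1,\ldots,X_n;\rh)\rh=\rh$ gives (ii). For (i)$\Leftrightarrow$(iv): by Theorem~\ref{th:com_real}(i), $\com(\tX_1,\ldots,\tX_n)\rh=\rh$ means $\overline{\ran}(\rh)$ lies in the common kernel of all $[A,B]$, $A,B\in\{X_1,\ldots,X_n\}''$, which is exactly $[A,B]\rh=0$ for all such $A,B$. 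For (vi): condition (iii) says $C:=C(X_1,\ldots,X_n;\rh)\le\com(\tX_1,\ldots,\tX_n)$, and by Theorem~\ref{th:com_real}(iii) (or Theorem~\ref{th:equivalence_com}(iii)) $C\in[0,\com(\cA)]_{L(\cA)}=S(\cA)$ forces $X_j C\commutes X_k C$; conversely, if $X_jC\commutes X_kC$ for all $j,k$ then — noting $C\in\cZ(X_1,\ldots,X_n)$ since $C$ is the central support-type projection lying in $\{X_1,\ldots,X_n\}'$ and commuting with $\{X_1,\ldots,X_n\}''$, which needs a short argument, see below — we get $C\in S(\cA)$, hence $C\le\Sup S(\cA)=\com(\tX_1,\ldots,\tX_n)$, giving (iii).

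Next the measure-theoretic block. For (iv)$\Rightarrow$(vii): once all of $\{X_1,\ldots,X_n\}''$ is "abelian against $\rh$", restrict to the cyclic subspace $\cC:=\cC(X_1,\ldots,X_n;\rh)$; on $\cC$ the compression of $\{X_1,\ldots,X_n\}''$ is actually commutative (because $[A,B]\rh=0$ for all $A,B$ propagates to $[A,B]=0$ on the invariant subspace generated by $\rh$), so the spectral measures $E^{X_j}$ restricted to $\cC$ generate a commutative von Neumann algebra and admit a joint spectral measure $E$ on $\R^n$; extend $E$ to $\cH$ by $E(\De)=E(\De\cap \cdot)C + \cdots$ — more carefully, define $E$ on $\cH$ using $C$ and arbitrary completion off $\cC$, or simply note we only need the identity $E(\De_1\times\cdots\times\De_n)\rh=E^{X_1}(\De_1)\cdots E^{X_n}(\De_n)\rh$, which holds because both sides have range in $\cC$ and agree there. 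Then (vii)$\Rightarrow$(viii) by setting $\mu(\De_1\times\cdots\times\De_n)=\Tr[E(\De_1\times\cdots\times\De_n)\rh]$ and checking $\si$-additivity and that on products it equals $\Tr[E^{X_1}(\De_1)\And\cdots\And E^{X_n}(\De_n)\rh]$ (using commutativity so that $E^{X_1}(\De_1)\cdots E^{X_n}(\De_n)=E^{X_1}(\De_1)\And\cdots\And E^{X_n}(\De_n)$ on $\cC$). For (viii)$\Rightarrow$(v): given $\mu$ with the stated values on rectangles, verify it is a joint probability distribution in the polynomial sense; this is the classical computation — linearity reduces to monomials $f_1(X_1)\cdots f_n(X_n)$ with mutually commuting factors on $\cC$, and $\Tr[f_1(X_1)\cdots f_n(X_n)\rh]=\int f_1(x_1)\cdots f_n(x_n)\,d\mu$ follows from the rectangle formula by the usual approximation of bounded Borel functions by simple functions, monotone/dominated convergence, and linearity; general polynomials are sums of such monomials after reordering commuting factors. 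Finally (v)$\Rightarrow$(iv): if a joint distribution $\mu$ exists, then for $A=f(X_j),B=g(X_k)$ one computes $\Tr[[A,B]\rh]$-type quantities, but more strongly $\Tr[(AB-BA)^*(AB-BA)\rh]$ can be written as an integral of $|f(x_j)g(x_k)-g(x_k)f(x_j)|^2=0$ against $\mu$, hence $(AB-BA)\rh^{1/2}=0$, i.e.\ $[A,B]\rh=0$; since such products $f(X_j)g(X_k)$ and their sums are weakly dense in $\{X_1,\ldots,X_n\}''$, $[A,B]\rh=0$ extends to all $A,B\in\{X_1,\ldots,X_n\}''$.

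\textbf{Main obstacle.} The delicate point is the bridge condition (vi) and, relatedly, verifying that $C(X_1,\ldots,X_n;\rh)$ may be taken to lie in $\cZ(X_1,\ldots,X_n)$ (or replacing it by its central cover) so that Theorem~\ref{th:equivalence_com}(iii) applies to conclude $C\in S(\cA)$ from $X_jC\commutes X_kC$. In general $C$ sits in $\{X_1,\ldots,X_n\}'$ but need not be central in $\{X_1,\ldots,X_n\}''$; one has to argue that when $X_jC\commutes X_kC$ holds, $C$ can be enlarged to its central carrier without destroying the relation, or that the smallest central projection above $C$ still acts as identity on $\rh$. The cleanest route is probably to prove (vi)$\Rightarrow$(iii) directly: let $E_0=\Sup\{E\in\cP(\cZ(X_1,\ldots,X_n))\mid X_jE\commutes X_kE\text{ for all }j,k\}=\com(\tX_1,\ldots,\tX_n)$ by Theorem~\ref{th:com_real}(iii), and show $C\le E_0$ by checking that $C$ lies under the central carrier of the subcommutator and that $(1-E_0)$ is orthogonal to $\rh$; combine this with the observation that $X_jC\commutes X_kC$ says $C$ localizes the non-commutativity, so its central support is a subcommutator. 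I would isolate this as a short lemma before assembling the cycle. The rest of the proof is essentially bookkeeping with spectral calculus and the already-cited theorems.
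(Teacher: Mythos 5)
Your overall architecture is close to the paper's (which runs (i)$\Rightarrow$(ii)$\Rightarrow$(iii)$\Rightarrow$(iv)$\Rightarrow$(v)$\Rightarrow$(i) together with (iii)$\Rightarrow$(vi)$\Rightarrow$(vii)$\Rightarrow$(ii)), and your treatment of (i)--(iv) is fine, but two of your bridges have genuine gaps. The first is the return path from the measure-theoretic conditions. In (viii)$\Rightarrow$(v) you reduce to monomials $f_1(X_1)\cdots f_n(X_n)$ ``with mutually commuting factors on $\cC$'', but commutativity on $\cC(X_1,\ldots,X_n;\rh)$ is not available at that point of the cycle: hypothesis (viii) only says that the numbers $\Tr[E^{X_1}(\De_1)\And\cdots\And E^{X_n}(\De_n)\rh]$ extend to a probability measure, and extracting commutation on $\overline{\ran}(\rh)$ from that is precisely the nontrivial content. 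The paper supplies it in its closing implication: from a three-variable marginal it forms $P=E^{X_l}(\De_3)-E^{X_k}(\De_2^{c})\And E^{X_l}(\De_3)-\cdots$, shows $\Tr[P\rh]=0$, hence $P\sqrt{\rh}=0$, deduces that products of spectral projections agree with meets on $\rh$, and invokes the triple-product characterization of $\com$ in Theorem \ref{th:com_real}(ii). Nothing in your sketch replaces this, so your cycle (iv)$\Rightarrow$(vii)$\Rightarrow$(viii)$\Rightarrow$(v)$\Rightarrow$(iv) is broken at (viii)$\Rightarrow$(v). Your (v)$\Rightarrow$(iv) has a related flaw: from $[f(X_j),g(X_k)]\rh=0$ you cannot pass to all $A,B\in\{X_1,\ldots,X_n\}''$ ``by weak density'', because $\{A\mid[A,B]\rh=0\}$ is weakly closed but not an algebra ($[A_1A_2,B]\rh=A_1[A_2,B]\rh+[A_1,B]A_2\rh$ involves $A_2\rh$, not $\rh$), so the relation on generators gives nothing for products. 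Either run the Hilbert--Schmidt argument for arbitrary noncommutative polynomials in the $f_j(X_j)$ (the joint distribution does give $\Tr[(AB-BA)^{*}(AB-BA)\rh]=0$ for those) before taking weak closures, or, as the paper does, verify only the triple-product criterion of Theorem \ref{th:com_real}(ii), which needs just three-variable marginals.

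The second gap is condition (vi), which you flag yourself but do not close. Both your justification of (iii)$\Rightarrow$(vi) and your first attempt at the converse rest on placing $C=C(X_1,\ldots,X_n;\rh)$ in $[0,\com]_{L(\cA)}$ or in $\cZ(X_1,\ldots,X_n)$, which is false in general: $C$ lies in $\{X_1,\ldots,X_n\}'$, not in $\{X_1,\ldots,X_n\}''$. For (iii)$\Rightarrow$(vi) the fix is the paper's: $X_jG\commutes X_kG$ for $G=\com(\tX_1,\ldots,\tX_n)$ by Theorem \ref{th:com_real}(iii), $X_jG\commutes C$ because $C\in\{X_1,\ldots,X_n\}'$, and $C\le G$ by (iii). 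For the converse you leave the ``short lemma'' unproved, and since (vi) is attached to the rest of your scheme only through that implication, the proof is incomplete. A repair exists: $X_jC\commutes X_kC$ gives $[E^{X_j}(\De),E^{X_k}(\Ga)]C=0$; these commutators lie in $\{X_1,\ldots,X_n\}''$ while $C\in\{X_1,\ldots,X_n\}'$, so they also annihilate the central carrier $z(C)\in\cZ(X_1,\ldots,X_n)$ of $C$, which is therefore a subcommutator, whence $C\le z(C)\le\com$; but this must be stated and proved, whereas the paper sidesteps it entirely by closing the loop through (vi)$\Rightarrow$(vii)$\Rightarrow$(ii). Finally, in (iv)$\Rightarrow$(vii) an ``arbitrary completion off $\cC$'' need not yield a spectral measure in $\cM$ (note $C$ itself generally does not belong to $\cM$); the natural construction is the joint spectral measure of the commuting operators $X_1G,\ldots,X_nG$ with $G=\com\in\cM$, checked against $\ran(\rh)$.
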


\begin{proof}Let $\cB=\{X_1,\ldots,X_n\}''$ and $C=C(X_1,\ldots,X_n;\rh)$.

(i)$\THEN$(ii): The assertion follows from the relation
$\|P\sqrt{\rh}-\sqrt{\rh}\|_{HS}^2=1-\Tr[P\rh]$ for any projection $P$, where
 $\|\cdots\|_{HS}$ is the Hilbert-Schmidt norm.

(ii)$\THEN$(iii): Since $\com(\tX_1,\ldots,\tX_n)\in\cB'$,
(iii) follows from (ii) by minimality of $C(X_1,\ldots,X_n;\rh)$.

(iii)$\THEN$(iv): It follows from (iii) that 
$\ran(\rh)\subseteq\ran(\com(\tX_1,\ldots,\tX_n))$ so that (iv) follows from 
Theorem \ref{th:com_real} (i).

(iv)$\THEN$(v): 
It follows from assumption (iv) and Proposition 2.2 in \Cite{HC99} that 
the GNS representation $(\cH,\pi,\Om)$ of $\cB$ induced by $\rh$ is abelian 
(i.e., $\pi(\cB)$ is abelian) and normal.  
Let $j=1, \ldots,n$.  Let $f_j$ be a bounded Borel function on $\R$.
By normality of $\pi$, there is a self-adjoint operator $\pi(X_j)$ affiliated with $\pi(\cB)$
such that $E^{\pi(X_j)}(\De)=\pi(E^{X_j}(\De))$ for all $\De\in\cB(\R)$,
and hence we have
\[
\pi(f_j(X_j))=f_j(\pi(X_j)). 
\]
Thus, the relation
\beqas
\mu(\De_1\times\cdots\times\De_n)
=
(\Om,E^{\pi(X_1)}(\De_1)\cdots E^{\pi(X_n)}(\De_n)\Om),
\eeqas
where $\De_1,\ldots,\De_n\in\cB(\R)$, defines a probability measure $\mu$
on $\cB(\R^n)$ satisfying
\[
\idotsint_{\R^{n}}p(f_1(x_1),\ldots,f_n(x_n))d\mu(x_1,\ldots,x_n)
=
(\Om,\pi\left(p\left(f_1(X_1),\ldots,f_n(X_n)\right)\right)\Om)
\]
for any polynomial $p\left(f_1(X_1),\ldots,f_n(X_n)\right)$ of 
$f_1(X_1),\ldots,f_n(X_n)$.
Thus, assertion (iv) follows from the relation
\beqas
\Tr[A\rh]
=
(\Om,\pi(A)\Om)
\eeqas
for any $A\in\cB$ satisfied by the GNS representation $(\cH,\pi,\Om)$.

(v)$\THEN$(i): Suppose that there exists a joint probability distribution 
$\mu$ of $X_1,\cdots X_n$ in $\rh$.
Then, for any $j,k,l=1,\ldots,n$ and $r_1,r_2,r_3\in\Q$,
we have
\[
\Tr[|[E^{X_j}(r_1),E^{X_2}(r_2)]E^{X_l}(r_3)|^2\rh]=0
\]
and we have $[E^{X_j}(r_1),E^{X_2}(r_2)]E^{X_l}(r_3)\rh=0$.
From Theorem \ref{th:com_real} (ii), it follows that 
$\com(\tX_1,\dots,\tX_n)\rh\psi=\rh\psi$ for all $\ps\in\cH$.  Thus, we have 
$\com(\tX_1,\dots,\tX_n)\rh=\rh$  and hence 
 $X_1,\ldots,X_n$ are simultaneously determinate in a state $\rh$.
 
(iii)$\IFF$(vi): From Theorem \ref{th:equivalence_com} (iv), we have 
\[
\com(\tX_1,\ldots,\tX_n)=\max\{E\in \cB'\mid X_j E\commutes X_k E ~~\mb{for all}~ j,k=1,\ldots,n\}.
\]
Thus, the equivalnce (iii)$\IFF$(vi) follows from the fact that $C(X_1,\ldots,X_n;\rh)\in\cB'$.

(ii)$\THEN$(vii): 
Let $G=\com(\tX_1,\ldots,\tX_n)$.
Then, we have $X_jG\commutes X_k G$ for all $j,k=1,\ldots,n$.
It follows that there exists a spectral measure $E$ in $\cM$ on $\R^{n}$
such that
\[
E(\De_1\times\cdots\times\De_n)
=E^{X_1G}(\De_1) \cdots E^{X_nG}(\De_n)
=E^{X_1G}(\De_1)\And \cdots \And E^{X_nG}(\De_n)
\]
for all $\De_1,\ldots,\De_n\in\cB(\R)$.
Let $\De_1,\ldots,\De_n\in\cB(\R)$.
Since 
\[
E^{X_jG}(\De)=E^{X_j}(\De)G+\de_{0}(\De)G\p
\]
for all $\De\in\cB(\R^n)$ and $j=1,\ldots,n$, 
where $\de_{0}$ is the point mass on $0\in\R$,
we have
\[
E(\De_1\times\cdots\times\De_n)G=
[E^{X_1}(\De_1)\And \cdots \And E^{X_n}(\De_n)]G
\]
If (ii) holds, i.e., $G\rh=\rh$, we have
\deqs{
E(\De_1\times\cdots\times\De_n)\rh
&=E(\De_1\times\cdots\times\De_n)G\rh\\
&=[E^{X_1}(\De_1)\And \cdots \And E^{X_n}(\De_n)]G\rh\\
&=E^{X_1}(\De_1)\And \cdots\And E^{X_n}(\De_n)\rh.}
Thus, (vii) follows.

(vii)$\THEN $(viii): Obvious.

(viii)$\THEN$(ii): Let $\mu$ be a probability measure on $\R^{n}$ 
satisfying \eq{JPD-vi}. Let $j,k,l\in\{1,\ldots,n\}$.  
By taking an appropriate marginal measure of $\mu$ there exists a probability
measure $\mu'$ on $\R^{3}$ such that  
\[
\mu'(\De_1\times\De_2\times\De_3)=\Tr[E^{X_j}(\De_1)\And E^{X_k}(\De_2)
\And E^{X_l}(\De_3)\rh]
\]
for all $\De_1,\De_2,\De_3\in\cB(\R)$.
Let $\De_1,\De_2,\De_3\in\cB(\R)$ and 
\beqas
P&=&E^{X_l}(\De_3)-E^{X_k}(\De_2^{c})\And E^{X_l}(\De_3)
-E^{X_j}(\De_1^{c})\And E^{X_k}(\De_2)\And E^{X_l}(\De_3)\\
& &
-E^{X_j}(\De_1)\And E^{X_k}(\De_2)\And E^{X_l}(\De_3),
\eeqas
where $\De^{c}$ stands for the complement of $\De\in\cB(\R)$.
Then, by the additivity of $\mu'$ we have 
\beqas
\Tr[P\rh]=\mu'(\R\times\R\times\De_3)-\mu'(\R\times\De_2^{c}\times\De_3)
-\mu(\De_1^{c}\times\De_2\times\De_3)
-\mu(\De_1\times\De_2\times\De_3)
=0.
\eeqas
Since $\Tr[(P\sqrt{\rh})^{\da}(P\sqrt{\rh})]=\Tr[P\rh]$, we have
$P\sqrt{\rh} =0$,  so that $E^{X_j}(\De_1)E^{X_k}(\De_2)P\rh =0$, and hence
$E^{X_j}(\De_1)E^{X_k}(\De_2)E^{X_l}(\De_3)\rh=
E^{X_j}(\De_1)\And E^{X_k}(\De_2)\And E^{X_l}(\De_3)\rh$.
By symmetry we also have 
$E^{X_k}(\De_2)E^{X_j}(\De_1)E^{X_l}(\De_3)\rh=
E^{X_j}(\De_1)\And E^{X_k}(\De_2)\And E^{X_l}(\De_3)\rh$.
Thus, we have $[E^{X_j}(\De_1),E^{X_k}(\De_2)]E^{X_l}(\De_3)\rh\ps=0$
for all $\ps\in\cH$.  Since $\De_1,\De_2,\De_3$ were arbitrary, 
it follows from Theorem \ref{th:com_real} that
$\ran(\rh)\subseteq\ran(\com(\tX_1,\ldots,\tX_n))$, and (ii) follows. 

Suppose that there exists a joint probability distribution $\mu$ of $X_1,\ldots,X_n$ in $\rh$. 
Let $f(x_1,\ldots,x_n)=\chi_{\De_{1}}(x_1)\cdots\chi_{\De_{n}}(x_n)$, where $\De_1,\ldots,
\De_n\in\cB(\R)$.
From \Eq{JPD-26}, we have
\deqs{
\Tr[E^{X_1}(\De_1)\cdots E^{X_n}(\De_n)\rh]
&=
\idotsint_{\R^{n}}\chi_{\De_{1}}(x_1)\cdots\chi_{\De_{n}}(x_n)d\mu(x_1,\ldots,x_n)\\
&=\mu(\De_1\times\cdots\times \De_n).
}
From the proof of (ii)$\THEN$(vii), we have
\deqs{
E^{X_1}(\De_1)\cdots E^{X_n}(\De_n)\rho
&=E^{X_1G}(\De_1) \cdots E^{X_nG}(\De_n)\rho\\\
&=E^{X_1G}(\De_1)\And \cdots \And E^{X_nG}(\De_n)\rho\\
&=E^{X_1}(\De_1)\And \cdots \And E^{X_n}(\De_n)\rho.
}
Thus, one of conditions (i)--(viii) holds, \Eq{JPD-vi} uniquely determines the 
joint probability distribution $\mu$ of $X_1,\ldots,X_n$ in $\rh$. 
\end{proof}

The equivalence between (i) and (v) in the above theorem was previously reported
in Theorem 2 of \Cite{11QRM} for the case where $\cM=\cB(\cH)$ with $\cH<\infty$.
The equivalence of (ii), (vii), and (viii) was given in Theorem 5.1 of \Cite{06QPC} for
the case $n=2$.

Note that for any $X_1,\ldots,X_n\in\cO(\cM)$
there exists a proposition $\ph$ in $\Lo(\cM)$ such that 
$\valo{\ph}=\com(\tX_1,\ldots,\tX_n)$,
since $\com(\tX_1,\ldots,\tX_n)\in\cO_{\om}(\cM)$ and 
$\valo{\com(\tX_1,\ldots,\tX_n)=_{o}1}=\com(\tX_1,\ldots,\tX_n)$.
However, it is not in general possible to construct such $\ph$ from atomic
propositions of the form $X_j\leo \la$ for $j=1,\ldots,n$ with $\la\in\R$.
In what follows, we shall show that this is possible for finite observables.

For any finite observables
$X_1,\ldots,X_n\in\cO_{\om}(\cM)$ we define the observational proposition
$\cm(X_1,\ldots,X_n)$ by
\beq
\cm(X_1,\ldots,X_n):=\Sup_{x_1\in\Sp(X_1),\ldots,x_n\in\Sp(X_n)} 
X_1=_{o}x_1\And\cdots\And X_n=_{o}x_n.
\eeq
Then, we have the following theorem.
\begin{theorem}\label{th:JD}
For any finite observables  $X_1,\ldots,X_n\in\cO_{\om}(\cM)$,
we have
\beql{JD}
\valo{\cm(X_1,\ldots,X_n)}=\com(\tX_1,\ldots,\tX_n).
\eeq
\end{theorem}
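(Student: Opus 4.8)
The plan is to compute the left-hand side directly and match it against characterization (iii) in Theorem \ref{th:com_real}, namely that $\com(\tX_1,\ldots,\tX_n)$ is the largest central projection $E\in\cP(\cZ(X_1,\ldots,X_n))$ with $X_jE\commutes X_kE$ for all $j,k$. First I would unfold the definition: by (D2), (T2), (T3) and the already-established identity $\valo{X_j=_ox_j}=E^{X_j}(\{x_j\})$, we get
\[
\valo{\cm(X_1,\ldots,X_n)}=\Sup_{x_1\in\Sp(X_1),\ldots,x_n\in\Sp(X_n)}E^{X_1}(\{x_1\})\And\cdots\And E^{X_n}(\{x_n\}).
\]
Since each $X_j$ is finite, $\{E^{X_j}(\{x\})\}_{x\in\Sp(X_j)}$ is a finite family of projections summing to $1$, and the meets $E^{X_1}(\{x_1\})\And\cdots\And E^{X_n}(\{x_n\})$ need not be pairwise orthogonal (the $X_j$ may fail to commute), so this supremum is a genuine orthomodular-lattice join, not a sum. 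Call this projection $G$.

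Next I would show $G\le\com(\tX_1,\ldots,\tX_n)$ and $G\ge\com(\tX_1,\ldots,\tX_n)$ separately. For the easier inclusion, observe that on the subspace $\ran(E^{X_1}(\{x_1\})\And\cdots\And E^{X_n}(\{x_n\}))$ every $X_j$ acts as the scalar $x_j$, hence all the $X_j$ restricted there mutually commute; this should let me verify, via Theorem \ref{th:com_real}(ii) (the condition $[E^{X_j}(r_1),E^{X_k}(r_2)]E^{X_l}(r_3)\psi=0$), that each such meet lies below $\com(\tX_1,\ldots,\tX_n)$, and since the commutator is itself a join of such things it dominates $G$. The harder direction is to show $\com(\tX_1,\ldots,\tX_n)\le G$. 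Here I would use Theorem \ref{th:com_real}(iii): on $\ran(\com(\tX_1,\ldots,\tX_n))$ the operators $X_jE$ mutually commute, so the family $\{E^{X_j}(\{x_j\})\com(\tX_1,\ldots,\tX_n)\}$ consists of mutually commuting projections; their ``rows'' sum to $\com(\tX_1,\ldots,\tX_n)$ for each $j$, so by the distributivity available among commuting projections (Propositions \ref{th:distributivity} and \ref{th:logic}) we can expand $\com(\tX_1,\ldots,\tX_n)=\Sup_{x_1,\ldots,x_n}E^{X_1}(\{x_1\})\com(\tX_1,\ldots,\tX_n)\And\cdots\And E^{X_n}(\{x_n\})\com(\tX_1,\ldots,\tX_n)$, and each term in this join is $\le E^{X_1}(\{x_1\})\And\cdots\And E^{X_n}(\{x_n\})\le G$.

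The main obstacle I anticipate is the distributivity manoeuvre in the second inclusion: one must be careful that the projections $E^{X_j}(\{x_j\})$ genuinely commute with $\com(\tX_1,\ldots,\tX_n)$ before invoking Proposition \ref{th:logic} to push the meet past the join. This is where membership of $\com(\tX_1,\ldots,\tX_n)$ in the center $\cZ(X_1,\ldots,X_n)=\{X_1,\ldots,X_n\}''\cap\{X_1,\ldots,X_n\}'$ — supplied by Theorem \ref{th:com_real}(iii) — is essential: being central, it commutes with every $E^{X_j}(\{x_j\})$, so the truncated projections $E^{X_j}(\{x_j\})\com(\tX_1,\ldots,\tX_n)$ are honest projections and the finite distributive expansion is legitimate. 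Once both inclusions are in hand, \eq{JD} follows.
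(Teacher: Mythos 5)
Your proposal is correct, but it takes a genuinely different route from the paper's. The paper argues lattice-theoretically: since each $X_j$ is finite, the support $\bL(\tX_1,\ldots,\tX_n)$ is a finite set of spectral projections, and $\valo{\cm(X_1,\ldots,X_n)}$ is identified with the Bruns--Kalmbach commutator of the finite family of eigenprojections $\{E^{X_j}(\{x\})\mid x\in\Sp(X_j),\ j=1,\ldots,n\}$ (the sign patterns other than ``exactly one eigenvalue per observable'' drop out because distinct eigenprojections of the same $X_j$ are orthogonal and their join is $1$); since this family and $\bL(\tX_1,\ldots,\tX_n)$ generate the same subalgebra, their commutators coincide, and the latter is $\com(\tX_1,\ldots,\tX_n)$ by definition. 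You instead prove the two inequalities separately against the Hilbert-space characterizations of Theorem \ref{th:com_real}: the inequality $\valo{\cm(X_1,\ldots,X_n)}\le\com(\tX_1,\ldots,\tX_n)$ by checking condition (ii) on common eigenvectors, and the reverse one by combining centrality of the commutator (iii) with the finite distributive expansion $\com(\tX_1,\ldots,\tX_n)=\Sup_{x_1,\ldots,x_n}\Inf_j\left(E^{X_j}(\{x_j\})\And\com(\tX_1,\ldots,\tX_n)\right)$. Both arguments are sound; the paper's is shorter and stays within the orthomodular machinery of Section 2, at the price of an ``easy to see'' step that hides the orthogonality bookkeeping you carry out explicitly, while yours makes visible exactly where finiteness is used (eigenprojections of each $X_j$ joining to $1$, and finite distributivity among commuting projections). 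One point to tighten when writing it up: the mutual commutativity of the truncated projections $E^{X_j}(\{x_j\})\And\com(\tX_1,\ldots,\tX_n)$ for different $j$ is obtained most cleanly from Theorem \ref{th:com_real}(i) (all elements of $\{X_1,\ldots,X_n\}''$ commute on the range of the commutator), since deriving it from the spectral calculus of $X_jE$ as you suggest requires a small caveat at the eigenvalue $0$, where $E^{X_jE}(\{0\})$ picks up the extra summand $E^\perp$.
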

\begin{proof}
Let $X_1,\ldots,X_n\in\cO_{\om}(\cH)$.
Let $x^{(1)}_j<\cdots<x^{(n_j)}_j\in\R$ be the ascending 
sequence of eigenvalues of $X_j$.
Then, we have
\[
\bL(\tX_1,\ldots,\tX_n)=
\{E^{X_j}(x)\mid x=x^{(1)}_j,\ldots,x^{(n_j)}_j;\ j=1,\ldots,n\}\cup\{0\}.
\]
Since $\bL(X_1,\ldots,X_n)$ is a finite set, 
it is easy to see that the relations 
\beqas
\valo{\cm(X_1,\ldots,X_n)}
&=&\com(\{E^{X_j}(\{x\})\mid x=x^{(1)}_j,\ldots,x^{(n_j)}_j;\ j=1,\ldots,n\})\\
&=&\com(\bL(\tX_1,\ldots,\tX_n))\\
&=&\com(\tX_1,\ldots,\tX_n)
\eeqas
hold.
\end{proof}

The observational proposition $\cm(X_1,\ldots,X_n)$
was previously introduced in Ref.~\cite{11QRM} 
for the case where $\cM=\cB(\cH)$ and $\dim(\cH)<\infty$.
The following theorem is a straightforward generalization of 
Theorem 1 in Ref.~\cite{11QRM}. 
\begin{theorem}\label{th:SD}
Finite observables  $X_1,\ldots,X_n\in\cO_{\om}(\cM)$ are simultaneously determinate in
a vector state $\ps$  if and only if the state $\ps$ is a superposition
of common eigenvectors of $X_1,\ldots,X_n$.
\end{theorem}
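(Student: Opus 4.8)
The plan is to reduce the statement to Theorem~\ref{th:JPD} and the characterization of the commutator given in Theorem~\ref{th:com_real}. First I would unpack what "superposition of common eigenvectors" means: the closed linear span of the common eigenvectors of $X_1,\ldots,X_n$ is precisely the range of the projection $G=\com(\tX_1,\ldots,\tX_n)$ restricted to the part of $\cH$ on which all the $X_j$ simultaneously diagonalize. More precisely, since each $X_j\in\cO_{\om}(\cM)$ has finitely many eigenvalues $x^{(1)}_j,\ldots,x^{(n_j)}_j$, a vector $\ps$ is a superposition of common eigenvectors iff $\ps$ lies in the span of the subspaces $\ran\bigl(E^{X_1}(\{y_1\})\And\cdots\And E^{X_n}(\{y_n\})\bigr)$ over all tuples $(y_1,\ldots,y_n)\in\Sp(X_1)\times\cdots\times\Sp(X_n)$. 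By Theorem~\ref{th:JD}, $\valo{\cm(X_1,\ldots,X_n)}=G$, and since $\valo{X_1=_o y_1\And\cdots\And X_n=_o y_n}=E^{X_1}(\{y_1\})\And\cdots\And E^{X_n}(\{y_n\})$, the projection $G$ is exactly the supremum (join) of these finitely many "joint eigenspace" projections. Hence the range of $G$ is the closed span of the common eigenvectors, and the set of superpositions of common eigenvectors is exactly $\ran(G)$.

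Given that identification, the theorem becomes: $X_1,\ldots,X_n$ are simultaneously determinate in $\ps$ iff $\ps\in\ran(G)$, i.e.\ iff $G\ketbra{\ps}=\ketbra{\ps}$, i.e.\ iff $G\ps=\ps$. But simultaneous determinateness in the vector state $\ps$ means $\Tr[G\,\ketbra{\ps}]=1$, which equals $\|G\ps\|^2$, and for a projection $G$ this equals $1$ iff $G\ps=\ps$. So the two conditions coincide directly from the definition together with the identification of $\ran(G)$ above; alternatively one may cite the equivalence (i)$\Leftrightarrow$(ii) of Theorem~\ref{th:JPD} with $\rh=\ketbra{\ps}$, which gives $\Tr[G\ketbra{\ps}]=1\iff G\ketbra{\ps}=\ketbra{\ps}\iff G\ps=\ps$.

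The key steps in order: (1) recall from Theorem~\ref{th:JD} that $\valo{\cm(X_1,\ldots,X_n)}=\com(\tX_1,\ldots,\tX_n)=:G$; (2) observe that, by the definition of $\cm$ and of $X_j=_o x_j$ together with (D2) for truth values of disjunctions, $G=\Sup_{(y_1,\ldots,y_n)} E^{X_1}(\{y_1\})\And\cdots\And E^{X_n}(\{y_n\})$, a finite join of the joint eigenspace projections; (3) conclude that $\ran(G)$ is the closed linear span of the common eigenvectors of $X_1,\ldots,X_n$, so membership $\ps\in\ran(G)$ is synonymous with $\ps$ being a superposition of common eigenvectors; (4) apply the equivalence of (i) and (ii) in Theorem~\ref{th:JPD} with $\rh=\ketbra{\ps}$ (or argue directly via $\Tr[G\ketbra{\ps}]=\|G\ps\|^2$) to get simultaneous determinateness $\iff G\ps=\ps\iff\ps\in\ran(G)$.

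The main obstacle is step~(2)--(3): showing that the join of the joint eigenspace projections $E^{X_1}(\{y_1\})\And\cdots\And E^{X_n}(\{y_n\})$ genuinely has range equal to the span of the common eigenvectors, and that this join coincides with $G$ rather than merely being dominated by it. The inequality $\le G$ is immediate since each joint eigenspace consists of common eigenvectors and common eigenvectors are fixed by $G$ (they lie in a region where the $X_j$ mutually commute in the strong sense needed). The reverse inequality is exactly the content already packaged in Theorem~\ref{th:JD} via the finite-commutator computation $\com(\{E^{X_j}(\{x\})\})=\com(\bL(\tX_1,\ldots,\tX_n))$, so I would lean on that theorem rather than redo the lattice calculation; the only care needed is to translate "$P\rh=\rh$ for a rank-one $\rh=\ketbra{\ps}$" into "$P\ps=\ps$," which is routine. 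With those pieces the argument is short.
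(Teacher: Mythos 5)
Your proposal is correct and follows the route the paper clearly intends: the paper itself prints no proof (the theorem is stated only as a straightforward generalization of Theorem 1 of Ref.~\cite{11QRM}), and your argument supplies the missing details by combining Theorem \ref{th:JD}, which identifies $G=\com(\tX_1,\ldots,\tX_n)$ with the join of the finitely many mutually orthogonal joint-eigenspace projections $E^{X_1}(\{x_1\})\And\cdots\And E^{X_n}(\{x_n\})$, with Theorem \ref{th:JPD} (i)$\Leftrightarrow$(ii) (or the direct computation $\Tr[G\ketbra{\ps}]=\|G\ps\|^2$) to turn simultaneous determinateness in $\ps$ into $G\ps=\ps$, i.e.\ $\ps\in\ran(G)$. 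The one point worth stating explicitly is the mutual orthogonality of distinct joint eigenspaces, which guarantees that $\ran(G)$ consists exactly of finite superpositions of common eigenvectors and not merely of limits in their closed span.
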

\color{black}

\section{Quantum equality}
\label{se:4}
In this section, we shall examine basic properties of the $\cQ$-valued equality 
relation $\val{\tX=\tY}$ defined through $\VQ$ 
for any two observables $X,Y\in\cO(\cM)$, where $\cQ=\cP(\cM)$.
From Theorem 6.3 of Ref.~\cite{07TPQ}, we have the following characterizations.

\begin{theorem}\label{th:equality}
For any $X,Y\in\cO(\cM)$, the following relations hold.
\bitem
\item
$\val{\tX=\tY}=\cP\{\psi\in\cH\mid\mb{$E^{X}(r)\psi=E^{Y}(r)\psi$
for all $r\in\Q$}\}. $
\item
$\val{\tX=\tY}=\cP\{\psi\in\cH\mid\mb{$f(X)\psi=f(Y)\psi$ for all $f\in B(\R)$}\}.$
\item
$\val{\tX=\tY}=\cP\{\psi\in\cH\mid\mb{$(E^{X}(\De)\psi,E^{Y}(\Ga)\psi)=0$}$\\
\hfill$\mb{ for any $\De,\Ga\in\cB(\R)$ with $\De\cap\Ga=\emptyset$}\}.$
\eitem
\end{theorem}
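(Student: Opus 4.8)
The plan is to reduce everything to the known characterization of the $\cQ$-valued equality of reals given in Theorem~6.3 of \cite{07TPQ}, which (under the Takeuti correspondence $X\mapsto\tX$) should already supply the core formula. Since $\tX,\tY\in\RQ$ are built from the spectral projections $E^X(r),E^Y(r)$ for $r\in\Q$ via $\tX(\check r)=E^X(r)$, the truth value $\val{\tX=\tY}$ unfolds, using the recursion for $\val{u=v}$ together with the implication $P\Then Q=P^\perp\Or(P\And Q)$ and Proposition~\ref{th:QBorel}, into
\[
\val{\tX=\tY}=\Inf_{r\in\Q}\bigl(E^X(r)\Then\val{\check r\in\tY}\bigr)\And\Inf_{r\in\Q}\bigl(E^Y(r)\Then\val{\check r\in\tX}\bigr),
\]
and $\val{\check r\in\tY}=\Sup_{s<r,s\in\Q}E^Y(s)$ (and symmetrically). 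The first step is therefore to carry out this computation cleanly and identify the projection onto $\{\psi\mid E^X(r)\psi=E^Y(r)\psi\text{ for all }r\in\Q\}$ as the resulting meet; this is essentially the content of (i), and I would cite Theorem~6.3 of \cite{07TPQ} for it rather than redo it.

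Next I would prove the equivalence of (i), (ii), and (iii) as statements about a fixed vector $\psi$; by taking the projection onto the (closed, since each condition is closed under limits and linear combinations) subspace of such $\psi$, the operator identities follow. For (i)$\Rightarrow$(ii): if $E^X(r)\psi=E^Y(r)\psi$ for all rationals $r$, then by right-continuity of spectral resolutions $E^X(\la)\psi=E^Y(\la)\psi$ for all real $\la$, hence $E^X(\De)\psi=E^Y(\De)\psi$ for every $\De\in\cB(\R)$ (the agreeing set of $\De$ is a $\si$-algebra containing all half-lines), and then $f(X)\psi=\int f\,dE^X(\la)\psi=\int f\,dE^Y(\la)\psi=f(Y)\psi$ for bounded Borel $f$ by approximation with simple functions. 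For (ii)$\Rightarrow$(i) take $f=\ch_{(-\infty,r]}$. For (ii)$\Rightarrow$(iii): if $E^X(\De)\psi=E^Y(\De)\psi$ and $\De\cap\Ga=\emptyset$, then $(E^X(\De)\psi,E^Y(\Ga)\psi)=(E^Y(\De)\psi,E^Y(\Ga)\psi)=(\psi,E^Y(\De\cap\Ga)\psi)=0$. For (iii)$\Rightarrow$(ii) the idea is: applying (iii) with $\Ga=\De^c$ gives $(E^X(\De)\psi,E^Y(\De^c)\psi)=0$, i.e.\ $(E^X(\De)\psi,(1-E^Y(\De))\psi)=0$, so $(E^X(\De)\psi,\psi)=(E^X(\De)\psi,E^Y(\De)\psi)$; by symmetry $(E^Y(\De)\psi,\psi)=(E^Y(\De)\psi,E^X(\De)\psi)$, and combining these with $\|E^X(\De)\psi-E^Y(\De)\psi\|^2=\|E^X(\De)\psi\|^2+\|E^Y(\De)\psi\|^2-2\mathrm{Re}(E^X(\De)\psi,E^Y(\De)\psi)$ forces $E^X(\De)\psi=E^Y(\De)\psi$.

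The step I expect to require the most care is making the passage from ``for all rational $r$'' to ``for all real $\la$'' and then to ``for all Borel $\De$'' fully rigorous while tracking that the sets $\{\psi\mid\cdots\}$ in (i)--(iii) are genuinely closed subspaces, so that $\cP\{\cdots\}$ is well defined and the three projections coincide; the monotone/limiting arguments for the spectral measure and the polarization-type identity in (iii)$\Rightarrow$(ii) are where the real work sits, though each is routine. Everything else is bookkeeping with the definitions of $\tX$, $E^u(\la)$, and the lattice operations, plus an appeal to Theorem~6.3 of \cite{07TPQ} for the identification of $\val{\tX=\tY}$ with the projection in (i).
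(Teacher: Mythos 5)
The paper gives no internal proof of this theorem: the text attributes all three characterizations directly to Theorem~6.3 of \cite{07TPQ}. Your route---invoke that theorem only for (i) and then establish the pointwise equivalences (i)$\Leftrightarrow$(ii)$\Leftrightarrow$(iii) for a fixed vector $\psi$ by elementary spectral theory---is sound and in fact more self-contained than the text, so the substance of your proposal is fine; it differs from the paper only in that the paper outsources (ii) and (iii) to the reference as well. Three small repairs are needed. First, your unfolding of $\val{\tX=\tY}$ uses $\val{\check{r}\in\tY}=\Sup_{s<r,\,s\in\Q}E^{Y}(s)$; since $\val{\check{r}=\check{s}}$ equals $1$ or $0$ according as $r=s$ or not, the correct value is $\val{\check{r}\in\tY}=\tY(\check{r})=E^{Y}(r)$, so that $\val{\tX=\tY}=\Inf_{r\in\Q}\bigl(E^{X}(r)\Iff E^{Y}(r)\bigr)$; this slip is harmless because you defer to \cite{07TPQ} for (i) anyway. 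Second, the class of Borel sets $\De$ with $E^{X}(\De)\psi=E^{Y}(\De)\psi$ is not obviously closed under finite intersections, so it is a Dynkin ($\la$-) system rather than a $\si$-algebra; pass from half-lines to all of $\cB(\R)$ by the $\pi$-$\la$ theorem, or more simply note that for every $\ph\in\cH$ the complex measures $\De\mapsto(\ph,E^{X}(\De)\psi)$ and $\De\mapsto(\ph,E^{Y}(\De)\psi)$ agree on the half-lines and hence everywhere. Third, the set in (iii) is defined by a quadratic condition in $\psi$ and is not a priori a linear subspace, so $\cP\{\cdots\}$ there is only seen to be well defined once the pointwise equivalence with (i) is proved; your order of argument does deliver this, but the parenthetical claim that each condition is closed under linear combinations should be restated in that form. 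With these adjustments, your polarization-type argument for (iii)$\Rightarrow$(ii) and the remaining steps are correct.
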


\color{black}
We introduce a new atomic observational proposition $X=_{o}Y$ 
in $\Lo(\cM)$
for all $X,Y\in\cO(\cM)$ by the following additional rules for formation of observational
propositions and for projection-valued truth values:
\bitem
 \item[(R3)] For any $X,Y\in\cO(\cM)$, the expression
$X=_{o}Y$ is an observational proposition.
\item[(T4)]
$ \valo{X=_{o}Y}=\val{\tX=\tY}.
$
\eitem
We extend the correspondence between observational propositions 
and formulas in $\cL(\in,\VQ)$ by the following rule for any $X,Y\in\cO(\cM)$.
\bitem
\item[(Q4)] $\displaystyle\widetilde{X=_{o}Y}:=\tilde{X}=\tilde{Y}$.
\eitem
Then, from (T4) it is easy to see  that the relation
\beq\label{eq:OP-QST}
\val{\tilde{\ph}}=\valo{\ph}
\eeq
holds for any observational proposition $\ph$.  
We denote by $\Lo(\cM,=)$ the set of observational 
propositions constructed by rules (R1), (R2), (R3).
Then, the language $\Lo(\cM,=)$ is embedded in the set of 
statements in $\cL(\in,\RQ)$ by rules (Q1), (Q2), (Q3), (Q4)
with the same projection-valued truth value by rules (T1), (T2),
(T3), (T4).

In general, the equality relation in $\VQ$ is not an equivalence relation in $\VQ$
\cite{Ta81}.
From Theorem 6.3 of Ref.~\cite{07TPQ}, however, we conclude that
that $\cQ$-valued equality between two observables is indeed a
$\cQ$-valued equivalence relation as follows.

\begin{theorem}
For any observables $X,Y,Z\in\cO(\cM)$, the following relations hold.
\bitem
\item $\valo{X=_{o}X}=1$.
\item $\valo{X=_{o}Y}=\valo{Y=_{o}X}$.
\item $\valo{X=_{o}Y}\And \valo{Y=_{o}Z}\le \valo{X=_{o}Z}$.
\eitem
\end{theorem}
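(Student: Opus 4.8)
The plan is to derive all three relations directly from the characterization of $\val{\tX=\tY}$ in Theorem~\ref{th:equality}(i), which identifies this projection with $\cP\{\psi\in\cH\mid E^{X}(r)\psi=E^{Y}(r)\psi\text{ for all }r\in\Q\}$. Working with this subspace description is the cleanest route, since reflexivity and symmetry then become trivial, and transitivity reduces to an elementary statement about intersections of such subspaces. For (i), note that when $X=Y$ the defining condition $E^{X}(r)\psi=E^{X}(r)\psi$ holds for every $\psi\in\cH$, so the subspace is all of $\cH$ and its projection is $1$. For (ii), the defining condition $E^{X}(r)\psi=E^{Y}(r)\psi$ for all $r\in\Q$ is manifestly symmetric in $X$ and $Y$, so the two subspaces coincide and hence $\valo{X=_{o}Y}=\val{\tX=\tY}=\val{\tY=\tX}=\valo{Y=_{o}X}$ by (T4).

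For (iii), I would let $M_{XY}=\{\psi\in\cH\mid E^{X}(r)\psi=E^{Y}(r)\psi\text{ for all }r\in\Q\}$ and similarly $M_{YZ}$, $M_{XZ}$, so that by Theorem~\ref{th:equality}(i) the claim $\valo{X=_{o}Y}\And\valo{Y=_{o}Z}\le\valo{X=_{o}Z}$ is equivalent to $\cP(M_{XY})\And\cP(M_{YZ})\le\cP(M_{XZ})$. Since the meet of two projections in $\cQ(\cH)$ is the projection onto the intersection of their ranges, it suffices to show $M_{XY}\cap M_{YZ}\subseteq M_{XZ}$. But this is immediate: if $\psi$ lies in both subspaces then $E^{X}(r)\psi=E^{Y}(r)\psi$ and $E^{Y}(r)\psi=E^{Z}(r)\psi$ for every $r\in\Q$, whence $E^{X}(r)\psi=E^{Z}(r)\psi$ for every $r\in\Q$, i.e.\ $\psi\in M_{XZ}$. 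Passing to projections gives the desired inequality, and (T4) translates it back into the statement about $\valo{\cdot}$.

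The only point needing a little care — and the closest thing to an obstacle — is the step $M_{XY}\cap M_{YZ}\subseteq M_{XZ}$ combined with the lattice fact that $\cP(M_{XY})\And\cP(M_{YZ})=\cP(M_{XY}\cap M_{YZ})$; one must make sure the subspaces involved are closed so that the meet in the orthomodular lattice $\cC(\cH)\cong\cQ(\cH)$ is genuinely the intersection. Each $M_{XY}$ is closed, being an intersection over $r\in\Q$ of the kernels of the bounded operators $E^{X}(r)-E^{Y}(r)$, so this causes no difficulty. Note also that the argument does not require any commutativity or transfer-principle input and does not use (R3)/(Q4) beyond the definitional rule (T4); everything reduces to the subspace picture of Theorem~\ref{th:equality}. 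I would therefore present the proof in three short paragraphs, one per item, citing Theorem~\ref{th:equality}(i) and (T4) throughout, with (iii) carrying the substance via the inclusion $M_{XY}\cap M_{YZ}\subseteq M_{XZ}$.
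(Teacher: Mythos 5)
Your proof is correct, but it takes a different route from the paper. The paper gives no Hilbert-space argument at all: it obtains the theorem as a corollary of the quantum-set-theoretic facts quoted from Theorem 6.3 of Ref.~\cite{07TPQ}, namely the truth-value-one statements listed in Section~\ref{se:RN} such as $\val{(\forall u,v,w\in\R_\cQ)\, u=v\And v=w\Then u=w}=1$; instantiating these at $u=\tX$, $v=\tY$, $w=\tZ$ and using that $\val{\ph\Then\ps}=1$ is equivalent to $\val{\ph}\le\val{\ps}$ (together with rule (T4)) yields (i)--(iii) for all reals of $\VQ$ at once. You instead work concretely through the subspace characterization of Theorem~\ref{th:equality}(i), reducing (iii) to the inclusion $M_{XY}\cap M_{YZ}\subseteq M_{XZ}$ and the fact that the meet of projections is the projection onto the intersection of their (closed) ranges. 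This is a legitimate and essentially self-contained argument; your attention to closedness of the $M$'s is the right point of care, and the only tacit step worth making explicit is that the meet of $\valo{X=_{o}Y}$ and $\valo{Y=_{o}Z}$ computed in $\cQ=\cP(\cM)$ agrees with the meet in $\cQ(\cH)$, which follows from $P\And Q=\mbox{weak-lim}_{n\to\infty}(PQ)^n$ (or from the $\De_0$-Absoluteness Principle). What the paper's route buys is that it exhibits the theorem as a direct transfer of the equivalence-relation property of equality on $\R_\cQ$ established inside quantum set theory, in keeping with the paper's program; what your route buys is an elementary operator-theoretic proof that does not invoke the set-theoretic machinery beyond the characterization in Theorem~\ref{th:equality}.
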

\color{black}

We say that observables $X$ and $Y$ are {\em equal in
a state $\rh$}, in symbols $X=_{\rh}Y$,  
iff $\Pr\{X=_{o}Y\|\rh\}=1$, or equivalently iff
$\valo{X=_{o}Y}\rh=\rh$.  
In general, we say that 
observables $X$ and $Y$ are {\em equal in a state $\rh$
with probability $\Pr\{X=_{o}Y\|\rh\}$}.
On the other hand, we have explored another relation called quantum 
perfect correlation in Ref.~\cite{06QPC} as follows.
Two observables $X$ and $Y$ are called {\em perfectly correlated} 
in a state $\rh$ iff $\Tr[E^{X}(\De)E^{Y}(\Ga)\rh]=0$ 
for any disjoint Borel sets $\De,\Ga\in\cB(\R)$.
It is noted that the quantity 
$\Tr[E^{X}(\De)E^{Y}(\Ga)\rh]=0$ for $\De,\Ga\in\cB(\R)$
is called the {\em weak joint distribution} of $X$ and $Y$ in $\rh$,
and known to be experimentally accessible by weak measurement and
post-selection \cite{11UUP}.
We shall show that the above two relations are equivalent 
together with other equivalent conditions
to conclude that the relation $X=_{\rh}Y$ and the 
probability $\Pr\{X=_{o}Y\|\rh\}$
are experimentally accessible.

\begin{theorem}\label{th:ID}
For any observables $X,Y\in\cO(\cM)$ and $\rh\in\cS(\cH)$, 
the following conditions are all equivalent.
\bitem
\item $X=_{\rh}Y$, i.e., $\valo{X=_{o}Y}\rh=\rh$.
\item $X$ and $Y$ are perfectly correlated in $\rh$, i.e., 
$\Tr[E^{X}(\De)E^{Y}(\Ga)\rh]=0$ 
for all $\De,\Ga\in\cB(\R)$ with $\De\cap\Ga=\emptyset$.
\item $X\ps=Y\ps$ for all $\ps\in\cC(X,\rh)$.
\item $\bracket{\ps,E^{X}(\De)\ps}=\bracket{\ps,E^{Y}(\De)\ps}$
for all $\ps\in\cC(X,\rh)$.
\item $E^{X}(\De)\rh=E^{Y}(\De)\rh$ for all $\De\in\cB(\R)$.
\item $f(X)C(X;\rh)=f(Y)C(X;\rh)$ for all $f\in B(\R)$.
\item $C(X;\rh)=C(Y;\rh)$ and $XC(X;\rh)=YC(X;\rh)$.
\item There exists a joint probability distribution  
$\mu^{X,Y}_{\rho}(x,y)$ of $X,Y$ in $\rho$ that satisfies
\beq
\mu^{X,Y}_{\rho}(\{(x,y)\in\R^{2}\mid x=y\})=1.
\eeq
\eitem
\end{theorem}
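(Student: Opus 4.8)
The plan is to establish the cycle
\[
\text{(i)}\THEN\text{(vi)}\THEN\text{(vii)}\THEN\text{(viii)}\THEN\text{(ii)}\THEN\text{(v)}\THEN\text{(i)},
\]
together with the bypass $\text{(vi)}\THEN\text{(iii)}\THEN\text{(iv)}\THEN\text{(v)}$, working throughout with the projection $P:=\valo{X=_{o}Y}=\val{\tX=\tY}$ and abbreviating $C=C(X;\rho)$. Two preliminary remarks will be used repeatedly. First, the Hilbert--Schmidt identity $\|(1-P)\sqrt{\rho}\|_{HS}^{2}=1-\Tr[P\rho]$ shows that (i) is equivalent to $P\sqrt{\rho}=\sqrt{\rho}$, hence to $\overline{\ran}(\rho)\subseteq\ran(P)$. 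Second, by right-continuity of the resolutions of identity together with Theorem~\ref{th:equality}(i), (ii), the subspace $\ran(P)$ equals $\{\psi\mid E^{X}(\De)\psi=E^{Y}(\De)\psi\ \mbox{for all }\De\in\cB(\R)\}=\{\psi\mid f(X)\psi=f(Y)\psi\ \mbox{for all }f\in B(\R)\}$; in particular $\ran(P)$ is invariant under every $E^{X}(\De)$, hence under $\{X\}''$, so that $P\in\{X\}'$.

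For (i)$\THEN$(vi): since $P\in\{X\}'$ and $P\rho=\rho$, minimality of $C$ gives $C\le P$, and then $f(X)C=f(Y)C$ for all $f\in B(\R)$ by the description of $\ran(P)$ above. For (vi)$\THEN$(vii): taking $f=\ch_{\De}$ gives $E^{X}(\De)C=E^{Y}(\De)C$, so $E^{X}(\De)$ and $E^{Y}(\De)$ agree on $\ran(C)$; as $\ran(C)$ is $E^{X}(\De)$-invariant it is also $E^{Y}(\De)$-invariant, whence $C\in\{Y\}'$ and $C\rho=\rho$ force $C(Y;\rho)\le C$; multiplying $E^{X}(\De)C=E^{Y}(\De)C$ on the right by $C(Y;\rho)$ and rerunning the argument with $X$ and $Y$ exchanged gives $C\le C(Y;\rho)$, hence $C=C(Y;\rho)$, and since $C$ then reduces both $X$ and $Y$ with $E^{X}(\cdot)C=E^{Y}(\cdot)C$, also $XC=YC$. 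The bypass (vi)$\THEN$(iii)$\THEN$(iv) is routine, the only care being the domain bookkeeping for unbounded $X,Y$, which is settled by the facts just proved; and (iv)$\THEN$(v) is a short computation: polarization turns (iv) into $C\big(E^{X}(\De)-E^{Y}(\De)\big)C=0$, i.e.\ (using $C\in\{X\}'$) $E^{X}(\De)C=CE^{Y}(\De)C$, so that for $\psi\in\ran(C)$ the equalities $\|E^{X}(\De)\psi\|^{2}=\langle\psi,E^{X}(\De)\psi\rangle=\langle\psi,E^{Y}(\De)\psi\rangle=\|E^{Y}(\De)\psi\|^{2}$ combined with $\|CE^{Y}(\De)\psi\|^{2}=\|E^{X}(\De)\psi\|^{2}$ force $CE^{Y}(\De)\psi=E^{Y}(\De)\psi$, whence $E^{X}(\De)C=E^{Y}(\De)C$ and $E^{X}(\De)\rho=E^{X}(\De)C\rho=E^{Y}(\De)C\rho=E^{Y}(\De)\rho$.

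For (ii) versus (v): assuming (v), for disjoint $\De,\Ga$ we have $\Tr[E^{X}(\De)E^{Y}(\Ga)\rho]=\Tr[E^{X}(\De)E^{X}(\Ga)\rho]=\Tr[E^{X}(\De\cap\Ga)\rho]=0$; conversely, assuming (ii), the choice $\Ga=\De^{c}$, together with the self-adjointness of $\rho$ (which yields $\Tr[E^{Y}(\De)E^{X}(\De^{c})\rho]=\overline{\Tr[E^{X}(\De^{c})E^{Y}(\De)\rho]}=0$), shows that the four traces $\Tr[E^{X}(\De)\rho]$, $\Tr[E^{Y}(\De)\rho]$, $\Tr[E^{X}(\De)E^{Y}(\De)\rho]$, $\Tr[E^{Y}(\De)E^{X}(\De)\rho]$ coincide, so the nonnegative number $\|\big(E^{X}(\De)-E^{Y}(\De)\big)\sqrt{\rho}\|_{HS}^{2}=\Tr\big[(E^{X}(\De)-E^{Y}(\De))^{2}\rho\big]$ vanishes, giving $E^{X}(\De)\sqrt{\rho}=E^{Y}(\De)\sqrt{\rho}$ and hence (v). Finally (v)$\THEN$(i): $\big(E^{X}(\De)-E^{Y}(\De)\big)\rho=0$ gives $\overline{\ran}(\rho)\subseteq\ker\big(E^{X}(\De)-E^{Y}(\De)\big)$ for every $\De$, hence $\overline{\ran}(\rho)\subseteq\ran(P)$ by Theorem~\ref{th:equality}(i).

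It remains to bring in (viii). The implication (viii)$\THEN$(ii) is immediate: putting $f=\ch_{\De}$, $g=\ch_{\Ga}$, $p(a,b)=ab$ in the defining identity of the joint probability distribution $\mu^{X,Y}_{\rho}$ gives $\Tr[E^{X}(\De)E^{Y}(\Ga)\rho]=\mu^{X,Y}_{\rho}(\De\times\Ga)$, which is $\le\mu^{X,Y}_{\rho}(\{x\ne y\})=0$ whenever $\De\cap\Ga=\emptyset$. The step I expect to be the main obstacle is (vii)$\THEN$(viii). Here I would first deduce from (vii) that $X$ and $Y$ are simultaneously determinate in $\rho$: since $C=C(X;\rho)=C(Y;\rho)$ reduces both $X$ and $Y$ and contains $\overline{\ran}(\rho)$, the cyclic projection $C(X,Y;\rho)$ satisfies $C(X,Y;\rho)\le C$, so $X\,C(X,Y;\rho)=Y\,C(X,Y;\rho)$ and condition~(vi) of Theorem~\ref{th:JPD} holds, whence Theorem~\ref{th:JPD} supplies the joint probability distribution $\mu^{X,Y}_{\rho}$ on $\R^{2}$ with $\mu^{X,Y}_{\rho}(\De_{1}\times\De_{2})=\Tr[E^{X}(\De_{1})\And E^{Y}(\De_{2})\rho]$. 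Using $E^{X}(\De)\sqrt{\rho}=E^{Y}(\De)\sqrt{\rho}$ (which follows from (v), itself a consequence of (vii)) one checks $\big(E^{X}(\De)\And E^{Y}(\De^{c})\big)\sqrt{\rho}=0$, so $\mu^{X,Y}_{\rho}(\De\times\De^{c})=0$; writing the complement of the diagonal as the countable union $\{x\ne y\}=\bigcup_{q_{1}<q_{2}\in\Q}\big((-\infty,q_{1}]\times[q_{2},\infty)\cup[q_{2},\infty)\times(-\infty,q_{1}]\big)$ of products of disjoint Borel sets and using $\mu^{X,Y}_{\rho}(\De\times\Ga)\le\mu^{X,Y}_{\rho}(\De\times\De^{c})$ for $\Ga\subseteq\De^{c}$, we obtain $\mu^{X,Y}_{\rho}(\{x\ne y\})=0$, i.e.\ $\mu^{X,Y}_{\rho}(\{x=y\})=1$, which is (viii). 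Apart from this step, the recurring delicate point is the systematic use of the bounded spectral projections $E^{X}(\De),E^{Y}(\De)$ in place of the possibly unbounded $X,Y$, needed to sidestep domain questions.
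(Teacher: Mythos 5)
Your proposal is correct in substance but proceeds quite differently from the paper: the paper's proof of Theorem~\ref{th:ID} is a one-line reduction to Theorem~\ref{th:equality} together with Theorems 3.2, 3.4, 4.3 and 5.3 of \Cite{06QPC}, whereas you give a self-contained argument, the cycle (i)$\THEN$(vi)$\THEN$(vii)$\THEN$(viii)$\THEN$(ii)$\THEN$(v)$\THEN$(i) with a bypass through (iii) and (iv), using only Theorem~\ref{th:equality} and Theorem~\ref{th:JPD} of the present paper. That is a genuine gain in self-containedness: in particular your handling of (vii)$\THEN$(viii) via simultaneous determinateness (condition (vi) of Theorem~\ref{th:JPD}, since $C(X,Y;\rh)\le C(X;\rh)$ forces $XC(X,Y;\rh)=YC(X,Y;\rh)$) and the covering of $\{x\neq y\}$ by countably many products of disjoint Borel sets is exactly the material the paper delegates to \Cite{06QPC}. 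The Hilbert--Schmidt and trace manipulations for (i), (ii), (v), the polarization argument for (iv)$\THEN$(v), and the minimality arguments identifying $C(X;\rh)$ with $C(Y;\rh)$ in (vi)$\THEN$(vii) are all sound.

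Two points need repair. The substantive one is the exit from condition (iii). In your scheme (iii) must imply something already in the established cycle, and you dismiss (iii)$\THEN$(iv) as routine, with the domain bookkeeping ``settled by the facts just proved''; but those facts (that $C(X;\rh)$ reduces $Y$ and that $E^{X}(\cdot)C=E^{Y}(\cdot)C$) were derived from (vi), so they are not available when (iii) is the sole hypothesis --- as written you have really only shown (vi)$\THEN$(iv), leaving (iii) implied by everything but implying nothing. For unbounded $X,Y$ the missing step is a genuine (if standard) argument: reading (iii) as $X\ps=Y\ps$ for $\ps\in\dom(X)\cap\cC(X,\rh)$, use that $\cC(X,\rh)$ reduces $X$ with $X|_{\cC}$ self-adjoint, so $(Y\pm i)$ maps $\dom(X)\cap\cC(X,\rh)$ onto $\cC(X,\rh)$, hence $(Y\pm i)^{-1}$ leaves $\cC(X,\rh)$ invariant; therefore $\cC(X,\rh)$ reduces $Y$ and $Y|_{\cC}=X|_{\cC}$, which yields (iv) (indeed (vi)). This is essentially the content of the cited Theorem 4.3 of \Cite{06QPC} and must be spelled out. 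The minor point concerns (vii)$\THEN$(viii): you attribute to Theorem~\ref{th:JPD} the formula $\mu^{X,Y}_{\rh}(\De_1\times\De_2)=\Tr[E^{X}(\De_1)\And E^{Y}(\De_2)\rh]$ for the joint probability distribution, but Theorem~\ref{th:JPD}(viii) only asserts the existence of some measure with that property, not that it coincides with the JPD of condition (v). The fix stays inside your own toolkit: by the defining identity of the JPD with $p(a,b)=ab$, $f=\ch_{\De}$, $g=\ch_{\De^{c}}$ one gets $\mu^{X,Y}_{\rh}(\De\times\De^{c})=\Tr[E^{X}(\De)E^{Y}(\De^{c})\rh]=\Tr[E^{X}(\De)E^{X}(\De^{c})\rh]=0$ directly from (v) (which you already obtained from (vii)), and your countable-cover argument then delivers $\mu^{X,Y}_{\rh}(\{x=y\})=1$ unchanged.
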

\begin{proof} The assertions follow from Theorem \ref{th:equality} above
and Theorems 3.2, Theorem 3.4, Theorem 4.3, and Theorem 5.3 in \Cite{06QPC}.
\end{proof} 
\color{black}

The equivalence between (i) and (viii) was previously reported in Theorem 4 in \Cite{11QRM}
for the case where $\cH<\infty$ and $\cM=\cB(\cH)$.

Let $\ph(X_1,\ldots,X_n)$ be an observational proposition 
that is constructed by rules (R1), (R2), (R3) and 
includes symbols for observables only from the list $X_1,\ldots,X_n$, {\em i.e., }
$\ph(X_1,\ldots,X_n)$ includes only atomic observational propositions 
of the form $X_j\leo x_j$ or $X_j=X_k$, where $j, k=1,\ldots,n$ and $x_j$ is
the symbol for an arbitrary real number.
In this case, $\ph(X_1,\ldots,X_n)$ is said to be an observational proposition
in $\Lo(X_1,\ldots,X_n)$.
Then, $\ph(X_1,\ldots,X_n)$ is said to be {\em contextually well-formed} 
in a state $\rh$ iff $X_1,\ldots,X_n$ are simultaneously determinate in $\rh$.
The following theorem answers the question as to in what state $\rh$  
the probability assignment satisfies rules for calculus of classical probability,  and shows 
that for well-formed observational propositions $\ph(X_1,\ldots,X_n)$ 
the projection-valued truth value assignment satisfies Boolean inference rules.
\begin{theorem}
Let  $\ph(X_1,\ldots,X_n)$ be an observational proposition in $\Lo(X_1,\ldots,X_n)$.
If $\ph(X_1,\ldots,X_n)$ is a tautology in classical logic,
then we have 
\[
\com(\tX_1,\ldots,\tX_n)\le\valo{\ph(X_1,\ldots,X_n)}.
\]
Moreover, if $\ph(X_1,\ldots,X_n)$ is contextually well-formed
in a state $\rh$, then $\ph(X_1,\ldots,X_n)$ holds in $\rh$.
\end{theorem}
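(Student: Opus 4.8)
The plan is to derive the first inequality $\com(\tX_1,\ldots,\tX_n)\le\valo{\ph(X_1,\ldots,X_n)}$ from the ZFC Transfer Principle (Theorem~\ref{th:TP}) by translating $\ph$ into a $\De_{0}$-formula of $\cL(\in)$ that is provable in ZFC, and then to read off the ``Moreover'' clause by monotonicity of the trace. Indeed, if $\ph(X_1,\ldots,X_n)$ is contextually well-formed in $\rh$ then $X_1,\ldots,X_n$ are simultaneously determinate in $\rh$, i.e.\ $\Tr[\com(\tX_1,\ldots,\tX_n)\rh]=1$; since $P\le Q$ implies $\Tr[P\rh]\le\Tr[Q\rh]$ for a density operator $\rh$, the first inequality then forces $\Pr\{\ph(X_1,\ldots,X_n)\|\rh\}=\Tr[\valo{\ph(X_1,\ldots,X_n)}\rh]=1$, so $\ph(X_1,\ldots,X_n)$ holds in $\rh$. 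Thus everything reduces to the first inequality.

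To prove it, view $\ph$ as a propositional combination $\ph(p_1,\ldots,p_m)$, built with $\Not$, $\And$ and the derived $\Or$, of the distinct atomic observational propositions $p_1,\ldots,p_m$ occurring in it, each of the form $X_j\leo x$ or $X_j=_{o}X_k$. Let $x^{(1)},\ldots,x^{(q)}\in\R$ list the real parameters appearing, introduce variables $v_1,\ldots,v_n$ (for $X_1,\ldots,X_n$) and $w_1,\ldots,w_q$ (for those parameters), and let $\theta(v_1,\ldots,v_n,w_1,\ldots,w_q)$ be obtained from $\ph$ by replacing $X_j\leo x^{(i)}$ with the $\De_{0}$-formula ``$v_j\le w_i$'' (the set-theoretic order on the reals of $\VQ$; reading a real as the upper segment of its Dedekind cut this unfolds to the bounded formula $w_i\subseteq v_j$) and $X_j=_{o}X_k$ with ``$v_j=v_k$'', and set
\[
\psi:=\bigl(\R(v_1)\And\cdots\And\R(v_n)\And\R(w_1)\And\cdots\And\R(w_q)\bigr)\Then\theta .
\]
Since $\R(x)$ is a $\De_{0}$-formula in the parameter $\check{\Q}$ and $\theta$ involves no unbounded quantifiers, $\psi$ is a $\De_{0}$-formula. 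Its universal closure is provable in ZFC: reasoning in ZFC, fix sets $v_1,\ldots,v_n,w_1,\ldots,w_q$; if some of them is not a real the implication is vacuous, and otherwise every translated atom has a definite truth value, which defines a Boolean assignment $\be$ to $p_1,\ldots,p_m$; because $\ph$ is a classical tautology it evaluates to true under $\be$, and an induction on the structure of $\ph$ (using excluded middle for each translated atom) shows that $\theta$ holds iff $\ph$ evaluates to true under $\be$, whence $\theta$, and hence $\psi$, holds.

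Now apply Theorem~\ref{th:TP} to $\psi$ with $v_j\mapsto\tX_j$ and $w_i\mapsto\tilde{x}^{(i)}:=(x^{(i)}1)\,\tilde{}$, giving
\[
\cuniv(\tX_1,\ldots,\tX_n,\tilde{x}^{(1)},\ldots,\tilde{x}^{(q)})\le\val{\psi(\tX_1,\ldots,\tX_n,\tilde{x}^{(1)},\ldots,\tilde{x}^{(q)})}.
\]
On the right, $\tX_1,\ldots,\tX_n,\tilde{x}^{(1)},\ldots,\tilde{x}^{(q)}\in\RQ$, so the antecedent of $\psi$ has truth value $1$ and the right-hand side equals $\val{\theta(\tX_1,\ldots,\tX_n,\tilde{x}^{(1)},\ldots,\tilde{x}^{(q)})}$, which, by the embedding of observational propositions into $\cL(\in,\RQ)$ of Sections~\ref{se:2} and~\ref{se:4} (rules (Q1)--(Q4), (T1)--(T4)) together with \eq{OP-QST}, is exactly $\val{\tilde{\ph}}=\valo{\ph(X_1,\ldots,X_n)}$. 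On the left, $\L(\tilde{x}^{(i)})=\{0,1\}$ for each $i$, so $\cuniv(\tX_1,\ldots,\tX_n,\tilde{x}^{(1)},\ldots,\tilde{x}^{(q)})=\com(\L(\tX_1,\ldots,\tX_n)\cup\{0,1\})=\com(\tX_1,\ldots,\tX_n)$, the last equality because adjoining $0$ and $1$ to a subset of a logic does not change its commutator. Combining the two computations gives $\com(\tX_1,\ldots,\tX_n)\le\valo{\ph(X_1,\ldots,X_n)}$, as wanted. The main obstacle is the middle step: setting up the translation so that $\psi$ is genuinely a $\De_{0}$-formula agreeing with the embedding of Sections~\ref{se:2} and~\ref{se:4} at the atomic level, and checking --- this is exactly where the tautology hypothesis is used --- that its universal closure is a theorem of ZFC; granting this, the Transfer Principle and the elementary truth-value and support computations complete the proof.
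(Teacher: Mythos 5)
Your proposal is correct and takes essentially the same route as the paper: translate the tautology into a ZFC-provable $\De_0$-formula, apply the ZFC Transfer Principle with the scalar parameters $\tilde{x}^{(i)}$ (whose trivial support leaves $\com(\tX_1,\ldots,\tX_n)$ unchanged), identify the resulting truth value with $\valo{\ph}$ via the embedding (Q1)--(Q4) and relation \eq{OP-QST}, and get the ``Moreover'' clause by monotonicity of the trace. The only difference is that you guard the formula with $\R(v_1)\And\cdots\And\R(w_q)$ and discharge it afterwards, which is harmless but unnecessary, since any substitution instance of a propositional tautology is already provable in first-order logic, which is how the paper phrases it.
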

\begin{proof}
Suppose that an observational proposition 
$\ph=\ph(X_1,\ldots,X_n)$ is a tautology in classical logic.
Let $\tilde{\ph}$ be the corresponding formula in $\cL(\in,\VQ)$.
Then, it is easy to see that there is a formula $\ph_0(u_1,\ldots,u_n, v_1,\ldots,v_m)$
in $\cL(\VQ)$ provable in ZFC satisfying
$\ph_0(\tX_1,\ldots,\tX_n,\tilde{r}_1,\ldots,\tilde{r}_m)=\tilde{\ph}$
with some real numbers $r_1,\ldots,r_m$.
Then, by the ZFC Transfer Principle (Theorem \ref{th:TP}), we have  
$\com(\tX_1,\ldots,\tX_n)\le\val{\tilde{\ph}}$.
Thus, the assertion follows from relation (\ref{eq:OP-QST}).
\end{proof}

The above theorem was previously announced as Theorem 3 in \Cite{11QRM}
for the case where $\cH<\infty$ and $\cM=\cB(\cH)$, the proof of which
needs quantum set theory and the embedding of the language of observational
propositions into the language of quantum set theory developed in this paper.

Note that for any $X,Y\in\cO(\cM)$
there exists a proposition $\ph$ in $\Lo(\cM)$ such that 
$\valo{\ph}=\val{\tX=\tY}$.
In fact,  we have $\val{\tX=\tY}\in\cO_{\om}(\cM)$ and 
$\val{\val{\tX=\tY}=_{o}1}=\val{\tX=\tY}$.
However, it is not in general possible to construct such $\ph$ from atomic
propositions of the form $X_j\leo \la$ for $j=1,\ldots,n$ with $\la\in\R$.
In what follows, we shall show that this is possible for finite observables.

For any finite observables $X,Y$, we define 
the observational proposition $X=Y$ by 
\beq
X=_{o}Y:=\Sup_{x\in \Sp(X)}X=_{o}x\And Y=_{o}x.
\eeq
Then, we have the following.
\begin{theorem}\label{th:EQ}
For any finite observables $X,Y\in\cO_{\om}(\cH)$,
we have
\beql{QE}
\valo{X=_{o}Y}=\val{\tX=\tY}.
\eeq
\end{theorem}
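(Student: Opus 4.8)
The plan is to compute both sides of \eqref{eq:QE} in terms of spectral projections and match them using Theorem \ref{th:equality}(i). First I would unwind the left-hand side. By definition $X=_{o}Y:=\Sup_{x\in\Sp(X)}X=_{o}x\And Y=_{o}x$, and since for a finite observable $Z$ we have $\valo{Z=_{o}z}=E^{Z}(\{z\})$, rules (T2), (T3), (D2) give
\[
\valo{X=_{o}Y}=\Sup_{x\in\Sp(X)}E^{X}(\{x\})\And E^{Y}(\{x\}).
\]
So the task reduces to showing that this supremum of "diagonal" meets equals $\val{\tX=\tY}$.

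Next I would bring in Theorem \ref{th:equality}(i), which tells us $\val{\tX=\tY}=\cP\{\psi\in\cH\mid E^{X}(r)\psi=E^{Y}(r)\psi\text{ for all }r\in\Q\}$, or equivalently (passing to differences of cumulative projections, or using (iii)) the projection onto the set of $\psi$ with $E^{X}(\De)\psi=E^{Y}(\De)\psi$ for all $\De\in\cB(\R)$. The containment $\valo{X=_{o}Y}\le\val{\tX=\tY}$ should be the easy direction: if $\psi\in\ran(E^{X}(\{x\})\And E^{Y}(\{x\}))$ then $E^{X}(\{x\})\psi=E^{Y}(\{x\})\psi=\psi$, and since $X$ and $Y$ are finite one checks $E^{X}(\De)\psi=E^{Y}(\De)\psi$ for every $\De$ (it is $\psi$ or $0$ according as $x\in\De$), so $\psi$ lies in the range of the right-hand projection; taking the supremum over $x$ preserves the inequality. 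For the reverse containment, I would argue that if $\psi$ satisfies $E^{X}(\De)\psi=E^{Y}(\De)\psi$ for all $\De$, then in particular $\Sp(Y)$-support of $\psi$ is forced to agree with $\Sp(X)$-support: decomposing $\psi=\sum_{x\in\Sp(X)}E^{X}(\{x\})\psi$ and using $E^{X}(\{x\})\psi=E^{Y}(\{x\})\psi$, each summand lies in $\ran(E^{X}(\{x\})\And E^{Y}(\{x\}))$, hence $\psi\in\ran(\Sup_{x}E^{X}(\{x\})\And E^{Y}(\{x\}))$.

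The step I expect to be the main obstacle is handling the supremum and the meet carefully in the orthomodular lattice: $\Sup_{x\in\Sp(X)} E^{X}(\{x\})\And E^{Y}(\{x\})$ is a finite supremum of projections, but the projections $E^{X}(\{x\})\And E^{Y}(\{x\})$ for distinct $x$ need not commute with one another, so I cannot naively say their supremum's range is the direct sum of their ranges. The resolution is that the $E^{X}(\{x\})$ are mutually orthogonal, which forces $E^{X}(\{x\})\And E^{Y}(\{x\})\le E^{X}(\{x\})$ to be mutually orthogonal as well, so the finite supremum is in fact an orthogonal sum $\sum_{x\in\Sp(X)}E^{X}(\{x\})\And E^{Y}(\{x\})$ with ranges adding directly; this makes the range computation in the previous paragraph rigorous. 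Alternatively, and perhaps more cleanly, I would note $E^{X}(\{x\})\And E^{Y}(\{x\})=E^{X}(\{x\})E^{Y}(\{x\})E^{X}(\{x\})$ only when these commute, so instead I would verify the two range inclusions directly at the level of vectors as sketched, which sidesteps any lattice-theoretic subtlety. Once both inclusions are in hand, \eqref{eq:QE} follows immediately. As a sanity check one notes $\val{\tX=\tY}\in\cO_{\om}(\cM)$ and that this construction is consistent with the fact, already observed in the text, that $\valo{\val{\tX=\tY}=_{o}1}=\val{\tX=\tY}$.
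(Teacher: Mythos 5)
Your proposal is correct and follows essentially the same route as the paper's proof: compute $\valo{X=_{o}Y}=\Sup_{x\in\Sp(X)}E^{X}(\{x\})\And E^{Y}(\{x\})$ and verify the two range inclusions vector-wise via Theorem \ref{th:equality}, using that $E^{X}(\{x\})\psi=E^{Y}(\{x\})\psi$ forces each component into the range of the meet and, conversely, that a vector fixed by a meet $E^{X}(\{x\})\And E^{Y}(\{x\})$ satisfies $E^{X}(\De)\psi=E^{Y}(\De)\psi$ for all Borel $\De$. The only cosmetic difference is that you prove the easy inclusion term-by-term before taking the supremum, while the paper argues directly with a generic vector in the range of the supremum; both are sound.
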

\begin{proof}
Let $\psi\in\cR(\valo{X=Y})$.
Then, for any $x\in\Sp(X)$, we have 
\[
E^{X}(\{x\})\psi=E^{X}(\{x\})\cap E^{Y}(\{x\})\psi=E^{Y}(\{x\})\psi,
\]
and for any $x\not\in \Sp(X)$, we have 
$E^{X}(\{x\})\psi=0=E^{Y}(\{x\})\psi.
$
Thus, $\psi\in\cR(\val{\tX=\tY})$ follows from Theorem \ref{th:equality} (i).
Conversely, suppose $\psi\in\cR(\val{\tX=\tY})$.
Then, for all $x\in\R$, we have
$E^{X}(\{x\})\psi=E^{Y}(\{x\})\psi$
so that we have  
$E^{X}(\{x\})\And E^{Y}(\{x\})\psi=E^{X}(\{x\})\psi$.
Thus, we have $\valo{X=Y}\psi=\psi$.  Therefore, the assertion follows.
\end{proof}

As shown in \Cite{11QRM} for the finite dimensional case, state-dependent
equality between finite observables are generally characterized in terms 
of eigenvectors as follows.
\begin{theorem}\label{th:EQ}
Finite observables  $X$ and $Y$ are equal in 
a vector state $\ps$  if and only if the state $\ps$ is a superposition
of common eigenvectors of $X$ and $Y$ with common eigenvalues.
\end{theorem}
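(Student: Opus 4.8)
The plan is to reduce this to the already-established characterization of state-dependent equality via eigenvectors together with the theorem just proved, that $\valo{X=_{o}Y}=\val{\tX=\tY}$ for finite observables, and the earlier characterization $\val{\tX=\tY}=\cP\{\psi\mid E^X(r)\psi=E^Y(r)\psi\text{ for all }r\in\Q\}$ (Theorem~\ref{th:equality}(i)). The statement ``$X$ and $Y$ are equal in a vector state $\ps$'' means $\valo{X=_{o}Y}\ps=\ps$, i.e.\ $\ps\in\cR(\valo{X=_{o}Y})=\cR(\val{\tX=\tY})$. So the whole claim is the assertion that, for finite $X,Y$, one has $\ps\in\cR(\val{\tX=\tY})$ if and only if $\ps$ lies in the closed linear span of the vectors that are simultaneously eigenvectors of $X$ and of $Y$ for one and the same eigenvalue.

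First I would spell out the structure of $\cR(\val{\tX=\tY})$ for finite observables. Since $X$ is finite, $\cH=\bigoplus_{x\in\Sp(X)}E^X(\{x\})\cH$, and likewise for $Y$. By Theorem~\ref{th:equality}(i) (using that the spectral measures are step measures, so the condition ``$E^X(r)\psi=E^Y(r)\psi$ for all $r\in\Q$'' is equivalent to ``$E^X(\{x\})\psi=E^Y(\{x\})\psi$ for all $x\in\R$,'' exactly as extracted in the proof of the preceding Theorem~\ref{th:EQ}), a vector $\ps$ lies in $\cR(\val{\tX=\tY})$ iff $E^X(\{x\})\ps=E^Y(\{x\})\ps$ for every real $x$. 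For such a $\ps$, write $\ps=\sum_{x\in\Sp(X)}\ps_x$ with $\ps_x=E^X(\{x\})\ps$; then $\ps_x=E^Y(\{x\})\ps$ as well, hence $\ps_x\in\cR(E^X(\{x\}))\cap\cR(E^Y(\{x\}))$, i.e.\ each $\ps_x$ is either $0$ or a common eigenvector of $X$ and $Y$ with common eigenvalue $x$. This shows $\ps$ is a (finite) superposition of common eigenvectors with common eigenvalues. Conversely, if $\ps=\sum_j c_j\eta_j$ with $X\eta_j=Y\eta_j=x_j\eta_j$, then grouping the $\eta_j$ by eigenvalue and using orthogonality of distinct spectral subspaces gives $E^X(\{x\})\ps=E^Y(\{x\})\ps$ for every $x$, so $\ps\in\cR(\val{\tX=\tY})$ by Theorem~\ref{th:equality}(i) again. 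Combining with the established identity $\valo{X=_{o}Y}=\val{\tX=\tY}$ yields the theorem.

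The only mild subtlety — and the step I would be most careful about — is the passage in Theorem~\ref{th:equality}(i) from ``$E^X(r)\ps=E^Y(r)\ps$ for all rational $r$'' to ``$E^X(\{x\})\ps=E^Y(\{x\})\ps$ for all real $x$'' in the finite case; this is exactly the computation already carried out inside the proof of the preceding Theorem~\ref{th:EQ}, where one uses $E^X(\{x\})=E^X(x)-\Sup_{r<x,r\in\Q}E^X(r)$ and the fact that $\Sp(X)$ is finite so the relevant suprema are attained. With that in hand there is no remaining obstacle: everything else is bookkeeping with the orthogonal decomposition of $\cH$ into the finitely many spectral eigenspaces of $X$ and of $Y$, and the observation that ``common eigenvectors with common eigenvalues'' is precisely $\bigcup_{x}\big(\cR(E^X(\{x\}))\cap\cR(E^Y(\{x\}))\big)$, whose closed linear span is $\bigoplus_x \big(\cR(E^X(\{x\}))\cap\cR(E^Y(\{x\}))\big)$. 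I would present the argument in the two implications above, citing Theorem~\ref{th:equality}(i) and the identity $\valo{X=_{o}Y}=\val{\tX=\tY}$, and noting the finite-dimensional antecedent in Ref.~\cite{11QRM}.
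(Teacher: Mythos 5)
Your proof is correct. Note first that the paper itself gives no proof of this theorem: it is stated as the general form of a result shown in Ref.~\cite{11QRM} for the finite-dimensional case, and is meant to follow from the material just developed, so there is no ``paper proof'' to match line by line; your argument fills that gap in a sound way. Your route --- translate ``equal in the vector state $\ps$'' into $\ps\in\cR(\val{\tX=\tY})$, invoke Theorem~\ref{th:equality}(i), pass from the rational condition $E^{X}(r)\ps=E^{Y}(r)\ps$ to the pointwise condition $E^{X}(\{x\})\ps=E^{Y}(\{x\})\ps$ using finiteness of the spectra, and then decompose $\ps=\sum_{x\in\Sp(X)}E^{X}(\{x\})\ps$ --- is exactly the computation already embedded in the proof of the preceding theorem, and all the steps (including the converse via orthogonality of distinct spectral subspaces and the closed-span remark for infinite superpositions) check out. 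A slightly shorter route, closer to what the paper implicitly intends, is to use the preceding theorem directly: for finite observables $\valo{X=_{o}Y}=\Sup_{x\in\Sp(X)}\valo{X=_{o}x}\And\valo{Y=_{o}x}=\Sup_{x}E^{X}(\{x\})\And E^{Y}(\{x\})$, and the range of this projection is by definition the closed linear span of the common eigenvectors of $X$ and $Y$ with common eigenvalues, so equality in $\ps$ (i.e.\ $\valo{X=_{o}Y}\ps=\ps$) is immediately the stated superposition condition. Your version re-derives the pointwise spectral characterization instead of quoting that identity, which costs a little length but makes the argument self-contained; either presentation is acceptable.
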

\color{black}

\section{Measurements of observables}
\label{se:mob}

In this and next sections, we shall discuss measurements for a quantum system  
described by a von Neumann algebra $\cM$ on a Hilbert space $\cH$.

A {\em probability operator-valued measure (POVM)} for a 
von Neumann algebra $\cM$
on $\R^n$ is a mapping $\Pi:\cB(\R^{n})\to\cM$ satisfying the following conditions.
\bitem
\item[(M1)]  
$\Pi(\De)\ge0$ for all $\De\in\cB(\R^n)$.
\item[(M2)] $\sum_j \Pi(\De_j)=1$ for any disjoint sequence of Borel sets 
$\De_1,\De_2, \ldots\in\cB(\R^n)$ such that $\R^n=\bigcup_j \De_j$.
\eitem
 A {\em measuring process} for $\cM$ is defined to be a quadruple
$(\cK,\si,U,M)$ consisting of a Hilbert space $\cK$, a state (density operator)
$\si$ on $\cK$,  a unitary operator $U$ on $\cH\otimes\cK$, and an observable 
$M$ on $\cK$ satisfying
\beql{MP}
\Tr_{\cK}[U^{\da}(X\otimes E^{M}(\De))U(1\otimes \si)]\in\cM
\eeq
for every $X\in\cM$ and $\De\in\cB(\R)$, where  $\Tr_{\cK}$ stands for the 
partial trace on $\cK$ \cite{84QC,16A1}.

A measuring process $\M(\bx)=(\cK,\si,U,M)$ with {\em output variable} $\bx$
describes a
measurement carried out by an interaction, called the {\em measuring interaction}, 
from time 0 to time $\De t$ between the measured system
$\bS$ described by $\cM$ and the {\em probe} system $\bP$ described by $\cB(\cK)$ 
that is prepared in the state $\si$ at time 0.
The outcome of this measurement is obtained by
measuring the observable $M$, called the {\em meter observable},  
in the probe at time $\De t$.
The unitary operator $U$ describes the time evolution of $\bS+\bP$ from time 0 to
$\De t$.  We shall write $M(0)=1\otimes M$, $M(\De t)=U^{\da}M(0)U$, 
$X(0)=X\otimes 1$,  and $X(\De t)=U^{\da}X(0)U$
for any observable $X\in\cO(\cM)$.
We can use the probabilistic interpretation for the system $\bS+\bP$.
The {\em output distribution}
$\Pr\{\bx\in \De\|\rho\}$,
the probability distribution 
of the output variable $\bx$ of this measurement on input state $\rh\in\cS(\cH)$, 
is naturally defined as
\[
\Pr\{\bx\in \De\|\rho\}=\Pr\{M(\De t)\in \De\|\rho\otimes\si\}
=\Tr[E^{M(\De t)}(\De)\rho\otimes\si].
\]
The {\em POVM of the measuring process $\M(\bx)$} is defined by
\[
\Pi(\De)
=\Tr_{\cK}[E^{M(\De t)}(\De)(I\otimes\si)].
\]
Then,  $\Pi(\De)\in\cM$ for all $\De\in\cB(\R)$ by \Eq{MP} and
$\Pi:\cB(\R)\to\cM$ is a POVM  for $\cM$ on $\R$ satisfying
\bitem
\item[(M3)] $\Pr\{\bx\in \De\|\rh\}=\Tr[\Pi(\De)\rho]$.
\eitem
Conversely, from a general result  in \Cite{84QC} it can be easily seen
that for every POVM $\Pi$ for $\cM$ on $\R$,
there is a measuring process $\M(\bx)=(\cK,\si,U,M)$ for $\cM$ satisfying (P3).
In fact, for any fixed $\rh_0\in\cS(\cH)$ the relation 
$\cI(\De)^{*}X=\Tr[X\rh_{0}]\Pi(\De)$ for all $X\in\cM$ and $\De\in\cB(\R)$
defines a completely positive instrument for $\cB(\cH)$ on $\R$, and by Theorem 5.1
in \Cite{84QC} there exists a measuring process $\bM(\bx)=(\cK,\si,U,M)$ for $\cB(\cH)$
such that $\Tr[X\rh_{0}]\Pi(\De)=\Tr_{\cK}[U^{\da}(X\otimes E^{M}(\De))U(1\otimes\si)]$
for all $X\in\cM$ and $\De\in\cB(\R)$.
Then, it is easy to see that $\M(\bx)$ is a measuring process for $\cM$ and
satisfies (P3).
For further accounts of the universality of the class of measurement models 
described by measuring processes we refer the reader to \Cite{84QC} for
quantum systems with finite degrees of freedom and to \Cite{16A1} for those
with infinite degrees of freedom.

Let $A\in\cO(\cM)$ and $\rh\in\cS(\cH)$.
A measuring process $\M(\bx)=(\cK,\si,U,M)$ for $\cM$ with the POVM $\Pi$
is said to {\em measure $A$} in $\rho$
if $A(0)=_{\rho\otimes\si}M(\De t)$,
and {\em weakly measure} $A$ in $\rho$ iff
$
\Tr[\Pi(\De)E^{A}(\Ga)\rho]=\Tr[E^{A}(\De\cap\Ga)\rho]
$
for any $\De,\Ga\in\cB(\R)$.
A measuring process $\M(\bx)$ is said to 
{\em satisfy the Born statistical formula} (BSF)  for $A$
in $\rho$ iff it satisfies
$
\Pr\{\bx\in \De\|\rho\}=\Tr[E^{A}(\De)\rho]
$
for all $x\in\R$.
The following theorem characterizes measurements 
of an observable in a given state \cite{06QPC}.

\begin{theorem}\label{th:MOB}
Let $\M(\bx)=(\cK,\si,U,M)$ be a measuring process for $\cM$ with the POVM $\Pi(\De)$.
For any observable $A\in\cO(\cM)$ and any state $\rh\in\cS(\cH)$,
the following conditions are all equivalent.
\bitem
\item $\M(\bx)$ measures $A$ in $\rho$.
\item $\M(\bx)$ weakly measures $A$ in $\rho$.
\item $\M(\bx)$ satisfies the BSF for $A$ in any vector state
 $\psi\in\cC(A,\rho)$.
 \eitem
\end{theorem}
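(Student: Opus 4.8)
The plan is to establish the cycle of implications (i)$\THEN$(ii), (ii)$\THEN$(i), (i)$\THEN$(iii), (iii)$\THEN$(ii) by transporting the problem to the composite system on $\cH\otimes\cK$ and applying Theorem~\ref{th:ID} to the pair of observables $A(0)=A\otimes 1$ and $M(\De t)$ in the state $\rh\otimes\si$. The point is that, by definition, condition (i) is exactly the assertion $A(0)=_{\rh\otimes\si}M(\De t)$, so as soon as (i) is available all the equivalent reformulations of Theorem~\ref{th:ID} may be used for this pair. The single computational ingredient I would set up first is the identity
\[
\Tr[\Pi(\De)E^{A}(\Ga)\rh]
=\Tr_{\cH\otimes\cK}\!\left[E^{M(\De t)}(\De)\,E^{A(0)}(\Ga)\,(\rh\otimes\si)\right]
\]
valid for all $\De,\Ga\in\cB(\R)$, which follows by inserting $\Pi(\De)=\Tr_{\cK}[E^{M(\De t)}(\De)(1\otimes\si)]$, invoking the elementary partial-trace identity $\Tr_{\cH}[\Tr_{\cK}[Y]\,B]=\Tr_{\cH\otimes\cK}[Y(B\otimes 1)]$ with $B=E^{A}(\Ga)\rh$, and using $(E^{A}(\Ga)\rh\otimes 1)(1\otimes\si)=E^{A(0)}(\Ga)(\rh\otimes\si)$.

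For (i)$\THEN$(ii) I would use Theorem~\ref{th:ID}(v) for $(A(0),M(\De t),\rh\otimes\si)$, namely $E^{A(0)}(\De)(\rh\otimes\si)=E^{M(\De t)}(\De)(\rh\otimes\si)$ for every $\De\in\cB(\R)$, and apply it twice in the identity above: first to replace $E^{A(0)}(\Ga)(\rh\otimes\si)$ by $E^{M(\De t)}(\Ga)(\rh\otimes\si)$, so that the right-hand side collapses to $\Tr[E^{M(\De t)}(\De\cap\Ga)(\rh\otimes\si)]$, and then to rewrite this as $\Tr[E^{A(0)}(\De\cap\Ga)(\rh\otimes\si)]=\Tr[E^{A}(\De\cap\Ga)\rh]$; this is precisely weak measurability of $A$ in $\rh$. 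For the converse (ii)$\THEN$(i), I would restrict weak measurability to disjoint $\De,\Ga$, so that the right-hand side of the identity equals $\Tr[E^{A}(\emptyset)\rh]=0$, hence $\Tr[E^{M(\De t)}(\De)E^{A(0)}(\Ga)(\rh\otimes\si)]=0$; taking complex conjugates and relabelling yields $\Tr[E^{A(0)}(\De)E^{M(\De t)}(\Ga)(\rh\otimes\si)]=0$ for all disjoint $\De,\Ga$, which is exactly perfect correlation of $A(0)$ and $M(\De t)$ in $\rh\otimes\si$, i.e.\ Theorem~\ref{th:ID}(ii) for this pair, and therefore (i).

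For (i)$\THEN$(iii) I would fix a spectral decomposition $\si=\sum_{n}p_{n}\ketbra{\xi_{n}}$ and note that $\psi\otimes\xi_{n}\in\cC(A(0),\rh\otimes\si)$ for all $\psi\in\cC(A,\rh)$ and all $n$ with $p_{n}>0$, as one sees from $\cC(A(0),\rh\otimes\si)=\{A\otimes 1\}''\,\overline{\ran}(\rh\otimes\si)$, which contains $(B\otimes 1)(\zeta\otimes\xi_{n})$ for $B\in\{A\}''$ and $\zeta\in\overline{\ran}(\rh)$. Then Theorem~\ref{th:ID}(iv) for $(A(0),M(\De t),\rh\otimes\si)$ gives $\bracket{\psi\otimes\xi_{n},E^{M(\De t)}(\De)(\psi\otimes\xi_{n})}=\bracket{\psi\otimes\xi_{n},E^{A(0)}(\De)(\psi\otimes\xi_{n})}=\bracket{\psi,E^{A}(\De)\psi}$ for every unit $\psi\in\cC(A,\rh)$ and $p_{n}>0$; summing against the weights $p_{n}$ gives $\Pr\{\bx\in\De\|\psi\}=\Tr[E^{M(\De t)}(\De)(\ketbra{\psi}\otimes\si)]=\bracket{\psi,E^{A}(\De)\psi}$, which is the Born statistical formula for $A$ in $\psi$. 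For (iii)$\THEN$(ii), the hypothesis reads $\bracket{\psi,\Pi(\De)\psi}=\bracket{\psi,E^{A}(\De)\psi}$ for every unit $\psi\in\cC(A,\rh)$; polarising and using that $C:=C(A;\rh)\in\{A\}'$ commutes with $E^{A}(\De)$ shows $C\Pi(\De)C=E^{A}(\De)C$, and then, since $C\rh=\rh$ (so $\rh=C\rh C$) and $[E^{A}(\Ga),C]=0$, one has $C\Pi(\De)E^{A}(\Ga)C=C\Pi(\De)C\,E^{A}(\Ga)=E^{A}(\De)C\,E^{A}(\Ga)=E^{A}(\De\cap\Ga)C$, whence $\Tr[\Pi(\De)E^{A}(\Ga)\rh]=\Tr[E^{A}(\De\cap\Ga)C\rh]=\Tr[E^{A}(\De\cap\Ga)\rh]$.

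I expect the part needing the most care to be the passage between $\cH$ and $\cH\otimes\cK$: making the partial-trace/POVM identity fully rigorous when $A$ (and $M$) may be unbounded, and correctly identifying the cyclic subspace $\cC(A(0),\rh\otimes\si)$ so that the vector states of $\cC(A,\rh)$ appearing in condition (iii) genuinely arise from vectors in $\cC(A(0),\rh\otimes\si)$. Everything else reduces to routine manipulation of spectral projections and traces together with the equivalences already packaged in Theorem~\ref{th:ID}.
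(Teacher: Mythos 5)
Your proposal is correct. Note first that the paper itself supplies no proof of Theorem~\ref{th:MOB}: it is quoted from \Cite{06QPC}, just as Theorem~\ref{th:ID} is. What you have done is reconstruct a proof inside the present paper's framework, reducing everything to Theorem~\ref{th:ID} applied to the pair $A(0)=A\otimes 1$, $M(\De t)$ in the state $\rh\otimes\si$ on $\cH\otimes\cK$, together with the partial-trace identity $\Tr[\Pi(\De)E^{A}(\Ga)\rh]=\Tr[E^{M(\De t)}(\De)E^{A(0)}(\Ga)(\rh\otimes\si)]$. All the individual steps check out: the two applications of condition (v) of Theorem~\ref{th:ID} give (i)$\THEN$(ii); the conjugation-and-relabelling argument turns the vanishing of $\Tr[E^{M(\De t)}(\De)E^{A(0)}(\Ga)(\rh\otimes\si)]$ for disjoint $\De,\Ga$ into perfect correlation of $A(0)$ and $M(\De t)$, giving (ii)$\THEN$(i); the identification $\cC(A(0),\rh\otimes\si)=\cC(A,\rh)\otimes\overline{\ran}(\si)$ is exactly what legitimizes applying condition (iv) of Theorem~\ref{th:ID} to the vectors $\ps\otimes\xi_{n}$ and summing against $p_n$ in (i)$\THEN$(iii); and in (iii)$\THEN$(ii) the polarization step yielding $C\Pi(\De)C=E^{A}(\De)C$ with $C=C(A;\rh)\in\{A\}'$, followed by $\rh=C\rh C$, is sound. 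Since every manipulation involves only spectral projections, $\Pi(\De)$, and density operators, the unboundedness issue you flag causes no trouble. The one caveat is bibliographical rather than mathematical: your argument presupposes Theorem~\ref{th:ID}, which this paper also establishes only by citation to \Cite{06QPC}; granting that theorem, your derivation is complete, self-contained, and arguably more economical than redoing the perfect-correlation analysis from scratch on the composite system, which is how the cited reference proceeds.
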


In the conventional approach, a measuring process $\M(\bx)=(\cK,\si,U,M)$ 
with the POVM $\Pi$ is considered to be a measurement of an observable $A$ iff $\Pi=E^{A}$ \cite{84QC}, 
since in this case the probability distribution of $A$ predicted 
by the Born formula is reproduced by the probability distribution of $\Pi$ in any state.  
However, in this approach it is not clear whether a measurement of an observable
$A$ actually reproduces the value of the observable $A$ just before the measurement. 
The following theorem, which is an immediate consequence of Theorem \ref{th:MOB}, 
ensures that this is indeed the case (cf.~the remark after Theorem 8.2 in \Cite{06QPC}).

\begin{theorem}
Let $\M(\bx)=(\cK,\si,U,M)$ be a measuring process for $\cM$ with the POVM $\Pi$.
Then, $\M(\bx)$ measures $A\in\cO(\cM)$ in any $\rh\in\cS(\cH)$
if and only if\/ $\Pi=E^{A}$.
\end{theorem}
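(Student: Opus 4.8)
The plan is to derive everything from Theorem~\ref{th:MOB}, using only that a spectral measure is multiplicative and that a bounded self-adjoint operator on a complex Hilbert space is determined by its quadratic form.

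First I would treat the ``if'' direction. Assuming $\Pi=E^{A}$, for any state $\rh\in\cS(\cH)$ and any $\De,\Ga\in\cB(\R)$ the multiplicativity of the spectral measure $E^{A}$ gives
\[
\Tr[\Pi(\De)E^{A}(\Ga)\rh]=\Tr[E^{A}(\De)E^{A}(\Ga)\rh]=\Tr[E^{A}(\De\cap\Ga)\rh],
\]
so $\M(\bx)$ weakly measures $A$ in $\rh$; by the implication (ii)$\THEN$(i) of Theorem~\ref{th:MOB}, $\M(\bx)$ measures $A$ in $\rh$, and since $\rh$ was arbitrary this holds in every state.

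For the ``only if'' direction I would assume that $\M(\bx)$ measures $A$ in every $\rh\in\cS(\cH)$, fix a unit vector $\ps\in\cH$, and set $\rh=\ketbra{\ps}$. Since $\ps\in\cC(A,\rh)$, the implication (i)$\THEN$(iii) of Theorem~\ref{th:MOB} shows that $\M(\bx)$ satisfies the Born statistical formula for $A$ in the vector state $\ps$, i.e.\ $\bracket{\ps,\Pi(\De)\ps}=\Tr[\Pi(\De)\ketbra{\ps}]=\Tr[E^{A}(\De)\ketbra{\ps}]=\bracket{\ps,E^{A}(\De)\ps}$ for every $\De\in\cB(\R)$. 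As $\ps$ ranges over all unit vectors and both $\Pi(\De)$ and $E^{A}(\De)$ are bounded self-adjoint, polarization yields $\Pi(\De)=E^{A}(\De)$ for all $\De\in\cB(\R)$, that is, $\Pi=E^{A}$.

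I do not expect a genuine obstacle: the statement is essentially a corollary of Theorem~\ref{th:MOB}, as the text already notes. The only step needing minor care is the last one, passing from equality of quadratic forms on all vectors to equality of the operators $\Pi(\De)$ and $E^{A}(\De)$; this is the standard fact that a bounded self-adjoint (here positive) operator is determined by its associated quadratic form. One could equally well route the ``only if'' argument through the ``weakly measures'' condition (ii) with $\Ga=\R$, obtaining $\Tr[\Pi(\De)\rh]=\Tr[E^{A}(\De)\rh]$ for all density operators $\rh$ and then invoking the trace duality between trace-class and bounded operators; either route makes the conclusion immediate.
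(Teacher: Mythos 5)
Your proposal is correct and follows exactly the route the paper intends: the paper offers no separate proof, merely noting the theorem is an immediate consequence of Theorem~\ref{th:MOB}, and your argument (``if'' via multiplicativity of $E^{A}$ and the weak-measurement condition, ``only if'' via the BSF on vector states $\ps\in\cC(A,\rh)$ plus polarization) is precisely the natural way to spell that out. No gaps; the closing remark about routing the converse through the weak-measurement condition with $\Ga=\R$ and trace duality is an equally valid variant.
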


\section{Simultaneous measurability}
\label{se:8}

For any measuring process $\M(\bx)=(\cK,\si,U,M)$ for $\cM$
and a real-valued Borel function $f$, 
the measuring process $\M(f(\bx))$ with output variable 
$f(\bx)$ is defined by $\M(f(\bx))=(\cK,\si,U,f(M))$.  Observables $A,B$ are 
said to be {\em simultaneously measurable} in a state $\rh\in\cS(\cH)$
by $\M(\bx)$ iff there are Borel functions $f,g$ such that
$\M(f(\bx))$ and $\M(g(\bx))$ measure $A$ and $B$ in $\rh$,
respectively.
Observables $A,B$ are said to be {\em simultaneously measurable}
 in $\rh$ iff there is a measuring process $\M(\bx)$ such that 
$A$ and $B$ are simultaneously measurable in $\rh$ by $\M(\bx)$.

Simultaneous measurability and simultaneous determinateness are not
equivalent notions under the state-dependent formulation, as
the following theorem clarifies; the case where $\dim(\cH)<\infty$
was previously reported in Ref.~\cite{11QRM}, Theorem10.

\begin{theorem}\label{th:main}
\bitem
\item Two observables $A,B\in\cO(\cM)$ are 
simultaneously determinate in a state $\rh\in\cS(\cH)$ 
if and only if there exists a POVM $\Pi$ for $\cM$ on $\R^{2}$ satisfying
\beqa
\Pi(\De\times\R)&=&E^{A}(\De)\quad\mbox{on $\cC(A,B,\rho)$ for all $\De\in\R$},
\label{eq:i-1}\\
\Pi(\R\times\Ga)&=&E^{B}(\Ga)\quad\mbox{on $\cC(A,B,\rho)$ for all $\Ga\in\R$}.
\label{eq:i-2}
\eeqa
\item Two observables $A,B\in\cO(\cM)$ are
simultaneously measurable in a state $\rh\in\cS(\cH)$
if and only if there exists a POVM $\Pi$ for $\cM$ on $\R^{2}$ satisfying
\beqa
\Pi(\De\times\R)&=&E^{A}(\De)\quad\mbox{on $\cC(A,\rho)$ for all $\De\in\R$},
\label{eq:ii-1}\\
\Pi(\R\times\Ga)&=&E^{B}(\Ga)\quad\mbox{on $\cC(B,\rho)$ for all $\Ga\in\R$}.
\label{eq:ii-2}
\eeqa
\item Two observables $A,B\in\cO(\cM)$ are
simultaneously measurable in a state $\rh\in\cS(\cH)$
if they are simultaneously determinate in $\rh$. 
\eitem
\end{theorem}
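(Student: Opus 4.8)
The plan is to prove (i) and (ii) as two independent equivalences, each obtained by matching a structural theorem already established with one soft fact about POVMs, and then to read off (iii) from the inclusions $\cC(A,\rho),\cC(B,\rho)\subseteq\cC(A,B,\rho)$. I write $C(A,B;\rho),C(A;\rho),C(B;\rho)$ for the projections onto $\cC(A,B,\rho),\cC(A,\rho),\cC(B,\rho)$ and read each marginal identity in \eq{i-1}--\eq{ii-2} as an identity after compression to the relevant cyclic subspace; note that $E^{A}(\cdot)$ restricts to a projection-valued measure on $\cC(A,\rho)$ and on $\cC(A,B,\rho)$, and symmetrically for $B$, since those subspaces are invariant under $\{A\}''$ and $\{A,B\}''$.

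For (i), the forward direction: assume $A,B$ are simultaneously determinate in $\rho$ and set $G=\com(\tA,\tB)$. By Theorem \ref{th:JPD}(iii) we have $C(A,B;\rho)\le G$, and by Theorem \ref{th:com_real}(iii) we have $G\in\cP(\cZ(A,B))\subseteq\cM$ with $AG\commutes BG$; hence $E^{A}(\cdot)G$ and $E^{B}(\cdot)G$ are commuting projection-valued measures on $G\cH$ and admit a joint spectral measure $F$ on $\R^{2}$ with values in $\{A,B\}''G\subseteq\cM$. Then $\Pi(\De):=F(\De)+\delta_{(0,0)}(\De)(1-G)$ is a POVM for $\cM$ on $\R^{2}$, and since $G$ acts as the identity on $\cC(A,B,\rho)\subseteq G\cH$, the marginals $\Pi(\De\times\R)$ and $\Pi(\R\times\Ga)$ restrict there to $E^{A}(\De)$ and $E^{B}(\Ga)$, which is \eq{i-1}--\eq{i-2}. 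For the converse, given such a $\Pi$, compress by $C=C(A,B;\rho)$: the map $\De\mapsto C\Pi(\De)C$ is a POVM on $\cC(A,B,\rho)$ whose two marginals, by \eq{i-1}--\eq{i-2}, are the projection-valued measures $E^{A}(\cdot)C$ and $E^{B}(\cdot)C$. The soft fact I use is that a POVM on $\R^{2}$ with projection-valued marginals $P,Q$ satisfies $\Pi(\De\times\Ga)=P(\De)Q(\Ga)=Q(\Ga)P(\De)$; in particular $E^{A}(\cdot)C$ and $E^{B}(\cdot)C$ commute, i.e.\ $AC\commutes BC$, and by the equivalence of conditions (i) and (vi) in Theorem \ref{th:JPD} this is precisely simultaneous determinateness of $A,B$ in $\rho$.

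For (ii), the forward direction: if $\M(\bx)$ witnesses simultaneous measurability via $\M(f(\bx))$ and $\M(g(\bx))$ measuring $A$ and $B$ in $\rho$, let $\Pi_{0}$ be the POVM of $\M(\bx)$ and set $\Pi(\De\times\Ga):=\Pi_{0}(f^{-1}(\De)\cap g^{-1}(\Ga))$, the image POVM for $\cM$ on $\R^{2}$ whose marginals $\Pi_{0}\circ f^{-1}$ and $\Pi_{0}\circ g^{-1}$ are the POVMs of $\M(f(\bx))$ and $\M(g(\bx))$. Since those measure $A$ and $B$ in $\rho$, Theorem \ref{th:MOB} (condition (i) implies condition (iii)) gives $(\ps,\Pi_{0}(f^{-1}(\De))\ps)=\Tr[E^{A}(\De)\ps]$ for every $\ps\in\cC(A,\rho)$, which is \eq{ii-1}, and symmetrically \eq{ii-2}. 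Conversely, given $\Pi$ for $\cM$ on $\R^{2}$ satisfying \eq{ii-1}--\eq{ii-2}, invoke the universality of measuring processes recalled in Section \ref{se:mob} to obtain a measuring process for $\cM$ with $\R^{2}$-valued output realizing $\Pi$; its two coordinate sub-processes have POVMs $\Pi(\cdot\times\R)$ and $\Pi(\R\times\cdot)$, so by \eq{ii-1}--\eq{ii-2} they satisfy the Born statistical formula for $A$ in every $\ps\in\cC(A,\rho)$ and for $B$ in every $\ps\in\cC(B,\rho)$, whence by Theorem \ref{th:MOB} (condition (iii) implies condition (i)) they measure $A$ and $B$ in $\rho$; thus $A,B$ are simultaneously measurable in $\rho$.

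Part (iii) is then immediate: since $\{A\}''$ and $\{B\}''$ are contained in $\{A,B\}''$, we have $\cC(A,\rho),\cC(B,\rho)\subseteq\cC(A,B,\rho)$, so any POVM satisfying \eq{i-1}--\eq{i-2} a fortiori satisfies \eq{ii-1}--\eq{ii-2}; hence part (i) turns simultaneous determinateness into a POVM meeting the criterion of part (ii), i.e.\ simultaneous measurability. The step I expect to be the main obstacle is the converse half of (i) --- passing from the bare existence of a joint POVM with sharp marginals to genuine commutativity on the cyclic subspace --- which rests on the soft fact above; I would prove it by noting that $\Pi(\De\times\Ga)\le\Pi(\De\times\R)=P(\De)$ and $\Pi(\De\times\Ga)\le\Pi(\R\times\Ga)=Q(\Ga)$, so $\Pi(\De\times\Ga)=P(\De)\Pi(\De\times\Ga)P(\De)=Q(\Ga)\Pi(\De\times\Ga)Q(\Ga)$, whence $P(\De)Q(\Ga)P(\De)=Q(\Ga)P(\De)Q(\Ga)$ and therefore $[P(\De),Q(\Ga)]=0$. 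A secondary care point throughout is that every POVM constructed must take values in $\cM$, which is why $F$ is kept inside $\{A,B\}''G$ and the von Neumann algebra form of measuring-process universality is invoked.
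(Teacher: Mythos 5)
Your proposal follows essentially the same route as the paper's proof. For (i) the paper likewise takes $G=\com(\tA,\tB)$ and the joint spectral measure of $AG$ and $BG$ (your point mass at $(0,0)$ carried by $1-G$ is exactly the kernel part of that joint spectral measure), and for the converse it likewise compresses $\Pi$ by $C(A,B;\rho)$ and invokes the fact that a POVM on $\R^{2}$ with projection-valued marginals is the product of its commuting marginals, citing \Cite{Dav76}, Theorem 3.2.1; for (ii) the paper uses the same image POVM $\Pi_{0}(f^{-1}(\De)\cap g^{-1}(\Ga))$ and the same appeal to the universality of measuring processes --- realized there as a single process $\M(\bx)$ together with Borel functions $f,g$ of one meter observable, which is how your ``$\R^{2}$-valued output'' must be formalized, since measuring processes in this paper have a single real output variable; and (iii) is read off from the same subspace inclusions.

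The one step that does not work as written is your sketch of the ``soft fact.'' From $T:=\Pi(\De\times\Ga)\le P(\De)$ and $T\le Q(\Ga)$ you correctly get $T=P(\De)TP(\De)=Q(\Ga)TQ(\Ga)$ (indeed $T=TP(\De)=P(\De)T=TQ(\Ga)=Q(\Ga)T$), but the inference ``whence $P(\De)Q(\Ga)P(\De)=Q(\Ga)P(\De)Q(\Ga)$'' is unjustified: these relations use only that $T$ is dominated by both marginals and never invoke additivity of $\Pi$, so they cannot force commutativity. The standard argument does use additivity: $P(\De)-T=\Pi(\De\times\Ga^{c})\le\Pi(\R\times\Ga^{c})=1-Q(\Ga)$, hence $P(\De)-T=(P(\De)-T)(1-Q(\Ga))=P(\De)-P(\De)Q(\Ga)$ (using $TQ(\Ga)=T$), so $T=P(\De)Q(\Ga)$, and self-adjointness of $T$ then gives $P(\De)Q(\Ga)=Q(\Ga)P(\De)$; alternatively simply cite \Cite{Dav76}, Theorem 3.2.1, as the paper does. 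A smaller point: in (ii) (only if), Theorem \ref{th:MOB} directly yields only the equality of expectations $(\ps,\Pi(\De\times\R)\ps)=(\ps,E^{A}(\De)\ps)$ for $\ps\in\cC(A,\rho)$; to read this as the operator identity (\ref{eq:ii-1}) you should add the easy observation that a positive contraction whose expectations agree with those of a projection on a subspace invariant under that projection must agree with it as an operator on that subspace (split $\ps$ into the parts inside and orthogonal to the range of $E^{A}(\De)C(A;\rho)$ and use $0\le T\le 1$).
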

\begin{proof}
Let $\cC=\cC(A,B,\rh)$ and $C=C(A,B;\rh)$.

(i) (only if part):
Let $G=\com(\tA,\tB)$. Then, $G\in\cM$ and 
$AG\commutes BG$.
Let $\Pi$ be the joint spectral measure of $AG$ and $BG$, i.e.,
$\Pi(\De\times\Ga)=E^{AG}(\De)E^{BG}(\Ga)$ for all $\De,\Ga\in\cB(\R)$.
Then, $\Pi$ is a POVM for $\cM$ on $\R^{2}$.
Suppose that $A$ and $B$ are simultaneously determinate in a state $\rh$.
Then, $\ran(\rh)\subseteq\ran(\com(\tA,\tB))$.
By the minimality of $C(A,B,\rho)$ among $(A,B)$-invariant subspaces,
we have $C\le G$ and $AG,BG\commutes C$.  Thus, we have 
$\Pi(\De\times\R)C=E^{AG}(\De)C=E^{AC}(\De)C=E^{A}(\De)C$
and similarly $\Pi(\R\times\Ga)C=E^{B}(\Ga)C$ for all $\De,\Ga\in\cB(\R)$.  
Thus, $\Pi$ satisfies Eqs.~(\ref{eq:i-1}) and (\ref{eq:i-2}).

(i) (if part): 
Let  $\Pi$ be a POVM for $\cM$ on $\R^2$ satisfying (\ref{eq:i-1}) and (\ref{eq:i-2}).
Let $\Pi'$ be a positive operator valued measure for $\cB(\cH)$ on $\R^{2}$ defined by
$\Pi'(\De\times\Ga)=C\Pi(\De\times\Ga)C$ for all $\De,\Ga\in\cB(\R)$. 
Let $\Pi''$ be a POVM for $\cB(\cC)$ on $\R^{2}$ obtained by restricting 
$\Pi'$ to $\cC$.
Let $\De,\Ga\in\cB(\R)$.
By the definition of $C$, we have 
$E^{A}(\De)C=CE^{A}(\De)=CE^{A}(\De)C$ 
and $E^{A}(\De)C$ is a projection.
Similarly,  $E^{B}(\Ga)C=CE^{B}(\Ga)=CE^{B}(\Ga)C$ 
and $E^{B}(\Ga)C$ is a projection.
Thus, we have $\Pi''(\De\times\R)=C\Pi(\De\times\R)C=E^{A}(\De)C$,
and similarly  $\Pi''(\R\times\Ga)=C\Pi(\R\times\Ga)C=E^{B}(\Ga)C$.
Since $\De$ and $\Ga$ were arbitrary,
the marginals of $\Pi''$ are projection-valued.
By a well-know theorem (e.g.,  \Cite{Dav76}, Theorem 3.2.1),
the marginals commute and $\Pi''$ is the product of their marginals.
Thus, we have $AC\commutes BC$, and hence 
by Theorem \ref{th:JPD}, $A$ and $B$ are simultaneously determinate.

(ii) (only if part):
Suppose that $A,B\in\cO(\cM)$ are
simultaneously measurable in $\rh\in\cS(\cH)$.
Then, we have a measuring process $\M(\bx)=(\cK,\si,U,M)$ for $\cM$
and real-valued Borel functions
$f,g$ such that $\M(f(\bx))$ measures $A$ in $\rh$ and 
$\M(g(\bx))$ measures $B$ in $\rh$.
Let $\Pi_0$ be the POVM of  $\M(\bx)$.
Let $\Pi$ be a POVM on $\R^2$ such that $\Pi(\De\times\Ga)=\Pi_0(f^{-1}(\De)\cap g^{-1}(\Ga))$.
Then, it is easy to see that $\Pi$ satisfies Eqs.~(\ref{eq:ii-1}) and (\ref{eq:ii-2}).

(ii) (if part)
Let $\Pi$ be a  POVM  for $\cH$ on $\R^{2}$ satisfying Eqs.~(\ref{eq:ii-1}) and (\ref{eq:ii-2}).
Then, by the remark after condition (M3) in Section \ref{se:mob} 
there exists a measuring process 
$\M(\bx)=(\cK,\si,U,M)$ for $\cM$ and real-valued Borel functions
$f,g$ such that 
\beqa
\Pi(\De\times\Ga)=\Tr_{\cK}
[U^{\da}(I\otimes E^{f(M)}(\De)E^{g(M)}(\Ga))U(I\otimes \si)].
\eeqa
Then, we have 
\beqa
\Pi(\De\times\R)
=\Tr_{\cK}
[U^{\da}(I\otimes E^{f(M)}(\De))U(I\otimes \si)],
\eeqa
so that from \Eq{ii-1} we have $\M(f(\bx))$ measures $A$ in $\rh$.
Similarly, we can show that $\M(g(\bx))$ measures $B$ in $\rh$.

Assertion (iii) follows from (i) and (ii).
\end{proof}

Discussions on physical significance of the state-dependent formulation 
of simultaneous measurability have been given in \Cite{11QRM} for the finite dimensional case.
Further discussions on the state-dependent formulation of quantum measurement theory
will appear elsewhere.

\section{Conclusion}
Quantum set theory originated from the method of forcing introduced by Cohen \cite{Coh63,Coh66} 
for the independence proof of the continuum hypothesis and from quantum logic introduced 
by Birkhoff and von Neumann \cite{BvN36} for logical 
axiomatization of quantum mechanics.
After Cohen's work, Scott and Solovay \cite{SS67} reformulated the forcing 
method by Boolean-valued models of set theory \cite{Bel85}, which have become
a central method in the field of axiomatic set theory.
In 1978 Takeuti \cite{Ta78} started Boolean-valued analysis, which provides
systematic applications of logical meta-theorems for Boolean-valued models
to {\em not meta-}mathematical problems mainly in analysis.  
Among others,  Boolean-valued analysis made a
great successes in operator algebras \cite{Ta83a,Ta83b,83BH} and especially 
in solving a long-standing open problem in the structure theory of type I algebras 
applying the forcing method for cardinal collapsing  \cite{83BT,84CT,85NC}.

As a successor of those attempts, quantum set theory, 
a set theory based on the Birkhoff-von Neumann quantum logic,
was introduced by Takeuti \cite{Ta81}, who established the one-to-one correspondence 
between reals in the model (quantum reals)  and quantum observables.
Quantum set theory was recently developed by the present author \cite{07TPQ,14QST}
to obtain the transfer principle to determine quantum truth values of theorems 
of the ZFC set theory,  and to clarify the operational meaning 
of equality between quantum reals, which extends the probabilistic 
interpretation of quantum theory,

\color{black}
To formulate the standard probabilistic interpretation of quantum theory,
we have introduced the language of observational propositions with rules (R1) and (R2) 
for well-formed formulas constructed from atomic formulas of the form $X\leo x$, 
rules  (T1), (T2), and (T3) for projection-valued truth 
value assignment, and rule (P1) for probability assignment.
Then, the standard probabilistic interpretation gives the statistical predictions
for standard observational propositions formulated by (W1), 
which concern only a commuting family of observables.
The Born statistical formula is naturally derived in this way.
We have extended the standard interpretation by introducing the notion of 
simultaneous determinateness and atomic formulas of the form $X=Y$ for equality.  
To extended observational propositions 
formed through rules (R1), $\ldots$, (R4),  the projection-valued 
truth values are assigned by rule (T1), $\ldots$, (T4), and the probabilities
are assigned by rule (P1).  
Then, we can naturally extend the standard interpretation to a general
and state-dependent interpretation for observational propositions including
the relations of simultaneous determinateness and equality.
Quantum set theory ensures that any contextually well-formed formula
provable in ZFC has the probability assigned to be 1.
This extends the classical inference for quantum theoretical predictions 
from commuting observables to simultaneously determinate observables.
We apply this new interpretation to construct a theory of measurement of 
observables in the state-dependent approach, 
to which the standard interpretation cannot apply.
We have reported only basic formulations here, but further development 
in this approach will be reported elsewhere.

\section*{Acknowledgments}
The author thanks Minoru Koga for useful comments.
The author thanks the anonymous referee for their comments to improve the earlier version.
This work was supported by JSPS KAKENHI, No.~26247016 and No.~15K13456, 
and the John Templeton Foundation, ID \#35771.

\providecommand{\bysame}{\leavevmode\hbox to3em{\hrulefill}\thinspace}

\end{document}